\def\focs{0}
\def\comments{0}
\newcommand{\full}[1]{\ifnum\focs=0 {#1} \fi}
\newcommand{\focsfull}[2]{\ifnum\focs=1{#1}\else{#2}\fi}
\newcommand{\eps}{\varepsilon}
\renewcommand{\epsilon}{\varepsilon}
\newcommand{\bit}[1]{\{0,1\}^{{#1}}}
\newcommand{\closedef}{{\hfill $\Box$}}
\newcommand{\from}{\leftarrow}
\newtheorem{theorem}{Theorem}[section]
\newtheorem{lemma}[theorem]{Lemma}
\newtheorem{prop}[theorem]{Proposition}
\newtheorem{cor}[theorem]{Corollary}
\newtheorem{observation}[theorem]{Observation}
\newtheorem{claim}[theorem]{Claim}
\newtheorem{defn}[theorem]{Definition}
\newtheorem{remark}{Remark}
\newcommand{\comment}[1]{{\makebox[0.5in][r]{$\rhd$} \parbox[t]{0.63\textwidth}{\footnotesize {#1}}}}
\newcommand{\scomment}[1]{\makebox[0.5in][r]{$\rhd$} {\footnotesize {#1}}}
\newcommand{\wcomment}[1]{{\makebox[0.5in][r]{$\rhd$} \parbox[t]{0.83\textwidth}{\footnotesize {#1}}}}
\newcommand{\boldcomment}[1]{}
\newcommand{\boldcomment}[1]{\bigskip \hrule\smallskip{\bf {
      {#1}}}\vspace*{1ex}}
\renewcommand{\le}{\leqslant}
\renewcommand{\leq}{\leqslant}
\renewcommand{\ge}{\geqslant}
\renewcommand{\geq}{\geqslant}
\newcommand{\bsc}{\mathsf{BSC}}
\newcommand{\adv}{\mathsf{ADV}}
\newcommand{\F}{{\mathbb F}}
\newcommand{\wt}{{\rm wt}}
\newcommand{\RS}{\mathsf{RS}}
\newcommand{\knr}{{\sf KNR}}
\newcommand{\twise}{\mathsf{POLY}}
\newcommand{\SC}{{\sf SC}}
\newcommand{\Samp}{\mathsf{Samp}}
\newcommand{\BSC}{\mathsf{BSC}}
\newcommand{\Dec}{\mathsf{Dec}}
\newcommand{\REC}{{\mathsf{REC}}}
\newcommand{\nis}{\mathsf{Nis}}
\newcommand{\GEN}{\mathsf{Gen}}
\newcommand{\PRC}{\mathsf{PRC}}
\newcommand{\BPG}{\ensuremath\text{\sf BPG}^{\mathsf{Nisan}}}
\newcommand{\LSC}{\ensuremath\text{\sf LSC}}
\newcommand{\LG}{\lsg}
\newcommand\lsg{\mathsf{BPPRG}} 
\newcommand{\polyprg}{\mathsf{PolyPRG}}
\newcommand{\WC}{\ensuremath\text{\sf WC}}
\newcommand{\xor}{\oplus}
\newcommand{\A}{{\ensuremath{\mathcal{A}}}}
\newcommand{\B}{{\ensuremath{\mathcal{B}}}}
\newcommand{\close}[2]{\ {\stackrel{{#2}}{\approx}}_{#1} \ }
\newcommand{\ctrlset}{V}
\newcommand{\T}{{\mathsf{T}}}
\newcommand{\spcbound}{{S}}
\newcommand{\channel}{\mathsf{W}}
\newcommand{\E}{\mathcal{E}}
\newcommand{\ctrl}{\mathsf{ctrl}}
\newcommand{\dat}{\mathsf{data}}
\newcommand{\hamdist}{\mathrm{dist}}
\newcommand{\poly}{\operatorname{poly}}
\newcommand{\Enc}{\ensuremath{\mathsf{Enc}}}
\newcommand{\expec}{{\mathbb E}}
\newcommand{\defeq}{\overset{\text{def}}{=}}
\newcommand{\mynote}[2]{\marginpar{\tiny {\bf {#2}:} \sf {#1}}}
\newcommand{\vnote}[1]{\mynote{#1}{VG}}
\newcommand{\anote}[1]{\mynote{#1}{ADS}}
\newcommand{\vnote}[1]{}
\newcommand{\anote}[1]{}
\newcommand{\mypar}[1]{\medskip\noindent\textbf{#1.}}
\newcommand{\myabspar}[1]{\emph{#1:}}
\title{Optimal Rate Code Constructions for Computationally Simple
  Channels
\thanks{An extended abstract appeared in the Proceedings of the
  \emph{51st Annual IEEE Symposium on Foundations of Computer Science}
  (FOCS), October 2010.}}
\author{Venkatesan Guruswami\thanks{Computer Science Department, Carnegie Mellon University. 
V. G. was supported by a Packard Fellowship
 and NSF grants CCF-0953155 and CCF-0963975. Email: {\tt
   guruswami@cmu.edu}.} \and Adam Smith\thanks{Computer Science \&
 Engineering Department, Pennsylvania State University. A. S. was supported by NSF grants CCF-0747294 and CCF-0729171. Email: {\tt asmith@psu.edu}.}}
\begin{document}

\maketitle

\begin{abstract}
  We consider coding schemes for \emph{computationally bounded}
  channels, which can introduce an arbitrary set of errors as long as
  (a) the fraction of errors is bounded with high probability by a
  parameter $p$ and (b) the process which adds the errors can be
  described by a sufficiently ``simple'' circuit.  Codes for such
  channel models are attractive since, like codes for standard
  adversarial errors, they can handle channels whose true behavior is
  \emph{unknown} or \emph{varying} over time.

  For two classes of channels, we provide explicit, efficiently
  encodable/decodable codes of optimal rate where only
  \emph{in}efficiently decodable codes were previously known. In each
  case, we provide one encoder/decoder that works for \emph{every}
  channel in the class.  \ifnum\focs=0 The encoders are randomized,
  and probabilities are taken over the (local, unknown to the decoder)
  coins of the encoder and those of the channel.  \fi

  \myabspar{Unique decoding for additive errors} We give the first
  construction of a polynomial-time encodable/decodable code for
  \emph{additive} (a.k.a. \emph{oblivious}) channels that achieve the
  Shannon capacity $1-H(p)$.  \ifnum\focs=0 These are channels which
  add an arbitrary error vector $e\in\bit{N}$ of weight at most $pN$
  to the transmitted word; the vector $e$ can depend on the code but
  not on the particular transmitted word.  \fi \full{Such channels
    capture binary symmetric errors and burst errors as special
    cases.}


\myabspar{List-decoding for polynomial-time channels} For every
constant $c>0$, we give a Monte Carlo construction of an code with
optimal rate (arbitrarily close to $1-H(p)$) that efficiently recovers a short list containing
the correct message with high probability for channels describable by
circuits of size at most $N^c$. We are not aware of any channel models considered in the
information theory literature, other than purely adversarial channels,
which require more than linear-size circuits to implement. We justify
the relaxation to list-decoding with an impossibility result showing
that, in a large range of parameters ($p>1/4$), codes that are uniquely decodable
for a modest class of channels (online, memoryless, nonuniform
channels) cannot have positive rate. 

\end{abstract}



\section{Introduction}
\label{sec:intro}

The theory of error-correcting codes has had two divergent schools of thought, dating back to its origins, based on the underlying model of the noisy channel.  Shannon's theory \cite{shannon} modeled the channel as a stochastic process with a known probability law. Hamming's work \cite{hamming} suggested a worst-case/adversarial model, where the channel is subject only to a limit on the number of errors it may cause. 

These two approaches share several common tools, however in terms of quantitative results, the classical results in the harsher Hamming model are much weaker. 
%
For instance, for the binary symmetric channel 
which flips each transmitted
bit independently with probability $p < 1/2$, the optimal rate of
reliable transmission is known to be the Shannon capacity $1-H(p)$,
where $H(\cdot)$ is the binary entropy function \cite{shannon}. Concatenated codes (Forney~\cite{forney}) and polar codes
(Arikan~\cite{Arikan09}) can transmit at rates arbitrarily close to this capacity and are
efficiently 
decodable.
In contrast, for
adversarial channels that can corrupt up to a fraction $p$ of symbols
in an {\em arbitrary} manner, the optimal rate is unknown in general,
though it is known for all $p \in (0,\frac12)$ that the rate has to be much smaller than the Shannon
capacity. In fact, for $p \in [\frac14,\frac12)$, the achievable rate over an
adversarial channel is asymptotically zero, while the Shannon capacity
$1-H(p)$ remains positive.

%



Codes that tolerate worst-case
errors are attractive because they assume nothing about the distribution of the errors introduced by the channel, only a bound on the number of errors. Thus, they 
can be used to transmit information reliably over a large range of channels whose true
behavior is unknown or varies over time. In contrast, codes tailored to a specific channel model tend to fail when the model changes. For example,
concatenated codes with a high rate outer code, which can transmit efficiently and reliably at the
Shannon capacity with i.i.d.~errors, fail miserably in the presence of
\emph{burst} errors that occur in long runs.

In this paper, we consider several intermediate models of uncertain 
channels, as a meaningful and well-motivated bridge between the Shannon and Hamming models.
Specifically, we consider \emph{computationally bounded} channels,
which can introduce an arbitrary set of errors as long as (a) the total
fraction of errors is bounded by $p$ with high probability and
(b) the process which adds the errors can be described by a
sufficiently ``simple'' circuit. The idea and motivation behind these models is that
natural processes may be mercurial, but are not computationally intensive. These models are powerful enough to capture natural settings
like \emph{i.i.d.} and burst errors, but weak enough to allow
efficient communication arbitrarily close to the Shannon
capacity $1-H(p)$. The models we study, or close variants, have been considered
previously---see Section~\ref{sec:related} for a discussion of related
work. The computational perspective we espouse is inspired by the
works of Lipton~\cite{lipton} and Micali \emph{et al.}~\cite{MPSW}.

For two classes of channels, we provide efficiently
encodable and decodable codes of optimal rate (arbitrarily close to $1-H(p)$) where only
\emph{in}efficiently decodable codes were previously
known. In each case, we provide one encoder/decoder that works for
\emph{every} channel in the class. In particular, our results apply
even when the channel's  behavior depends on the code.


\mypar{Structure of this paper}
We first describe the models and our results briefly
(Section~\ref{sec:briefresults}), and outline our main technical
contributions (Section~\ref{sec:techniques}). In Section
\ref{sec:related}, we describe 
related lines of work 
aimed at handling (partly)
adversarial errors with rates near Shannon capacity.  Our results are
stated formally in Section \ref{sec:intro-results}.
Section~\ref{sec:ld-implies-avc} describes our list-decoding based code
constructions for recovery from additive
errors. As the needed list-decodable codes are not explicitly known, this only gives an existential result. Section~\ref{sec:overview} describes the approach behind our efficient
constructions at a high-level. The remainder of the paper describes
and analyzes the constructions in detail, in order of increasing
channel strength: additive errors
(Section~\ref{sec:efficient-oc-codes}), 
and time-bounded errors
(Section~\ref{sec:time-bounded}). 
The appendices contain extra details
on the building blocks in our constructions (\ref{sec:ingredients}), results for the ``average'' error criterion
(\ref{app:codes-avg-error}) and our impossibility result showing the necessity of list decoding for error fractions exceeding $1/4$
(\ref{sec:imposs}), respectively (see Section \ref{sec:intro-results} for formal statements of these claims).

\subsection{Our results}\label{sec:briefresults}

The encoders we construct are \emph{stochastic} (that is, randomized). Probabilities
are taken over the (local, unknown to the decoder) coins of the
encoder and the choices of the channel; messages may be chosen
adversarially and known to the channel.  Our results
do not assume any setup or shared randomness between the encoder and decoder.

\mypar{Unique decoding for additive channels} 
We give the first explicit construction of stochastic codes with
polynomial-time encoding/decoding algorithms that approach the
Shannon capacity $1-H(p)$ for \emph{additive}
(a.k.a. \emph{oblivious}) channels. These are channels which add an
arbitrary error vector $e\in\bit{N}$ of Hamming weight at most $pN$ to the
transmitted codeword (of length $N$). The error
vector may depend on the code and the message but, crucially, not on the encoder's
local random coins.
Additive errors capture binary symmetric errors as well
as certain models of correlated errors, like burst errors.
For a deterministic encoder, the
additive error model is equivalent to the usual adversarial error
model. A randomized encoder is thus necessary to achieve the Shannon capacity.

We also provide a novel, simple proof that (inefficient)
capacity-achieving codes \emph{exist} for additive channels. We do so
by combining linear list-decodable codes with rate approaching $1-H(p)$ (known to exist, but not known
to be efficiently decodable) with a
special type of authentication scheme. Previous existential proofs
relied on complex random coding arguments~\cite{CN88a,langberg08}; see
the discussion of related work below.


\mypar{Necessity of list decoding for richer channel models} The additive errors
model postulates that the error vector has to be picked obliviously,
before seeing the codeword.  To model more complex processes, we will allow the channel to see the codeword and decide upon the error as a function of the codeword. We will stipulate a computational restriction on how the channel might decide what error pattern to cause. The simplest restriction (beyond the additive/oblivious case) is a ``memoryless" channel that decides the action on the $i$'th bit based only on that bit (and perhaps some internal state that is oblivious to the codeword). 

%

  First, we show that even against such memoryless channels,
  reliable \emph{unique} decoding with positive
  rate is impossible when $p>1/4$. The idea is that even a
  memoryless adversary can make the transmitted codeword difficult to
  distinguish from a different, random codeword. The proof
  relies on the requirement that a single code must work for all
  channels, since the ``hard'' channel depends on the code.

  Thus, to communicate at a rate close to $1-H(p)$ for all $p$, we
  consider the relaxation to \emph{list-decoding}: the decoder is
  allowed to output a small list of messages, one of which is
  correct. List-decodable codes with rate approaching $1-H(p)$ are
  known to exist even for adversarial errors
  \cite{ZP,elias91}. However, constructing efficient (\emph{i.e.},
  polynomial-time encodable and decodable) binary codes for list
  decoding with near-optimal rate is a major open
  problem.\footnote{Over large alphabets, explicit optimal rate
    list-decodable codes {\em are} known.}  Therefore, our goal is to
  exploit the computational restrictions on the channel to get
  efficiently list-decodable codes.

\mypar{Computationally bounded channels}
  We consider channels whose
  behavior on $N$-bit inputs is described by a circuit of size
  $\T(N)$
  (for example $\T(N) = O(N^2)$). We sometimes call this the
  \emph{time-bounded} model, and refer to $\T(\cdot)$ as a time bound. We do not know of any channel models
  considered in the information theory literature, other than purely
  adversarial channels, which require more than linear time to
  implement. 

We also discuss \emph{(online) space-bounded} channels; these
  channels make a single pass over the transmitted codeword, deciding
  which locations to corrupt as they go, and are
  limited to storing at most $\spcbound(N)$
  bits (that is, they can be describe by one-pass branching programs
  of width at most $2^{\spcbound(N)}$). Logarithmic space channels, in
   particular, can be realized by polynomial-size circuits.


  \mypar{List decoding for polynomial time channels} 
  Our main contribution for time-bounded channels is a construction of polynomial-time encodable and
  list-decodable codes
  that approach the optimal rate for channels whose time bound is a polynomial in 
    the block length $N$. Specifically, we give an efficiently computable encoding function, $\Enc$, that stochastically encodes the message $m$ into $\Enc(m;r)$ where $r$ is {\em private} randomness chosen at the encoder, such that 
    for every message $m$ and time $N^c$-bounded channel $\channel$, the decoder takes $\channel(\Enc(m;r))$
  as input and returns a small list of messages that, with high
  probability over $r$ and the coins of the channel, contains the real
  message $m$. The size of the list is polynomial in $1/\eps$, where
  $N(1-H(p)-\eps)$ is the length of the transmitted messages. We stress that the decoder does {\em not} know the choice of random bits $r$ made at the encoder.

  The construction of our encoding function $\Enc(\cdot,\cdot)$ is
  {\em Monte Carlo} --- we give a randomized construction of $\Enc$
  that (together with the decoder) offers the claimed properties (for
  all messages and $N^c$-bounded channels) with high probability. The
  sampling step is polynomial time, and the resulting function
  $\Enc(m;r)$ can be computed from $m,r$ in deterministic polynomial
  time. However, we do not know how to efficiently check and certify
  that a particular random choice is guaranteed to have the claimed
  properties.  Obtaining fully explicit constructions remains an
  intriguing open
  problem. 
  In the case of polynomial time bounded channels, it seems
  unlikely that solve the problem without additional complexity
  assumptions (such as the existence of certain pseudorandom
  generators). In the case of  online logspace-bounded channel, it may
  be possible to obtain fully explicit constructions without
  complexity assumptions (using, for example, Nisan's pseudorandom generators for logspace). We will elaborate on this aspect
  in Section \ref{sec:openq}.

One technicality is that the decoder need {\em not} return all words within a
  distance $pN$ of the received word (as is the case for the standard
  ``combinatorial'' notion of list decoding),
  but it {\em must} return the correct message as one of the candidates with
  high probability.  This notion of
  list-decoding is natural for stochastic codes in communication applications. In fact, it is
  exactly the notion that is needed in constructions which ``sieve''
  the list, such as \cite{Gur-sidechannel,MPSW}, and one application of our results is a
  \emph{uniquely} decodable code for the public-key model (assuming a
  specific polynomial time bound), strengthening results of
  \cite{MPSW}. See the discussion of related work in
  Section~\ref{sec:related} for a more precise statement.


%

\medskip For both the additive and time-bounded models, our constructions require new methods for applying tools from cryptography and
  derandomization to coding-theoretic problems. We give
  a brief high-level discussion of these techniques next. An expanded overview 
  the approach behind our code construction appears in Section~\ref{sec:overview}.

\subsection{Techniques}
\label{sec:techniques}

\mypar{Control/payload construction} In our constructions, we develop
several new techniques. The first is
a novel ``reduction'' from the standard coding setting with no setup
to the setting of \emph{shared secret randomness}. In
models in which errors are distributed evenly, such a reduction is
relatively simple~\cite{ahlswede}; however, this reduction fails
against adversarial errors. Instead, we show how to \emph{hide} the
secret randomness (the \emph{control information}) inside the main
codeword (the \emph{payload}) in such a way that the decoder can learn
the control information but (a) the control information remains hidden to a
bounded channel and (b) its encoding is robust to a certain, weaker class of
errors. We feel this technique should be useful in other settings of
bounded adversarial behavior.

Our reduction can also be viewed as a novel way of bootstrapping
from ``small'' codes, which can be decoded by brute force, to
``large'' codes, which can be decoded efficiently. The standard way to
do this is via concatenation; unfortunately, concatenation does not
work well even against mildly unpredictable models, such as the additive
error model.

\mypar{Pseudorandomness} Second, our results further develop a
surprising connection between coding and
pseudorandomness. \emph{Hiding} the ``control information'' from the
channel requires us to make different settings of the control information
\emph{indistinguishable} from the channel's point of view. Thus, our
proofs apply techniques from cryptography together with constructions
of pseudorandom objects (generators, permutations, and samplers) from
derandomization. Typically, the ``tests'' that must be fooled are
compositions of the channel (which we assume has low complexity) with
some subroutine of the decoder (which we \emph{design} to have low
complexity). The connection to pseudorandomness appeared in a simpler
form in the previous work on bounded channels~\cite{lipton,GLYY,MPSW};
our use of this connection is significantly more delicate.

The necessary pseudorandom permutations and samplers can be explicitly constructed, and the construction of the needed complexity-theoretic pseudorandom generators is where we uses randomness in our construction. For the case of additive errors, information-theoretic objects that we can construct deterministically efficiently (such as $t$-wise independent strings) suffice and we get a fully explicit construction. 

\section{Background and Related Previous Work}
\label{sec:related}
There are several lines of work aimed at handling adversarial, or partly adversarial, errors with rates near the
Shannon capacity.
We survey them
briefly here and highlight the relationship to our results.

\mypar{List decoding} List decoding was
introduced in the late 1950s~\cite{elias,wozencraft} and has witnessed
a lot of recent algorithmic work (cf. the survey
\cite{Gur-nowsurvey}).  Under list decoding, the decoder outputs
a small list of messages that must include the correct message. Random
coding arguments demonstrate that there exist binary codes of rate
$1-H(p)-\eps$ which can tolerate $pN$ adversarial errors if the decoder is allowed to output a list of size
$O(1/\eps)$~\cite{elias91,ZP,GHSZ}. 
The explicit construction of binary list-decodable codes with rate
close to $1-H(p)$, however, remains a major open question. We provide
such codes for the special case of corruptions introduced by space- or time-bounded channels. %

\mypar{Adding Setup---Shared Randomness} Another relaxation 
is to allow randomized coding strategies where
the sender and receiver share ``secret'' randomness, \emph{hidden from
  the channel}, which is used to pick a particular, deterministic code at random from a
family of codes. Such randomized strategies were called \emph{private codes} in
\cite{langberg04}. Using this secret shared randomness, one can
transmit at rates approaching $1-H(p)$ against adversarial errors (for example, by randomly permuting
the symbols and adding a random
offset~\cite{lipton,langberg04}). Using explicit codes achieving
capacity on the $\bsc_p$~\cite{forney}, one can even get such
randomized codes of rate approaching $1-H(p)$ explicitly (although
getting an explicit construction with $o(n)$ randomness remains an
open problem~\cite{smith07}). A related notion of setup is the
\emph{public key} model of Micali \emph{et al.}~\cite{MPSW}, in which
the sender generates a public key that is known to the
receiver and possibly to the channel. This model only makes sense for
computationally bounded channels, discussed below.

\ifnum\focs=0
Our constructions are the first (for all three models) which achieve
rate $1-H(p)$ with efficient decoding and no setup assumptions.
\fi

\mypar{AVCs: Oblivious, nonuniform errors}
A different
approach to modeling uncertain channels is embodied by the rich
literature on \emph{arbitrarily varying channels} (AVCs), surveyed in \cite{LN-avc-survey}. Despite being extensively investigated in the information theory literature, AVCs have not received much algorithmic attention.

An AVC is specified by a finite state space $\mathcal{S}$ and a family
of memoryless channels $\{W_s: s\in\mathcal{S}\}$. The channel's
behavior is governed by its state, which is allowed to vary
arbitrarily.  The AVC's behavior in a particular execution is
specified by a vector $\vec s=(s_1,...,s_N) \in \mathcal{S}^N$: the
channel applies the operation $W_{s_i}$ to the $i$th bit of the
codeword.  A code for the AVC is required to transmit reliably with
high probability for every sequence $\vec s$, possibly subject
to some state constraint. Thus AVCs model uncertainty via the
\emph{nonuniform} choice of the state vector
$\vec s\in\mathcal{S}^N$. However --- and this is the one of the key
differences that makes the bounded space model more powerful --- the
choice of vector $\vec s$ in an AVC is {\em oblivious} to the
codeword; that is, the
channel cannot look at the codeword to decide the state sequence.

%
%
%
%

The additive errors channel we consider {\em is} captured
by the AVC framework. Indeed, consider the simple AVC where
$\mathcal{S} = \{0,1\}$ and when in state $s$, the channel adds $s$
mod $2$ to the input bit. With the state constraint $\sum_{i=1}^N s_i
\le p N$ on the state sequence $(s_1,s_2,\dots,s_N)$ of the AVC, this
models  additive errors, where an {\em arbitrary} error vector
$e$ with at most $p$ fraction of $1$'s is added to the codeword by the
channel, but $e$ is chosen {\em obliviously} of the codeword.

Csisz\'{a}r and Narayan determined the capacity of AVCs with state
constraints 
\cite{CN88,CN89}. In particular, for the additive case, they showed
that random codes can achieve rate approaching $1-H(p)$ while
correcting any specific error pattern $e$ of weight $pN$ with high
probability.%
\footnote{The AVC literature usually discusses the ``average error
  criterion'', in which the code is deterministic but the message is
  assumed to be uniformly random and unknown to the channel. We prefer
  the ``stochastic encoding'' model, in which  the
   message is chosen adversarially, but the encoder has local random
  coins. The stochastic encoding model strictly stronger than the
  Average error model as long as the
  decoder recovers the encoder's random coins along with message. The
   arguments of Czisz{\'a}r and Narayan \cite{CN88} and Langberg
   \cite{langberg08} also apply to the stronger
  model.}
Note that
codes providing this guarantee {\em cannot} be linear, since the bad
error vectors  are the same for all codewords in a linear code.
\ifnum\focs=0
The decoding rule used in \cite{CN88} to
prove this claim was quite complex, and it was simplified to the more
natural closest codeword rule in
\cite{CN89}. 
\fi
Langberg~\cite{langberg08} 
\ifnum\focs=0 revisited this special case (which he called
an \emph{oblivious channel}) and \fi
gave another proof of the above claim, based on a different random coding argument.

As outlined above, we provide two results for this model. First, we
give a new, more modular existential proof. More importantly, we provide
the first explicit constructions of codes for this model which achieve
the optimal rate $1-H(p)$.

\mypar{Polynomial-time bounded channels} In a different vein,
Lipton~\cite{lipton} considered channels whose behavior can be
described by a polynomial-time algorithm. He showed how a small
amount of secret shared randomness (the seed for a pseudorandom
generator) could be used to communicate at the Shannon capacity over
any polynomial-time channel that introduces a bounded number of
errors. Micali \emph{et al.}~\cite{MPSW} gave a similar result in a
public key model; however, their result relies on efficiently
list-decodable codes, which are only known with sub-optimal rate.
Both results assume the existence of one-way functions and some kind
of setup. On the positive side, in both cases the channel's time bound
need not be known explicitly ahead of time; one gets a trade-off
between the channel's time and its probability of success.

Our list decoding result removes the setup assumptions of
\cite{lipton,MPSW} at the price of imposing a specific polynomial
bound on the channel's running time and relaxing to list-decoding. 
However, our result also implies stronger \emph{unique decoding}
results in the public-key model~\cite{MPSW}.  Specifically, our codes
can be plugged into the construction of Micali \emph{et al.} to get
unique decoding at rates up to the Shannon capacity when the sender
has a public key known to the decoder (and possibly to the channel). The
idea, roughly, is to sign messages before encoding them;
see~\cite{MPSW} for details.  We remark that while they make use of list-decodable codes in the usual sense where all close-by codewords can be found, our weaker notion (where we find a list that includes the original message with high probability) suffices for the application.

Ostrovsky, Pandey and Sahai \cite{OPS07} and
Hemenway and Ostrosky \cite{HO08} considered the construction of
\emph{locally} decodable codes in the presence of computationally
bounded errors assuming some setup (private \cite{OPS07} and public
\cite{HO08} keys, respectively). The techniques used for locally
decodable codes are quite
different from those used in more traditional coding settings; we do not know if the
ideas from our constructions can be used to remove the setup
assumptions from \cite{OPS07,HO08}.

\mypar{Logarithmic-space channels}
Galil \emph{et al.}~\cite{GLYY} considered a slightly weaker model,
logarithmic space, that still captures most physically realizable
channels. They modeled the channel as a finite
automaton with polynomially-many states. Using Nisan's generator for log-space machines~\cite{nisan},
they removed the assumption of one-way functions from Lipton's
construction in the shared randomness model~\cite{lipton}.

In the initial version of this paper, we considered nonuniform
generalization of their model which also generalizes arbitrarily
varying channels.  Our result for polynomial-time bounded channels,
which implies a construction for logarithmic-space channels,
removes the assumption of shared setup in the model of \cite{GLYY}
but achieves only list decoding. This relaxation is necessary for some
parameter ranges, since unique decoding in this model is impossible
when $p>1/4$.

\mypar{Causal channels}
The logarithmic-space channel can also be seen as a restriction of
\emph{online}, or \emph{causal}, channels, recently studied
by Dey, Jaggi and Langberg~\cite{DJL08,LJD09}. These channels make a single-pass through the
codeword, introducing errors as they go. They are not restricted
in either space usage or computation time. It is known that codes for
online channels cannot achieve the Shannon rate; specifically, the
achievable asymptotic rate is at most
$\max(1-4p,0)$ \cite{DJL08} (and in fact slightly tighter bounds were
recently discovered \cite{DeyJLS12}). Our
impossibility result, which shows that the rate of codes for time- or
space-bounded channels is asymptotically 0 for $p>\frac
1 4$, can be seen as a partial extension of the online channels
results  of \cite{DJL08,LJD09,DeyJLS12} to 
computationally-bounded channels, though our proof technique  is quite different.

\section{Statements of Results}
\label{sec:intro-results}
Recall the notion of stochastic codes: A stochastic binary code of
rate $R\in(0,1)$ and block length $N$ is given by an encoding function
$\mathsf{Enc} : \{0,1\}^{RN} \times \{0,1\}^b \rightarrow \{0,1\}^N$
which encodes the $RN$ message bits, together with some additional
random bits ($b$ of them), into an $N$-bit codeword. Here, $N$ and $b$ are integers,
and we assume for simplicity that $RN$ is an integer. 

\subsection{Codes for worst-case additive errors}

\noindent {\bf Existential result via list decoding.}
We give a novel construction of stochastic codes for additive errors
by combining {\em linear} list-decodable codes with a certain kind of
authentication code called algebraic manipulation detection (AMD)
codes. Such AMD codes can detect {\em additive corruption} with high
probability, and were defined and constructed for cryptographic
applications in \cite{CDFPW}. 
The linearity of the
list-decodable code is therefore crucial to make the combination with
AMD codes work. The linearity ensures that the spurious messages
output by the list-decoder are all additive offsets of the true
message and depend only on the error vector (and not on $m,r$).
An additional feature of our construction is that even when the
fraction of errors exceeds $p$, the decoder outputs a decoding failure
with high probability (rather than decoding incorrectly). This feature
is important when using these codes as a component in our explicit
construction, mentioned next. 

The formal result is stated below. Details can be found in Section~\ref{sec:ld-implies-avc}.
The notation $\Omega_{p,\eps}$ expresses an asymptotic lower bound in which
$p$ and $\eps$ are held constant.
 
\begin{theorem}
\label{thm:ld-to-avc-intro}
For every $p$, $0 < p < 1/2$ and every $\eps >0$, there {\em exists} a
family of stochastic codes of rate $R\ge 1-H(p)-\eps$ and a
deterministic (exponential time) decoder $\mathsf{Dec} : \{0,1\}^N
\rightarrow \{0,1\}^{RN} \cup \{ \bot \}$ such that for every $m \in
\{0,1\}^{RN}$ and every error vector $e \in \{0,1\}^N$ of Hamming
weight at most $pN$, $\Pr_r\bigl[\mathsf{Dec}\bigl( \mathsf{Enc}(m,r)
+ e \bigr) = m\bigr] \ge 1-2^{-\Omega_{p,\eps}(N)}$.
Moreover, when more than a fraction $p$ of errors occur, the decoder
is able to detect this and report a decoding failure ($\bot$) with
probability
at least $1-2^{-\Omega_{p,\eps}(N)}$. \\
Given an \emph{explicit} family of linear binary codes of rate $R$ that can
be efficiently list-decoded from fraction $p$ of errors with list-size
bounded by a polynomial function in $N$, one can construct an \emph{explicit}
stochastic code of rate $R-o(1)$ with the above guarantee along with an
{\em efficient} decoder.
\end{theorem}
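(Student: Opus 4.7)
The plan is to implement the hybrid construction described in the paragraph preceding the theorem: compose an AMD code with a linear list-decodable code. Let $C_{\mathrm{LD}}: \F_2^K \to \F_2^N$ be a linear binary code of rate $R' \geq 1 - H(p) - \eps/2$ that is combinatorially $(p, L)$-list-decodable with $L = \poly(N)$; such codes exist by a standard random-coding argument, and the ``moreover'' clause assumes such a code is given explicitly with an efficient list-decoder. Let $(\mathsf{AMDEnc}, \mathsf{AMDDec})$ be an AMD code from \cite{CDFPW} mapping $k$-bit messages to $K$-bit encodings with $k/K \to 1$ and detection guarantee $\Pr_r[\mathsf{AMDDec}(\mathsf{AMDEnc}(m;r) + \Delta) \neq \bot] \leq 2^{-\Omega(K)}$ for every $m$ and every nonzero $\Delta$. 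Define $\Enc(m;r) = C_{\mathrm{LD}}(\mathsf{AMDEnc}(m;r))$. On input $y \in \F_2^N$, the decoder $\Dec$ enumerates the combinatorial list $\mathcal L(y) = \{x \in \F_2^K : \hamdist(C_{\mathrm{LD}}(x), y) \leq pN\}$, runs $\mathsf{AMDDec}$ on each element, and returns the unique non-$\bot$ result if there is one (otherwise $\bot$).

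The key step in the analysis exploits linearity. Writing $m^* := \mathsf{AMDEnc}(m;r)$, the received word is $y = C_{\mathrm{LD}}(m^*) + e$, and linearity of $C_{\mathrm{LD}}$ gives
\[
\mathcal L(y) = m^* + D_e, \qquad D_e := \{\delta \in \F_2^K : \hamdist(C_{\mathrm{LD}}(\delta), e) \leq pN\}.
\]
Crucially, $D_e$ is a function of $e$ alone, and for an additive-channel adversary $e$ depends on $m$ and the code but not on the private randomness $r$, so every $\delta \in D_e$ is fixed with respect to $r$. Hence the AMD security bound applies to each candidate $m^* + \delta$: for $\delta \neq 0$, $\Pr_r[\mathsf{AMDDec}(m^* + \delta) \neq \bot] \leq 2^{-\Omega(K)}$. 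A union bound over the at most $L = \poly(N)$ candidates shows that with probability $1 - 2^{-\Omega(N)}$ over $r$, every nonzero offset yields $\bot$.

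Conditioning on that event, two cases finish both guarantees. If $|e| \leq pN$ then $0 \in D_e$, so $m^* \in \mathcal L(y)$ and $\mathsf{AMDDec}(m^*) = m$ while all other candidates give $\bot$; the decoder outputs $m$. If $|e| > pN$ then $0 \notin D_e$, so every element of $\mathcal L(y)$ has nonzero offset and the decoder outputs $\bot$, which is the required error-detection behavior. Rates compose as $(1 - H(p) - \eps/2)(1 - o(1)) \geq 1 - H(p) - \eps$ for large $N$, and efficient encoding and decoding follow as soon as $C_{\mathrm{LD}}$ admits an efficient list-decoder with polynomial list size, matching the hypothesis of the ``moreover'' clause. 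The single delicate point---and the main obstacle to overcome---is ensuring that list candidates are decoupled from the AMD randomness $r$: this is exactly what linearity buys us, since it identifies the spurious candidates with the fixed set $D_e$, and it is why the same template would fail with a nonlinear inner code.
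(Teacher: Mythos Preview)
Your proposal is correct and follows essentially the same approach as the paper: compose an AMD code with a linear list-decodable code, and exploit linearity to show that the spurious list elements are $m^* + D_e$ for a set $D_e$ depending only on $e$ (hence independent of $r$), so the AMD guarantee plus a union bound over $|D_e|\le L$ handles both the $\wt(e)\le pN$ and $\wt(e)>pN$ cases. The paper carries this out with the concrete AMD family $f^{(d)}_{\rm AMD}(x,r)=r^{d+2}+\sum_i x_i r^i$ and verifies directly that for \emph{any} nonzero offset the tag check is a nonzero polynomial of degree $\le d{+}1$ in $r$, which is exactly the ``strong'' AMD property you invoke abstractly; your observation that $0\in D_e$ iff $\wt(e)\le pN$ is the same mechanism the paper uses to get the error-detection clause.
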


\smallskip \noindent {\bf Explicit, efficient codes achieving
  capacity.}  Explicit binary list-decodable codes of optimal rate
are not known, so one cannot use the above connection to construct
\emph{explicit} stochastic codes of rate $\approx 1-H(p)$ for $pN$ additive
errors.  Nevertheless, we give an explicit construction of
capacity-achieving stochastic codes against worst-case additive
errors. The construction is described at a high-level in
Section~\ref{sec:overview} and in further detail in
Section~\ref{sec:efficient-oc-codes}.
\begin{theorem}
\label{thm:additive-intro}
For every $p\in(0,1/2)$, every $\eps>0$, and infinitely many $N$,
there is an explicit, efficient stochastic code of block length $N$
and rate $R \ge 1-H(p)-\eps$ which corrects a $p$ fraction of additive
errors with probability $1-o(1)$. Specifically, there are polynomial
time algorithms $\Enc$ and $\Dec$ such that for
{\em every} message $m \in \{0,1\}^{RN}$ and {\em every} error vector
$e$ of Hamming weight at most $pN$, we have $\Pr_r[\Dec(\Enc(m;r) + e)
=m ] = 1-\exp(-\Omega_{p,\eps}(N/\log^2 N))$.
\end{theorem}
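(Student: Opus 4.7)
The plan is to build the efficient code by a ``control/payload'' bootstrapping from the existential Theorem~\ref{thm:ld-to-avc-intro}. The approach reduces the additive-error setting to a binary-symmetric-like channel via a private randomization (permutation plus XOR mask), and conveys the seed for the randomization through a short auxiliary encoding that is decoded by brute force.

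\textbf{Payload reduction to $\BSC$.} Start with an explicit, efficiently encodable/decodable binary code $C_{\mathrm{pay}}$ of rate at least $1-H(p)-\eps/2$ that attains Shannon capacity on $\BSC_p$ (e.g., a Forney concatenated code or a polar code). From a short seed $s$, derive a permutation $\pi_s$ on $N$ positions and an $N$-bit mask $z_s$ via a limited-independence construction (a $t$-wise independent permutation family and $t$-wise independent sample space, with $t$ polylogarithmic in $N$), so that $|s|=O(\log N)$. The payload portion of the codeword is $\pi_s(C_{\mathrm{pay}}(m))\oplus z_s$. Because the additive error vector $e$ is chosen obliviously of $s$, after the decoder applies $\pi_s^{-1}$ and XORs out $z_s$, the residual error pattern is $\pi_s^{-1}(e)$, whose distribution is close to uniform among weight-$|e|$ strings. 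Standard $\BSC$-capacity decoders then handle this up to failure probability $\exp(-\Omega(N/\log^2 N))$, with the $\log^2 N$ loss coming from the polylogarithmic independence required to invoke concentration.

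\textbf{Control information.} Encode the seed $s$ using the existential stochastic additive-error code $C_{\mathrm{ctrl}}$ from Theorem~\ref{thm:ld-to-avc-intro} at a block length $n_1=O(\log N)$, so that brute-force decoding of $C_{\mathrm{ctrl}}$ takes $\mathrm{poly}(N)$ time. Replicate $C_{\mathrm{ctrl}}(s)$ into many independent copies and place them at positions hidden from the channel via the encoder's private coins: a separate short seed $s'$ (of length $O(\log N)$) determines a pseudorandom subset $P\subset[N]$ of size $|P|=o(N)$ in which the copies live. Since $|s'|=O(\log N)$, the decoder can enumerate all candidate $s'$ in polynomial time.

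\textbf{Assembly and decoding.} The encoder outputs the codeword formed by interleaving the control copies into positions $P$ with the payload in the remaining positions. The decoder iterates over candidate position-seeds $\hat{s}'\in\{0,1\}^{O(\log N)}$: for each, it reads positions $P_{\hat{s}'}$, brute-force-decodes each control copy (discarding those that return $\bot$ using the error-detection feature of Theorem~\ref{thm:ld-to-avc-intro}), aggregates the surviving copies by majority to obtain a candidate seed $\hat{s}$, then inverts the payload transformation and runs $C_{\mathrm{pay}}$'s decoder to obtain a candidate message $\hat m$. A final verification step---re-encoding $\hat m$ under $(\hat s',\hat s)$ and checking Hamming distance to the received word---identifies the correct triple. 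A union bound over the two failure events (payload mis-decoding and control mis-aggregation), together with the polynomial number of candidate $s'$, yields the claimed $\exp(-\Omega_{p,\eps}(N/\log^2 N))$ overall failure probability.

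\textbf{Main obstacle.} The central technical difficulty is the control step: the control region must be small enough for brute-force decoding to remain polynomial in $N$, but it must also survive the adversary concentrating its entire $pN$-error budget onto it. The plan is to exploit the \emph{oblivious} nature of the additive channel: since $e$ is fixed without knowledge of the encoder's private coins, randomizing the placement of the many control copies (via $s'$) guarantees that, for any fixed $e$, with high probability a constant fraction of the copies receive at most a $p+o(1)$ fraction of errors and therefore decode correctly, while all other copies are reliably flagged $\bot$ and silently dropped by the aggregator. Making this argument tight enough to yield the exponentially small failure probability, while keeping $|s|,|s'|$ bounded by $O(\log N)$ and preserving rate arbitrarily close to $1-H(p)$, is where the bulk of the technical effort goes, and where the $\log^2 N$ factor in the exponent arises from the interplay between limited independence in $\pi_s,z_s$ and the concentration of $\pi_s^{-1}(e)$ required by $C_{\mathrm{pay}}$'s decoder.
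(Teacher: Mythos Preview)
Your control/payload outline matches the paper's approach at a high level, but there is a genuine quantitative gap in the seed length that breaks the argument as stated.

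You assert $|s|=O(\log N)$ with $t$ ``polylogarithmic in $N$''. These are inconsistent: a $t$-wise independent permutation family on $[N]$ needs $\Theta(t\log N)$ seed bits, so an $O(\log N)$ seed gives only $t=O(1)$. More importantly, even granting a polylog-length seed and polylogarithmic $t$, the guarantee one actually has for a capacity-achieving concatenated code under $t$-wise independent error locations is a failure probability of $\exp(-\Omega(\eps^2 t))$ (this is the content of Smith's analysis, which the paper invokes as Proposition~\ref{prop:concat-codes}). With $t=\mathrm{polylog}(N)$ you obtain only quasi-polynomial failure probability, nowhere near $\exp(-\Omega(N/\log^2 N))$. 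To reach the stated bound one needs $t=\Theta(\eps^2 N/\log N)$, forcing the permutation and offset seeds to have length $\Theta(\eps^2 N)$, not $O(\log N)$.

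This is exactly the obstacle the paper's construction is engineered around. The control information $\omega=(s_\pi,s_\Delta,s_T)$ has length $\Theta(\eps^2 N)$, far too long for brute-force decoding or for enumerating a placement seed $s'$. The paper therefore encodes $\omega$ with a low-rate Reed--Solomon code into $\ell=\Theta(\eps N/\log N)$ symbols over $\F_{2^{O(\log N)}}$, and then encodes each RS symbol (together with its evaluation point) by a constant-rate stochastic code $\SC$ of block length $\Theta(\log N)$ that \emph{can} be brute-force decoded. The resulting control blocks are interleaved among the payload blocks at positions chosen by a randomness-efficient sampler seeded by $s_T$ (which is itself part of $\omega$). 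Crucially, the decoder does \emph{not} enumerate candidate placements: it runs the $\SC$ decoder on \emph{every} block of the received word. The $t'$-wise independent offset $\Delta$ (with $t'=\Theta(\eps^2 N/\log N)$, also part of $\omega$) guarantees that payload blocks look uniformly random and are returned as $\bot$ by $\SC$-decode with high probability, so the RS decoder sees a large fraction of correct symbols, few wrong ones, and many erasures --- enough to recover $\omega$ uniquely. The $\log^2 N$ in the exponent arises here, from the block-wise concentration argument for the offset (Lemma~\ref{lem:payloadblocks}), not from the payload step.

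A secondary issue: your final ``re-encode and check Hamming distance'' verification cannot disambiguate among the $\mathrm{poly}(N)$ candidates your enumeration would produce. The encoder is stochastic (the $\SC$ encodings use fresh coins $r_i$ that are not part of $s$), so re-encoding under $(\hat s',\hat s)$ does not yield a fixed reference string; and even if it did, a rate-$(1-H(p)-\eps)$ code has minimum distance well below $2pN$, so proximity within $pN$ does not single out the true codeword.
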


A slight modification of our construction gives codes for the ``average
error criterion,'' in which the code is deterministic but the message
is assumed to be uniformly random and unknown to the channel (see Appendix \ref{app:codes-avg-error}).

\subsection{Unique decoding is impossible for nonoblivious channels
  when $p>\frac 14$}

We exhibit a very simple ``zero space'' (aka memoryless) channel that rules out
achieving any positive rate (i.e., the capacity is zero) when
$p>1/4$. In each position, the channel either leaves the transmitted
bit alone, sets it to 0, or sets it to 1. The channel works by
``pushing'' the transmitted codeword towards a different valid
codeword (selected at random). This simple channel adds at most $n/4$
errors \emph{in expectation}. We can get a channel with a hard bound
on the number of errors by allowing it logarithmic space. Our
impossibility result can be seen as strengthening a result by Dey
\emph{et al.}~\cite{DJL08} for online channels in the special case
where $p>1/4$, though our proof technique, adapted from
Ahlswede~\cite{ahlswede}, is quite different. Appendix~\ref{sec:imposs} contains a
detailed proof.

\begin{theorem}[Unique decoding is impossible for $p\allowbreak >
  \frac 1 4$]
\label{thm:imposs-intro} 
    For every pair of randomized encoding/decoding algorithms
    $\Enc,\Dec$ that make $N$ uses of the channel and use a message
    space whose size tends to infinity with $N$, for every
    $0<\nu<\frac 1 4$, there is an online space-$\lceil\log(N)\rceil$
    channel $\channel_2$ that alters at most $N(\frac 1 4 + \nu)$
    bits and causes a uniformly random message to be incorrectly
    decoded with probability at least $C\cdot \nu$ for an absolute
    constant $C$.
\end{theorem}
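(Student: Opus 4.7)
The plan is to construct $\channel_2$ via an Ahlswede-style \emph{symmetrization} attack. The channel samples, as part of its randomness, an independent uniform alternative message $m' \in \bit{RN}$, fresh encoder coins $r'$, and a uniform bit-tape $b \in \bit{N}$; writing $c':=\Enc(m';r')$, the $i$-th transition of the resulting deterministic branching program hardcodes the pair $(c'_i, b_i)$ and reads the transmitted bit $c_i$. The only piece of state carried across positions is a counter $k \in \{0, 1, \ldots, \lfloor N(1/4+\nu)\rfloor\}$ of flips performed so far, fitting in $\lceil \log N\rceil$ bits. At position $i$ the channel outputs $c'_i$ and increments $k$ iff $c_i \neq c'_i$, $b_i = 1$, and $k$ is still strictly below its cap; otherwise it outputs $c_i$ verbatim. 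This deterministically guarantees at most $\lfloor N(1/4+\nu) \rfloor$ bit flips on every transmission, and the branching program has width $O(N)$.

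For the failure analysis I would first consider $\widetilde\channel_2$, the same channel without the counter cap. In $\widetilde\channel_2$, at every position $i$ with $c_i \neq c'_i$ the output $y_i$ is a uniformly random bit, so the joint distribution of $(y, c, c')$ is symmetric under $c \leftrightarrow c'$; pushing this symmetry back through the encoding, the triples $(y, m, m')$ and $(y, m', m)$ are identically distributed. Hence for any (possibly randomized) decoder $\Dec$ one has $\Pr[\Dec(y) = m] = \Pr[\Dec(y) = m']$. Since these two events are disjoint when $m \neq m'$, each probability is at most $\tfrac12$, and adding the probability $1/|M|$ that $m=m'$ gives $\Pr[\Dec(y) = m] \le \tfrac12 + o(1)$ under the growing-message-space hypothesis. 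The expected flip count under $\widetilde\channel_2$ equals $\mathbb{E}[\hamdist(c,c')]/2$, and $\mathbb{E}[\hamdist(c,c')] \le N/2$ by a standard Plotkin-style averaging when $m$ is also taken uniform; a Hoeffding bound on $\sum_i b_i\mathbf{1}[c_i\neq c'_i]$ conditional on $(c,c')$ then shows that for $\nu$ dominating $1/\sqrt{N}$ the counter is hit with probability $o(1)$, so $\channel_2$ and $\widetilde\channel_2$ agree almost always and the $\tfrac12 - o(1)$ failure bound transfers.

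The hard part will be to achieve the precise ``$\ge C\cdot\nu$'' scaling \emph{uniformly} for all $\nu\in(0,\tfrac14)$, including very small $\nu$ where the counter aborts the symmetrization before it has finished and thereby breaks the clean symmetry. The clean fix is to use a \emph{mixture} channel: run the full symmetrization attack (with its counter) with probability $\Theta(\nu)$ and leave the codeword untouched with the remaining probability. Both branches respect the hard error cap $N(1/4+\nu)$, and on the symmetrize branch the decoding failure probability is $\Omega(1)$ by the argument above, yielding an overall failure probability $\ge C\nu$ for an absolute constant $C$. A secondary subtlety worth flagging is that the averaging bound $\mathbb{E}[\hamdist(c,c')] \le N/2$ must be converted into a high-probability statement about $\hamdist(c,c')$ to rule out a large fraction of pairs at distance much greater than $N/2$; this uses the growing message space (so $m\neq m'$ with probability $1-o(1)$) together with a conditional-Markov or second-moment argument applied to the distance distribution of a random pair of codewords.
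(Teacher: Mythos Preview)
Your overall architecture---symmetrization via a random alternative codeword $c'=\Enc(m';r')$, the coin-flip swap at positions where $c_i\neq c'_i$, the log-space counter to enforce the hard cap, and the symmetry $(y,m,m')\stackrel{d}{=}(y,m',m)$ for the uncapped channel---matches the paper's proof exactly. The gap is in how you pass from the uncapped to the capped channel.

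You claim that a Hoeffding bound on $\sum_i b_i\mathbf{1}[c_i\neq c'_i]$ shows the cap is reached with probability $o(1)$. That bound only controls the randomness in $b$ \emph{conditional} on $(c,c')$; it says the flip count concentrates near $\hamdist(c,c')/2$. But Plotkin gives only $\mathbb{E}[\hamdist(c,c')]\le N/2$, and nothing prevents $\hamdist(c,c')$ from exceeding $N(1/2+\nu)$ with constant probability (take, e.g., a code containing many complementary pairs). So the cap can be hit with probability bounded away from $0$, and the capped and uncapped channels are \emph{not} $o(1)$-close in total variation. Your ``secondary subtlety'' is in fact the main obstacle, and neither a second-moment argument nor the growing message space yields the high-probability control you want---there simply is no variance bound on $\hamdist(c,c')$ available for an arbitrary stochastic code.

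The mixture fix does not rescue this: on the ``symmetrize'' branch you still run the capped channel with the same cap $N(1/4+\nu)$, so the cap is still hit with constant probability and the clean symmetry is still broken there. (And if the symmetrize branch \emph{did} have $\Omega(1)$ failure, you would not need the mixture at all, since $\Omega(1)\ge C\nu$ for every $\nu\le 1/4$.)

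The paper's resolution is to condition on the \emph{symmetric} event $G=\{\hamdist(c,c')\le N(1/2+\nu)\}$. Markov on $\hamdist(c,c')$ gives $\Pr[G]\ge \frac{2\nu}{1+2\nu}\ge \nu$. Because $G$ is invariant under swapping $c\leftrightarrow c'$, the symmetry $(y,c,c')\stackrel{d}{=}(y,c',c)$ survives the conditioning, so $\Pr[\Dec(y)=m\mid G]\le \tfrac12+o(1)$. Conditioned on $G$, the cap is exceeded only with probability $\exp(-\Omega(\nu^2 N))$ by Hoeffding, so the capped and uncapped channels now \emph{do} agree with probability $1-o(1)$. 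Multiplying, the capped channel causes a decoding error with probability at least $\Pr[G]\cdot(\tfrac12-o(1))\ge \nu/2-o(1)$, which is the $C\nu$ you need. The $\nu$ factor thus comes from $\Pr[G]$ via Markov, not from any mixing probability.
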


\subsection{List-decodable codes for polynomial-time channels}

We next consider a very general noise model. The channel can look at the whole codeword, and effect any error pattern, with the only restriction being that the channel must compute the error pattern in polynomial time given the original codeword as input. In fact, we will even allow non-uniformity, and require that the error pattern be computable by a polynomial size circuit.

\begin{theorem}
  \label{thm:intro-new-time-bounded}
  For all constants $\eps>0$, $p \in (0,1/2)$, and $c\ge 1$, and for
  infinitely many integers $N$, there exists a Monte Carlo
  construction (succeeding with probability $1-N^{-\Omega(1)}$) of a
  stochastic code of block length $N$ and rate $R \ge 1-H(p)-\eps$
  with $N^{O(c)}$ time encoding/list decoding algorithms $(\Enc,\Dec)$
  that have the following property: For all messages $m \in \bit{RN}$,
  and all $pN$-bounded channels $\channel$ that are implementable by a
  size $O(N^c)$ circuit, $\Dec(\channel(\Enc(m;r)))$ outputs a list of
  at most $\mathrm{poly}(1/\eps)$ messages that includes the real
  message $m$ with probability at least $1-N^{-\Omega(1)}$.
\end{theorem}

\section{Simple, nonexplicit codes for worst-case additive errors}
\label{sec:ld-implies-avc}
%
%
%
In this section, we will demonstrate how to use good {\em linear}
list-decodable codes to get good stochastic codes. The conversion uses
the list-decodable code as a black-box and loses only a negligible
amount in rate. In particular, by using binary {\em linear} codes that
achieve list decoding capacity, we get stochastic codes which achieve
the capacity for additive errors. The linearity of the code is crucial for
this construction. The other ingredient we need for the construction
is an authentication code (called an \emph{algebraic manipulation detection} (AMD) code) that can detect {\em additive
  corruption} with high probability~\cite{CDFPW}.

\subsection{Some coding terminology}

We begin with the definitions relating to list decoding and stochastic
codes for additive errors.

\begin{defn}[List decodable codes]
  For a real $p$, $0 < p < 1$, and an integer $L \ge 1$, a code $C
  \subseteq \Sigma^n$ is said to be $(p,L)$-list decodable if for
  every $y \in \Sigma^n$ there are at most $L$ codewords of $C$ within
  Hamming distance $pn$ from $y$. If for every $y$ the list of $\le L$
  codewords within Hamming distance $pn$ from $y$ can be found in time
  polynomial in $n$, then we say $C$ is {\em efficiently} $(p,L)$-list
  decodable. Note that $(p,1)$-list decodability is equivalent to the
  distance of $C$ being greater than $2pn$. \closedef
\end{defn}
An efficiently $(p,L)$-list decodable code can be used for
communication on the $\adv_p$ channel with the guarantee that the
decoder can always find a list of at most $L$ messages that includes
the correct message.

\begin{defn}[Stochastic codes and their decodability]
\label{def:stochastic}
  A stochastic binary code of rate $R$ and block length $n$ is given
  by an encoding function $\mathsf{Enc} : \{0,1\}^{Rn} \times \{0,1\}^b
  \rightarrow \{0,1\}^n$ which encodes the $Rn$ message bits together
  with some additional random bits into an $n$-bit codeword. 

  Such a code is said to be (efficiently) {\em $p$-decodable with
    probability $1-\delta$} if there is a (deterministic polynomial
  time computable) decoding function $\mathsf{Dec} : \{0,1\}^n
  \rightarrow \{0,1\}^{Rn} \cup \{ \bot \}$ such that for every $m \in
  \{0,1\}^{Rn}$ and every error vector $e \in \{0,1\}^n$ of Hamming
  weight at most $pn$, with probability at least $1-\delta$ over the
  choice of a random string $\omega\in \{0,1\}^b$, we have
  \[ \mathsf{Dec}\bigl( \mathsf{Enc}(m,\omega) + e \bigr) = m \ . \]
\end{defn}

Though we do not require it in the definition, our constructions in
this section of stochastic codes from list-decodable codes will also
have the desirable property that when the number of errors exceeds
$pn$, with high probability the decoder will output a decoding failure
rather than decoding incorrectly.
\subsection{Algebraic manipulation detection (AMD) codes}
The following is not the most general definition of AMD codes from
\cite{CDFPW}, but will suffice for our purposes and is the one we will
use.
\begin{defn}
  Let ${\cal G} = (G_1,G_2,G_3)$ be a triple of abelian groups (whose
  group operations are written additively) and $\delta > 0$ be a
  real. Let $G=G_1 \times G_2 \times G_3$ be the product group (with
  component-wise addition).
  An $({\cal G},\delta)$-algebraic manipulation code, or $({\cal G},\delta)$-AMD code for short, is given by a map $f : G_1 \times G_2 \rightarrow G_3$ with the following property: 
\begin{quote}
  For every $x \in G_1$, and all $\Delta \in G$,
  ~$\Pr_{r \in G_2} \bigl[ D((x,r,f(x,r)) + \Delta) \notin \{x,\bot\}
  \bigr] \le \delta$ \ ,
\end{quote}

where the decoding function $D : G \rightarrow G_1 \cup \{ \bot\}$ is
given by $D((x,r,s)) = x$ if $f(x,r) = s$ and $\bot$ otherwise. The
{\em tag size} of the AMD code is defined as $\log |G_2| + \log |G_3|$
--- it is the number of bits the AMD encoding appends to the
source. \closedef
\end{defn}
Intuitively, the AMD allows one to authenticate $x$ via a signed form
$(x,r,f(x,r))$ so that an adversary who manipulates the signed value
by adding an offset $\Delta$ cannot cause incorrect decoding of some
$x' \neq x$. The following concrete scheme from \cite{CDFPW} achieves
near optimal tag size and we will make use of it.
\begin{theorem}
\label{thm:amd-explicit-const}
  Let $\F$ be a finite field of size $q$ and characteristic $p$, and
  $d$ be a positive integer such that $d+2$ is not divisible by
  $p$. Then the function $f^{(d)}_{\rm AMD} : \F^d \times \F \rightarrow \F$ given by
  $f^{(d)}_{\rm AMD}(x,r) = r^{d+2} + \sum_{i=1}^d x_i r^i$ is a $\bigl({\cal G},
  \frac{d+1}{q}\bigr)$-AMD code with tag size $2 \log q$ where ${\cal
    G} = (\F^d, \F, \F)$.\footnote{Here we mean the additive group of the vector space $\F^d$.}
\end{theorem}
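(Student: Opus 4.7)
The plan is to unpack the decoding condition and reduce the problem to bounding the number of roots of an explicit polynomial in $r$. Fix $x \in \F^d$ and an offset $\Delta = (\Delta_x, \Delta_r, \Delta_s) \in \F^d \times \F \times \F$. The tampered encoding is $(x + \Delta_x,\, r + \Delta_r,\, f^{(d)}_{\mathrm{AMD}}(x,r) + \Delta_s)$, which decodes to $x + \Delta_x$ if the syndrome check passes and to $\bot$ otherwise. The bad event ``$D \notin \{x, \bot\}$'' thus requires $\Delta_x \neq 0$ together with the polynomial identity
\[
P(r) \;:=\; f^{(d)}_{\mathrm{AMD}}(x + \Delta_x,\, r + \Delta_r) - f^{(d)}_{\mathrm{AMD}}(x,r) - \Delta_s \;=\; 0.
\]
So the whole task reduces to showing $\Pr_{r \in \F}[P(r)=0] \le (d+1)/q$ whenever $\Delta_x \neq 0$.

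The key step is to show that $P(r)$, viewed as a polynomial in $r$ with coefficients depending on $x, \Delta_x, \Delta_r, \Delta_s$, is a \emph{nonzero} polynomial of degree at most $d+1$ in $\F[r]$. The degree bound follows immediately because $(r+\Delta_r)^{d+2} - r^{d+2}$ has degree $d+1$ and the remaining summands have degree $\le d$. For nonvanishing, I would split into two cases:
\begin{itemize}
\item If $\Delta_r = 0$, then the $(r+\Delta_r)^{d+2}$ terms cancel and $P(r) = \sum_{i=1}^d \Delta_{x,i}\, r^i - \Delta_s$. Since $\Delta_x \neq 0$, at least one $\Delta_{x,i}$ is nonzero, so $P$ is nonzero.
\item If $\Delta_r \neq 0$, read off the coefficient of $r^{d+1}$: from $(r+\Delta_r)^{d+2}$ it contributes $(d+2)\Delta_r$, while every other summand has degree at most $d$ and contributes $0$. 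The hypothesis $p \nmid d+2$ ensures $(d+2)\Delta_r \neq 0$ in $\F$, so $P$ has degree exactly $d+1$ and in particular is nonzero.
\end{itemize}

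Once $P$ is a nonzero polynomial of degree at most $d+1$ in the single variable $r$, the standard fact that such a polynomial has at most $d+1$ roots in $\F$ gives $\Pr_r[P(r)=0] \le (d+1)/q$, matching the claimed $\delta$. The tag size is $\log|G_2|+\log|G_3| = 2\log q$ by construction, since the encoding appends $r \in \F$ and $f^{(d)}_{\mathrm{AMD}}(x,r) \in \F$. The only delicate point is the case analysis on $\Delta_r$, and in particular using the hypothesis $p \nmid d+2$ exactly where needed to guarantee that the top coefficient does not vanish in characteristic $p$; everything else is elementary polynomial bookkeeping.
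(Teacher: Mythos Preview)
Your proof is correct. Note that the paper does not actually prove this theorem; it is quoted from \cite{CDFPW}, and your argument is exactly the standard one given there: reduce the bad event to $P(r)=0$ for a polynomial of degree at most $d+1$, handle the cases $\Delta_r=0$ and $\Delta_r\neq 0$ separately (using $p\nmid d+2$ in the latter to ensure the leading coefficient $(d+2)\Delta_r$ is nonzero), and apply the root bound.
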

%
\subsection{Combining list decodable and AMD codes}
\label{subsec:ld-avc}
Using a $(p,L)$-list decodable code $C$ of length $n$, for any error
pattern $e$ of weight at most $pn$, we can recover a list of $L$
messages that includes the correct message $m$. We would like to use
the stochastic portion of the encoding to allow us to unambiguously
pick out $m$ from this short list. The key insight is that if $C$ is a
{\em linear} code, then the other (less than $L$) messages in the list
are all fixed offsets of $m$ that depend {\em only} on the error
pattern $e$. So if prior to encoding by the list-decodable code $C$,
the messages are themselves encodings as per a good AMD code, and the
tag portion of the AMD code is good for these fixed $L$ or fewer
offsets, then we can uniquely detect $m$ from the list using the AMD
code. If the tag size of the DMD code is negligible compared to the
message length, then the overall rate is essentially the same as that
of the list-decodable code. Since there exist binary linear
$(p,L)$-list-decodable codes of rate approaching $1-H(p)$ for large
$L$, this gives stochastic codes (in fact, {\em strongly decodable} stochastic
codes) of rate approaching $1-H(p)$ for correcting up to a fraction $p$ of worst-case additive errors.
%
\begin{theorem}[Stochastic codes from list decoding and AMD]
\label{thm:ld-to-avc}
  Let $b,d$ be positive integers with $d$ odd and $k = b(d+2)$.  Let
  $C : \F_2^k \rightarrow \F_2^n$ be the encoding function of a binary
  linear $(p,L)$-list decodable code. Let $f^{(d)}_{\rm AMD}$ be the
  function from Theorem~\ref{thm:amd-explicit-const} for the choice
  $\F = \F_{2^b}$. Let $C'$ be the stochastic binary code with encoding map $E : \{0,1\}^{bd} \times \{0,1\}^{b} \rightarrow \{0,1\}^n$ given by \vspace{-1ex}
  \[ \vspace{-1ex} E(m,r) = C\bigl( m,r,f^{(d)}_{\rm AMD}(m,r)\bigr) \ .  \] 
  Then if $\frac{d+1}{2^b} \le \frac{\delta}{L}$, the stochastic code
  $C'$ is strongly $p$-decodable with probability $1-\delta$. If $C$ is
  efficiently $(p,L)$-list decodable, then $C'$ is efficiently (and strongly)
  $p$-decodable with probability $1-\delta$.

Moreover, even when $e$ has weight greater than $pn$, the decoder detects this and outputs $\bot$ (a decoding failure) with probability at least $1-\delta$. 

Note that the rate of $C'$ is $\frac{d}{d+2}$ times the rate of $C$.
\end{theorem}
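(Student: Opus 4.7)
The plan is to describe an explicit decoder and analyze it in two regimes: $\mathrm{wt}(e) \le pn$, where we must recover $m$, and $\mathrm{wt}(e) > pn$, where we must output $\bot$. On input $y \in \F_2^n$, the decoder first runs the list decoder for $C$ to obtain a list $\mathcal{L} \subseteq (\F_{2^b})^{d+2}$ of at most $L$ candidate preimages, then applies the AMD check $f^{(d)}_{\mathrm{AMD}}(x_i, r_i) \stackrel{?}{=} s_i$ to each candidate $(x_i, r_i, s_i) \in \mathcal{L}$, outputting the unique accepted $x_i$ if all accepted candidates agree and $\bot$ otherwise. The crucial structural observation is that linearity of $C$ makes $\mathcal{L}$ easy to describe: if $y = C(m, r, f^{(d)}_{\mathrm{AMD}}(m,r)) + e$, then $\mathcal{L} = \{(m,r,f^{(d)}_{\mathrm{AMD}}(m,r)) + \Delta_i\}$ where the offsets $\Delta_i = (\alpha_i, \beta_i, \gamma_i) \in (\F_{2^b})^{d+2}$ are determined by $e$ and $C$ alone, independent of the encoder's message and randomness.

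In the case $\mathrm{wt}(e) \le pn$, the zero offset lies in $\mathcal{L}$, so the correct AMD codeword is present and always accepted, yielding $m$. For any other $\Delta_i$ with $\alpha_i \neq 0$, the AMD property bounds the probability of acceptance by $(d+1)/2^b$; offsets with $\alpha_i = 0$ can only accept $m$ itself or reject, so they never produce a conflicting output. A union bound over the at most $L$ spurious offsets caps the error probability by $L(d+1)/2^b \le \delta$. In the case $\mathrm{wt}(e) > pn$, the true codeword lies at distance $>pn$ from $y$ and is therefore absent from $\mathcal{L}$, so every offset $\Delta_i$ is nonzero. The $\alpha_i \neq 0$ branch is again handled by the AMD property. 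The delicate remaining case is $\alpha_i = 0$ with $(\beta_i, \gamma_i) \neq 0$: acceptance is equivalent to $g(r) := f^{(d)}_{\mathrm{AMD}}(m, r+\beta_i) - f^{(d)}_{\mathrm{AMD}}(m, r) - \gamma_i = 0$. Plugging in $f^{(d)}_{\mathrm{AMD}}(m,r) = r^{d+2} + \sum_i m_i r^i$, the term $(r+\beta_i)^{d+2} - r^{d+2}$ contributes degree exactly $d+1$ when $\beta_i \neq 0$ (its leading coefficient is $(d+2)\beta_i$, nonzero because $d$ is odd and the characteristic is $2$), while $\sum_i m_i((r+\beta_i)^i - r^i)$ contributes only degree $\le d-1$; if $\beta_i = 0$ then $g(r) = -\gamma_i$ is a nonzero constant. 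Either way $\Pr_r[g(r) = 0] \le (d+1)/2^b$, and a union bound over the $\le L$ candidates gives $\bot$-output with probability $\ge 1-\delta$.

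Rate and efficiency are immediate: the stochastic code packs $bd$ message bits inside the $b(d+2)$-bit input to $C$, so its rate equals $\tfrac{d}{d+2}\cdot\mathrm{rate}(C)$; the decoder is just one list-decoding call followed by $L$ polynomial evaluations in $\F_{2^b}$, and is therefore efficient whenever $C$ is. The main obstacle I anticipate is the $\bot$-detection step in the $\mathrm{wt}(e)>pn$ regime: the black-box AMD guarantee of Theorem~\ref{thm:amd-explicit-const} only controls $\Pr[D \notin \{x,\bot\}]$ and does not by itself rule out spurious acceptance of the honest message $x$ via an $\alpha=0$ offset. This forces me to open up the explicit polynomial construction and carry out the degree count above; it is exactly there that the hypothesis ``$d$ odd'' (so the characteristic $2$ does not divide $d+2$) is genuinely used---without it the leading $r^{d+1}$ coefficient could vanish and the detection argument would collapse.
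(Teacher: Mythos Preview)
Your approach matches the paper's: list-decode $C$, apply the AMD check to each candidate, and use linearity of $C$ to see that the spurious candidates are the true AMD codeword shifted by offsets $\Delta_i$ that depend only on $e$. There is, however, a small but real gap relative to what the theorem asserts. The statement claims \emph{strong} $p$-decodability, i.e.\ recovery of both $m$ and $r$; your decoder outputs only the message component $x_i$, and your $\mathrm{wt}(e)\le pn$ analysis explicitly allows $\alpha_i=0$ candidates to pass on the grounds that they ``can only accept $m$ itself.'' But such a candidate carries the pair $(m,r+\beta_i)$ with $\beta_i\neq 0$, which is wrong for $r$; under a unique-index rule (which is what the paper uses, outputting $(m_i,r_i)$) it would also force a spurious $\bot$.

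The repair is already in your write-up. Switch the decoder to output $(m_i,r_i)$ from the unique accepted index (and $\bot$ otherwise), and in the $\mathrm{wt}(e)\le pn$ regime invoke the same degree argument you develop for $\mathrm{wt}(e)>pn$: for \emph{every} nonzero $\Delta_i=(\alpha_i,\beta_i,\gamma_i)$ the polynomial $f^{(d)}_{\mathrm{AMD}}(m+\alpha_i,r+\beta_i)-f^{(d)}_{\mathrm{AMD}}(m,r)-\gamma_i$ in $r$ is nonzero of degree at most $d+1$ (your case split on $\beta_i$ extends verbatim, since the additional $\alpha_i\neq 0$ terms contribute only degree $\le d$), so each spurious check passes with probability at most $(d+1)/2^b$, and a union bound over the at most $L-1$ nonzero offsets finishes. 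Incidentally, the paper's own proof simply cites ``the AMD property'' at this step without separating out $\alpha_i=0$; your observation that the black-box AMD guarantee of Theorem~\ref{thm:amd-explicit-const} does not literally cover that case, and that one must open up the explicit polynomial (which is exactly where ``$d$ odd'' enters), is correct and in fact makes your argument more careful than the paper's at that point.
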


\begin{proof}
  Fix an error vector $e \in \{0,1\}^n$ and a message $m \in
  \{0,1\}^{bd}$. Suppose we pick a random $r$ and transmit $E(m,r)$,
  so that $y = E(m,r)+e$ was received.  

  The decoding function $D$, on input $y$, first runs the list
  decoding algorithm for $C$ to find a list of $\ell \le L$ messages
  $m'_1,\dots,m'_\ell$ whose encodings are within distance $pn$ of
  $y$. It then decomposes $m'_i$ as $(m_i,r_i,s_i)$ in the obvious
  way. The decoder then checks if there is a unique index $i \in
  \{1,2,\dots,\ell\}$ for which $f^{(d)}_{\rm AMD}(m_i,r_i)= s_i$. If
  so, it outputs $(m_i,r_i)$, otherwise it outputs $\bot$.

  Let us now analyze the above decoder $D$. First consider the case
  when $\wt(e) \le pn$. In this case we want to argue that the decoder
  correctly outputs $(m,r)$ with probability at least $1-\delta$ (over
  the choice of $r$). Note that in this case one of the $m'_i$'s
  equals $(m,r,f^{(d)}_{\rm AMD}(m,r))$, say this happens for $i=1$
  w.l.o.g. Therefore, the condition $f^{(d)}_{\rm AMD}(m_1,r_1)= s_1$
  will be met and we only need to worry about this happening for some
  $i > 1$ also.

  Let $e_i = y - C(m'_i)$ be the associated error vectors for the
  messages $m'_i$. Note that $e_1 = e$. By linearity of $C$, the
  $e_i$'s only depend on $e$; indeed if $c'_1,\dots,c'_\ell$ are all
  the codewords of $C$ within distance $pn$ from $e$, then $e_i = c'_i
  + e$. Let $\Delta_i$ be the pre-image of $c'_i$, i.e., $c'_i =
  C(\Delta_i)$. Therefore we have $m'_i = m'_1 + \Delta_i$ where the
  $\Delta_i$'s only depend on $e$. By the AMD property, for each $i >
  1$, the probability that $f^{(d)}_{\rm AMD}(m_i,r_i) = s_i$ over the
  choice of $r$ is at most $\frac{d+1}{2^b} \le \delta/L$. Thus with
  probability at least $1-\delta$, none of the checks $f^{(d)}_{\rm
    AMD}(m_i,r_i)= s_i$ for $i > 1$ succeed, and the decoder thus
  correctly outputs $m_1 = m$.

  In the case when $\wt(e) > pn$, the same argument shows that the
  check $f^{(d)}_{\rm AMD}(m_i,r_i) = s_i$ passes with probability at
  most $\delta/L$ for each $i$ (including $i=1$). So with probability
  at least $1-\delta$ none of the checks pass, and the decoder outputs
  $\bot$.
\end{proof}

Plugging into the above theorem the existence of binary linear
$(p,O(1/\eps))$-list-decodable codes of rate $1-H(p)-\eps/2$, and
picking $d = 2 \lceil c_0/\eps \rceil +1$ 
for some absolute constant $c_0$, we can conclude the following result
on existence of stochastic codes achieving capacity for reliable
communication against additive errors.
\begin{cor}
  For every $p$, $0 < p < 1/2$ and every $\eps >0$, there exists a
  family of stochastic codes of rate at least $1-H(p)-\eps$, which are
  {strongly} $p$-decodable with probability at least $1-2^{-c(\eps,p)
    n}$ where $n$ is the block length and $c(\eps,p)$ is a constant
  depending only on $\eps$ and $p$. \\
  Moreover, when more than a fraction $p$ of errors occur, the code is
  able to detect this and report a decoding failure with probability
  at least $1-2^{-c(\eps,p) n}$.
\end{cor}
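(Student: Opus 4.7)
The plan is a straightforward parameter-tuning argument that plugs an existential linear list-decodable code into Theorem~\ref{thm:ld-to-avc}. The main task is to choose $(d,b)$ so that the rate loss of $d/(d+2)$ from the AMD step and the rate loss of $\eps/2$ from the list-decodable code together amount to at most $\eps$, while the failure probability decays exponentially in $n$.

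First I would invoke the standard random-coding argument (e.g.\ Zyablov--Pinsker or Elias) which guarantees that for every $p\in(0,1/2)$ and every $\eps'>0$, random binary \emph{linear} codes of rate $1-H(p)-\eps'$ are $(p,L)$-list decodable with $L=O(1/\eps')$, with overwhelming probability. Applied with $\eps'=\eps/2$, this supplies, for infinitely many $n$, a binary linear $(p,L)$-list decodable code $C$ of length $n$ and rate $R_0 = 1-H(p)-\eps/2$.

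Next, I would set $d = 2\lceil c_0/\eps\rceil+1$ for an absolute constant $c_0$ chosen large enough that $\frac{d}{d+2}\,R_0 \ge 1-H(p)-\eps$; note that $d$ is odd, which is exactly the parity condition required by Theorem~\ref{thm:amd-explicit-const} in characteristic two. The field-size parameter $b$ of the AMD code is then forced by the dimension of $C$: requiring $b(d+2) = R_0\, n$ gives $b = \Theta(\eps n)$ (after, if necessary, passing to an infinite subsequence of lengths $n$ for which this is an integer). With these choices the AMD soundness error is $(d+1)/2^b = 2^{-\Omega(\eps n)}$, so the hypothesis $(d+1)/2^b \le \delta/L$ of Theorem~\ref{thm:ld-to-avc} is met with $\delta = L \cdot (d+1)/2^b = 2^{-c(\eps,p)\,n}$ for a suitable constant $c(\eps,p)>0$.

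Applying Theorem~\ref{thm:ld-to-avc} then yields a stochastic code $C'$ of rate $\frac{d}{d+2}\,R_0 \ge 1-H(p)-\eps$ that is strongly $p$-decodable with probability at least $1-\delta$, and whose decoder outputs $\bot$ with the same probability whenever the error weight exceeds $pn$, exactly as claimed. There is no real obstacle beyond this parameter bookkeeping; all the structural work has already been carried out in Theorem~\ref{thm:ld-to-avc}, and the only existential ingredient needed is the classical fact that random linear codes achieve the list-decoding capacity with constant list size.
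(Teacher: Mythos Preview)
Your proposal is correct and follows exactly the same route as the paper: invoke the existence of binary linear $(p,O(1/\eps))$-list-decodable codes of rate $1-H(p)-\eps/2$, set $d=2\lceil c_0/\eps\rceil+1$, and plug into Theorem~\ref{thm:ld-to-avc}. You have merely spelled out more of the parameter arithmetic (the forced choice $b=\Theta_{\eps,p}(n)$ and the resulting $\delta=2^{-c(\eps,p)n}$) than the paper does, but the argument is the same.
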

\begin{remark}
  For the above construction, if the decoding succeeds, it correctly
  computes in addition to the message $m$ also the randomness $r$ used
  at the encoder. So the construction also gives deterministic codes
  for the ``average error criterion'' where for every error vector,
  all but an exponentially small fraction of messages are communicated
  correctly. See Appendix \ref{app:codes-avg-error} for a discussion
  of codes for this model and their relation to stochastic codes for
  additive errors.
\end{remark}

\section{Overview of Explicit Constructions}
\label{sec:overview}

\mypar{Codes for Additive Errors} Our result is obtained
by combining several ingredients from pseudorandomness and coding
theory. At a high level the idea (introduced by Lipton~\cite{lipton}
in the context of shared randomness) is that if we permute the symbols
of the codewords randomly {\em after} the error pattern is fixed,
then the adversarial error pattern looks random to the
decoder. Therefore, an explicit code $C_{\rm BSC}$ that can achieve
capacity for the binary symmetric channel (such as Forney's
concatenated code~\cite{forney}) can be used to
communicate on $\adv_p$ after the codeword's symbols are randomly
permuted. This allows one to achieve capacity against adversarial
errors when the encoder and decoder share randomness that is unknown
to the adversary causing the errors. But, crucially, this requires the
decoder to know the random permutation  used for encoding.

Our encoder communicates the random permutation (in encoded
form) also as part of the overall codeword, without relying on any
shared randomness, public key, or other ``extra'' information.  The
decoder must be able to figure out the permutation correctly,
based solely on a noisy version of the overall codeword (that encodes the
permutation plus the actual data).  The seed used to pick this random
permutation (plus some extra random seeds needed for the construction)
is encoded by a low rate code that can correct several errors (say, a
Reed-Solomon code) and this information is dispersed into randomly
located blocks of the overall codeword (see
Figure~\ref{fig:prettypic}). The  locations of the control
blocks are picked by a ``sampler'' --- the seed for this sampler is
also part of the {\em control information} along with the seed for the
random permutation.

\begin{figure}[tb]
\noindent  \includegraphics[width=\textwidth]{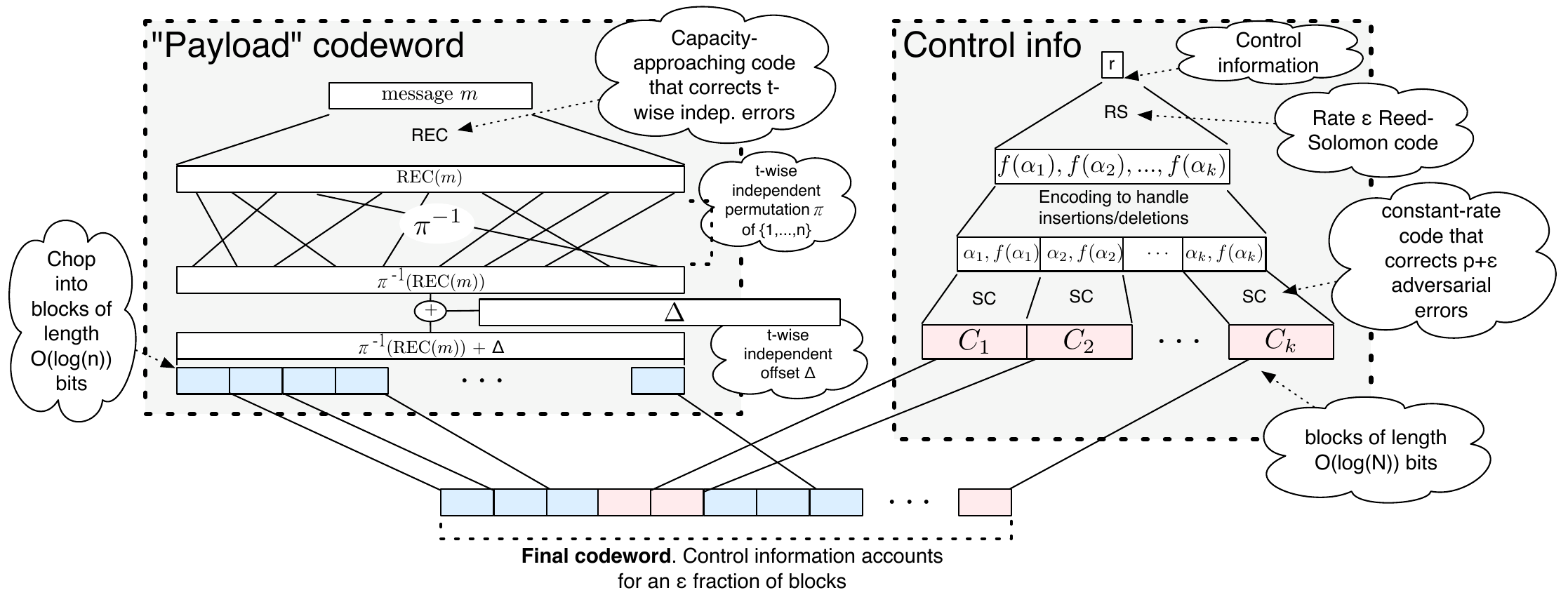}
  \caption{Schematic description of encoder from Algorithm~\ref{alg:enc}.}
\label{fig:prettypic}
\end{figure}

The key challenge is to ensure that the decoder can figure out which
blocks encode the control information, and which blocks consist of
``data'' bits from the codeword of $C_{\rm BSC}$ (the ``payload''
codeword) that encodes the actual message. The control blocks (which
comprise a tiny portion of the overall codeword) are further encoded
by a stochastic code (call it the {\em control code}) that can correct
somewhat more than a fraction $p$, say a fraction $p+\eps$, of
errors. These codes can have any constant rate --- since they encode a
small portion of the message their rate is not so important, so we can
use explicit sub-optimal codes for this purpose.

Together with the random placement of the encoded control blocks, the
control code ensures that a reasonable ($\Omega(\eps)$) fraction of
the control blocks (whose encodings by the control code incur fewer
than $p+\eps$ errors) will be correctly decoded. Moreover, blocks
with too many errors will be flagged as erasures with high
probability. The fraction of correctly recovered control blocks will
be large enough that all the control information can be recovered by
decoding the Reed-Solomon code used to encode the control information
into these blocks. This recovers the permutation used to scramble the
symbols of the concatenated codeword. The decoder can then unscramble
the symbols in the message blocks and run the standard algorithm for
the concatenated code to recover the message.

One pitfall in the above approach is that message blocks could potentially get
mistaken for corrupted control blocks and get decoded as erroneous
control information that leads the whole algorithm astray. To prevent
this, in addition to scrambling the symbols of the message blocks by a
(pseudo)random permutation, we also add a pseudorandom offset (which
is nearly $t$-wise independent for some $t$ much larger than the
length of the blocks). This will ensure that with high probability
each message block will be very far from every codeword and therefore
will not be mistaken for a control block.

An important issue we have glossed over is that a uniformly random
permutation of the $n$ bits of the payload codeword would take $\Omega(n
\log n)$ bits to specify. This would make the control information 
too big compared to the message length; we need it to be a
tiny fraction of the message length. We therefore use
almost $t$-wise independent permutations for $t \approx \eps n/\log
n$. Such permutations can be sampled with $\approx \eps n$ random
bits. We then make use of the fact that $C_{\rm BSC}$ enables reliable
decoding even when the error locations have such limited independence
instead of being a uniformly random subset of all possible locations~\cite{smith07}.

\mypar{Extending the Construction to log-space and poly-time channels}
%
%
The construction for additive channels does not work against more
powerful channels for (at least) two reasons: 
\begin{enumerate}
\itemsep=1ex
\item[(i)]
A more powerful channel
may inject a large number of correctly formatted control blocks into
the transmitted word (recall, each of the blocks is quite small). Even
if the real control blocks are uncorrupted, the decoder will have
trouble determining which of the correct-looking control blocks is in
fact legitimate. 
\item[(ii)] Since the channel can decide the errors after seeing parts of the codeword, it 
may be able to learn which blocks of the codeword contain the control information and
concentrate errors on those blocks. Similarly, we have to ensure that the channel does not learn 
about the permutation used to scramble the payload codeword and thus cause a bad error pattern that cannot be decoded by the standard decoder for the concatenated code. 
\end{enumerate}

The first obstacle is the easier one to get around, and we do so by using list-decoding: although the
channel may inject spurious possibilities for the control information,
the total number of such spurious candidates will be bounded. This ensures that after list decoding, provided at least a small fraction of the true control blocks do not incur too many errors, the
list of candidates will include the correct control information with
high probability.

To overcome the second obstacle, we make sure, using appropriate
pseudorandom generators and employing a ``hybrid'' argument, that the
encoding of the message is \emph{indistinguishable from a random string} by a
channel limited to a given time bound $\T$, even when the channel has
knowledge of the message and certain parts of the control information.
%
One then uses this to ensure that the \emph{distribution of errors}
caused by a time-$\T$ channel on the codeword is indistinguishable in
polynomial time from the distribution caused by the same channel on a
uniformly random string. 
The latter distribution is
independent of the codeword. 
If these error distributions were in fact
{\em statistically close} (and not just close w.r.t. time-$\T$ tests), successful decoding under oblivious errors would also imply
successful decoding under the error distribution caused by the
time bounded channel. To show that closeness
w.r.t. time-$\T$ tests does indeed suffice, we need to consider
each of the conditions for correct decoding separately.

The condition that enough control blocks have at most a fraction
$p+\eps$ of errors can be checked in polynomial time given nonuniform
advice, namely the locations of the control blocks. We use this together
with the above indistinguishability to prove that enough control
blocks are correctly list-decoded, and thus the correct control
information is among the candidates obtained by list decoding the
control code.

The next step is to show that the payload codeword is correctly
decoded given knowledge of the correct control information. The idea
is that there is a set of error patterns such that:
\begin{inparaenum}\item 
  membership in the set can be checked in linear time, \item the set has
  high probability under any oblivious error distribution, and \item
  any error pattern in the set is correctly decoded with high
  probability by the concatenated code.
\end{inparaenum}
Given these properties, one can show that if the concatenated code
errs with noticeable probability on the actual error distribution, one
can build a low-complexity distinguisher for the error distributions, thus
contradicting their computational indistinguishability. 

\begin{remark}
  An earlier version of this paper had a more complicated argument for
  online logspace channels, replacing one of the random components of
  the construction with Nisan's explicit pseudorandom generator for
  logspace. Nisan's generator only ensures that the error distribution
  caused by the channel is indistinguishable from oblivious errors by
  {\em online} space-bounded machines. Therefore, in the above
  argument, in order to arrive at a contradiction we need to build a
  distinguisher than runs in online logspace. However, the
  unscrambling of the error vector (according to the permutation that
  was applied to the payload codeword) cannot be done in an online
  fashion. So we had to resort to an indirect argument based on
  showing limited independence of certain events related to the
  payload decoding. As pointed out by a reviewer, since the final
  construction is anyway randomized, in the current version we simply
  use a random construction of pseudorandom generators for polynomial
  size circuits to build codes resilient to polynomial time bounded
  channels (and hence also logspace channels).
\end{remark}


\section{Explicit Codes of Optimal Rate for Additive Errors}
\label{sec:efficient-oc-codes}

This section describes our construction of codes for
additive errors. 

\subsection{Ingredients}
\label{sec:ingre-brief}
Our construction uses a number of tools from coding theory and
pseudorandomness. \full{These are described in detail in
Appendix~\ref{sec:ingredients}. Briefly, we use:}
\begin{itemize}
\itemsep=1ex
\item A constant-rate explicit stochastic code
  $\SC:\bit{b}\times\bit{b}\to\bit{c_ob}$, defined on blocks of length
  $c_0b=\Theta(\log N)$, that is efficiently decodable with
  probability $1-c_1/N$ from a fraction $p+O(\eps)$ of \emph{additive}
  errors.  decodable with probability $1-c_1/N$. These codes are
  obtained via Theorem~\ref{thm:ld-to-avc-intro}\full{ (see
  Proposition~\ref{prop:const-rate-AVC} in the appendix)}.

\item A rate $O(\eps)$ Reed-Solomon code $\RS$ which encodes a message as the evaluation of a polynomial at points $\alpha_1,...,\alpha_\ell$ in such a way that an efficient algorithm $\text{\sc RS-Decode}$ can efficiently recover the message given at most  $\eps \ell/4$ correct symbols and at most $\eps/24$ incorrect ones. 

\item A randomness-efficient {\em sampler} $\Samp : \{0,1\}^\sigma \rightarrow [N]^\ell$ , such that for any subset $B\subseteq[N]$ of size at least $\mu N$, the output set of the sampler intersects with $B$ in roughly a $\mu$ fraction of its size, that is
$|\Samp(s) \cap B| \approx \mu |\Samp(s)|$, with high probability over
$s\in\bit{\sigma}$ \focsfull{(\emph{e.g.}, due to Vadhan~\cite{vadhan04}).}{. We use an expander-based construction from Vadhan~\cite{vadhan04}.}

\item A generator $\knr : \{0,1\}^\sigma\rightarrow \spcbound_n$ for an (almost) $t$-wise independent family of permutations of the set $\{1,...,n\}$, that uses a seed of $\sigma=O(t\log n)$ random bits~(Kaplan, Naor, and Reingold~\cite{KNR}).

\item A generator $\twise_t: \{0,1\}^{\sigma} \rightarrow \{0,1\}^n$
 for a  $t$-wise independent distribution of bit strings of length $n$, that uses a seed of $\sigma=O(t\log n)$ random bits.

\item An explicit efficiently decodable, rate $R=1-H(p)-O(\eps)$ code
  $\REC : \{0,1\}^{Rn} \rightarrow \{0,1\}^n$ that can correct a $p$
  fraction of $t$-wise independent errors, that is: for every message
  $m \in \{0,1\}^{Rn}$, and every error vector $e \in \{0,1\}^n$ of
  Hamming weight at most $pn$, we have $\text{\sc REC-Decode}(\REC(m)+\pi(e)) = m$ with
  probability at least $1-2^{-\Omega(\eps^2 t)}$ over the choice of a
  permutation $\pi \in_R \text{range}(\knr)$. (Here $\pi(e)$ denotes
  the permuted vector: $\pi(e)_i = e_{\pi(i)}$.) A standard family of
  concatenated codes satisfies this property (see, \emph{e.g.}, \cite{smith07}).
\end{itemize}

\subsection{Analysis}

\full{The following (Theorem~\ref{thm:additive-intro}, restated) is our
result on explicit construction of capacity-achieving codes for
additive errors.
\begin{theorem}\label{thm:efficient-oc-code}
  For every $p\in(0,1/2)$, and every $\eps>0$, the functions {\sc
  Encode}, {\sc Decode} (Algorithms~\ref{alg:enc}
  and~\ref{alg:dec}) form an explicit, efficiently encodable and
  decodable stochastic code with rate $R=1-H(p)-\eps$ such that for
  every $m \in \bit{RN}$ and error vector $e \in \bit{N}$ of Hamming
  weight at most $pN$, we have $\Pr_\omega \bigl[ \text{\sc Decode}(\text{\sc Encode}(m;\omega) +
  e) =m \bigr] \ge 1-\exp(-\Omega(\eps^2 N / \log^2 N)))$, where $N$ is the
  block length of the code.
\end{theorem}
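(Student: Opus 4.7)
The plan is to fix an arbitrary message $m$ and an arbitrary additive error vector $e$ with $\wt(e)\le pN$, and then bound the decoder's failure probability over the encoder's random string $\omega=(s_{\Samp},s_{\knr},s_{\twise},\{r_j\}_j)$ alone. The additive nature of $e$ is the crucial lever: since $e$ is chosen independently of $\omega$, one can invoke the pseudorandomness of the sampler seed $s_{\Samp}$, the $\knr$ seed $s_{\knr}$, and the $t$-wise offset seed $s_{\twise}$ at the purely statistical level. I would reduce the proof to showing that four events hold simultaneously, each failing with probability at most $\exp(-\Omega(\eps^2 N/\log^2 N))$:
\begin{CompactItemize}
\item[$(A)$] The control-block locations $\Samp(s_{\Samp})$ hit $e$ well: all but an $O(\eps)$ fraction of the $\ell=\Theta(\eps N/\log N)$ control blocks contain at most $(p+\eps)c_0 b$ errors.
\item[$(B)$] For every control block satisfying $(A)$, the inner code $\SC$ recovers the intended control symbol; and for every payload block, $\SC$ outputs $\bot$.
\item[$(C)$] Feeding the non-$\bot$ $\SC$ outputs into $\text{\sc RS-Decode}$ reconstructs the full control information, including $s_{\knr}$ and $s_{\twise}$.
\item[$(D)$] After unscrambling by $\pi^{-1}$ with $\pi=\knr(s_{\knr})$ and subtracting the $\twise_t$ offset, $\REC$-Decode returns $m$ from the payload.
\end{CompactItemize}

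Event $(A)$ follows by applying the sampler guarantee to the set of length-$c_0 b$ blocks whose local error density exceeds $p+\eps$: averaging on $e$ bounds this set by an $O(\eps)$ fraction of all blocks, and $\Samp$ preserves its density to within $O(\eps)$ with probability $1-\exp(-\Omega(\eps^2\ell))$. Event $(B)$ splits into two halves. The control half is immediate from the per-block $(1-c_1/N)$ guarantee of $\SC$ combined with a union bound over the $\ell$ control blocks. The payload half is the most delicate step: for any fixed codeword $c^\star$ of the control code and any payload block, I would argue that the uncorrupted contents of the block equal the restriction of the $\twise_t$ offset XORed with the permuted $\REC$-codeword, and so by $t$-wise independence with $t\gg c_0 b$ this distribution is $2^{-\Omega(c_0 b)}$-close to uniform on $\bit{c_0 b}$; a Hamming volume bound then shows the corrupted block lands within $(p+\eps)c_0 b$ of $c^\star$ with probability $2^{-\Omega(c_0 b)}$. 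A union bound over all $|\SC|$ codewords $c^\star$ and all $O(N/\log N)$ payload blocks still leaves exponentially small failure probability, after which $\SC$ must output $\bot$ on every payload block.

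Event $(C)$ is then immediate: by $(A)$ and $(B)$, $\text{\sc RS-Decode}$ is handed at least $(1-O(\eps))\ell$ correct symbols and at most $O(\eps)\ell$ erroneous ones, well within its correction radius. Event $(D)$ is exactly the property of $\REC$ quoted in the ingredients: the permuted restriction of $e$ to the payload positions is nearly $t$-wise independent, and $\REC$ succeeds with probability $1-2^{-\Omega(\eps^2 t)}$. Choosing $t$ of order $N/\log^2 N$ balances the $\knr$/$\twise_t$ seed lengths (each $\Theta(t\log N)=O(N/\log N)$, easily within the $O(\eps N)$ control budget) against this bound and yields the final $\exp(-\Omega(\eps^2 N/\log^2 N))$ estimate. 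The hardest part of the argument, and the one I expect to be the main obstacle, is the payload half of $(B)$ combined with the overall rate budget: I need $t$ large enough that the $\twise_t$ offset makes every payload block look uniform to every $\SC$-codeword test, yet small enough that the control information still encodes into a positive-rate Reed-Solomon code whose symbols fit in $O(\eps N)$ bits. It is precisely this tension that forces the $\log^2 N$ in the exponent, and it is also why the argument here is purely information-theoretic whereas the log-space and poly-time settings must replace ``$e$ independent of $\omega$'' with pseudorandomness-based indistinguishability.
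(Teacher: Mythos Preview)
Your four-event decomposition mirrors the paper's structure, but two of the events are stated in a form that is quantitatively false, and this matters for the final bound.

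\textbf{Event $(A)$ is wrong as stated.} Averaging does \emph{not} show that the set of blocks with local error rate exceeding $p+\eps$ is an $O(\eps)$ fraction of all blocks. If the adversary puts exactly a $p+\eps$ fraction of errors in as many blocks as possible, it can make up to a $p/(p+\eps)=1-\Theta(\eps)$ fraction of the blocks ``bad.'' The averaging argument only gives that the \emph{good} (low-error) blocks occupy at least an $\eps$ fraction of all blocks; the sampler then guarantees that at least an $\eps/2$ fraction of the \emph{control} blocks are good. So you get $\Omega(\eps)\ell$ good control blocks, not $(1-O(\eps))\ell$. Consequently the premise of your $(C)$ is off: the Reed--Solomon code is deliberately taken at rate $\Theta(\eps)$ precisely so that it can be decoded from only $\eps\ell/4$ correct symbols together with $O(\eps\ell)$ erroneous ones.

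\textbf{The union bound in $(B)$ cannot give the exponent you claim.} Each block has $\Theta(\log N)$ bits, so the probability that a single (marginally uniform) payload block lands inside a Hamming ball around a fixed $\SC$ codeword is only $N^{-\Omega(1)}$; union-bounding over $|\SC|=N^{O(1)}$ codewords and $N/\log N$ blocks yields at best $N^{-\Omega(1)}$, not $\exp(-\Omega(\eps^2 N/\log^2 N))$. The paper does \emph{not} insist that every payload block outputs $\bot$. Instead it bounds the \emph{number} of payload blocks accepted by {\sc SC-Decode}: each block is marginally uniform (so accepted with probability $\le 1/N$), and the $t'$-wise independence of the offset translates into $t'_{\mathrm{blocks}}=\Theta(\eps^2 N/\log^2 N)$-wise independence across blocks; a Bellare--Rompel tail bound then shows fewer than $\eps\ell/24$ payload blocks are accepted except with probability $\exp(-\Omega(\eps^2 N/\log^2 N))$. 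Similarly, for control blocks the paper allows up to $\eps\ell/24$ to be erroneously decoded (not $\bot$ but wrong) rather than insisting all high-error control blocks return $\bot$. This ``allow a few mistakes and let RS fix them'' relaxation is exactly what makes the exponential bound achievable, and it is the true source of the $\log^2 N$ in the exponent---it comes from converting bit-wise independence of $\Delta$ to block-wise independence, not from the $\REC$ bound, which is actually $\exp(-\Omega(\eps^4 N/\log N))$ and is dominated.
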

With all the ingredients described in Section~\ref{sec:ingredients} in
place, we can describe and analyze the code of
Theorem~\ref{thm:efficient-oc-code}.
}
The encoding algorithm is given
in Algorithm~\ref{alg:enc} (page \pageref{alg:enc}). The corresponding
decoder is given in Algorithm~\ref{alg:dec} (page
\pageref{alg:dec}). Also, a schematic illustration of the encoding
is in Figure~\ref{fig:prettypic}.  The reader might find it useful to
keep in mind the high level description from
Section~\ref{sec:overview} when reading the formal
description. 

\theoremstyle{definition}
\newtheorem{myalgorithm}{Algorithm}
\newcommand{\mycaption}[1]{#1}
\newcommand{\fw}[1]{\makebox[1.2in][l]{#1}} 

 \begin{figure}[p]
\fbox{\parbox{0.98\textwidth}{
\begin{myalgorithm}\label{alg:enc}\small
\mycaption{{\sc Encode:} On input parameters $N,p,\eps$ (with $p+\eps < 1/2$), and message
$m\in \bit{R\cdot N}$, 
where \mbox{$R = 1-H(p)-O(\eps)$}. }

\begin{algorithmic}[1]\setlength{\itemsep}{4pt}
 
 \STATE \fw{$\Lambda \gets 2c_0$}  \comment {Here $c_0$
   is the expansion of the 
   stochastic code from Theorem~\ref{thm:ld-to-avc-intro}
  that can correct a fraction $p+\eps$  of errors.}
%
\\
 \fw{$n\gets \frac{N}{\Lambda \log N}$} \comment{The final codeword consists of $n$ {\em blocks} of length $\Lambda \log N$. 
}
\\
 \fw{$\ell \gets 24\eps N/\log N$} \comment{The control codeword is $\ell$ blocks long.}
\\
 $n' \gets n-\ell$ and  $N'\gets n'\cdot(\Lambda \log N) $
\scomment{The payload codeword is $n'$ blocks long (i.e. $N'$ bits).}

   \boldcomment{Phase 1: Generate control information}

\STATE \textbf{Select seeds} $s_\pi,s_\Delta, s_\ctrlset$ uniformly in $\bit{\epsilon^2 N}$. 




\focsfull{Set the \emph{control information} $\omega$ to be the concatenation $ (s_\pi,s_\Delta,s_\ctrlset)$.}{\STATE \fw{$\omega \gets (s_\pi,s_\Delta,s_\ctrlset)$ }
 \comment{Total length $|\omega|= 3\epsilon^2 N$.}}

 \boldcomment{Phase 2: Encode control information}

\STATE 
\textbf{Encode $\omega$ with a Reed-Solomon code} $\RS$ to get symbols $(a_1,...,a_\ell)$.\\
\wcomment{$\RS$ is a rate $\frac\eps8$ Reed-Solomon code of length $24
  \epsilon N = \frac{8}{\eps}\cdot |\omega|$ bits which evaluates polynomials at
points  $(\alpha_1,\dots,\alpha_\ell)$ in a field $\F$ of size 
$\approx N$.}

\STATE \textbf{Encode each symbol together with its evaluation point}: For
$i=1,...,\ell$, do
\begin{itemize}\item 
  $A_i \gets (\alpha_i,a_i)$

  \full{\comment{We add location information to each RS symbol to
      handle insertions and deletions.} } 
\item 
  \label{state:sc-encoding} $C_i \gets \SC(A_i,r_i)$, where $r_i$ is
  random of length $2\log N$ bits.
\\
  \comment{\focsfull{$\SC$}{$\SC=\SC_{2 \log N,p+\eps} :\bit{2\log N}
      \times \bit{2 \log N} \to\bit{\Lambda \log N}$} \focsfull{corrects additive errors with high
      probability}{is a stochastic
    code that can correct a fraction $(p+\eps)$ of additive
      errors with probability $1-c_1/N^2 > 1 - 1/N$ as per
      Proposition~\ref{prop:const-rate-AVC}}.}
 \end{itemize}

\boldcomment{Phase 3: Generate the payload codeword}

\STATE \textbf{Encode $m$ using a code that corrects 
\emph{random} errors}:
\begin{itemize}\item 
  $P\gets \REC(m)$, \comment{ $\REC:\bit{R' N'}\to \bit{N'}$ is a code
    that corrects a $p+25\Lambda\eps$ fraction
    of \emph{$t$-wise independent} errors\full{, as per
      Proposition~\ref{prop:concat-codes}}. Here $R' =
    \tfrac{RN}{N'}$.}
\end{itemize}

\STATE \textbf{Expand the seeds} $s_\ctrlset, s_\Delta, s_\pi$ to get
a set $\ctrlset=\Samp(s_\ctrlset)$, offset $\Delta=\twise(s_\Delta)$, and permutation  $\pi=\knr(s_\pi)$.

\STATE \textbf{Scramble the payload codeword:}

\begin{itemize}
  \item $\pi^{-1}(P)\gets (\text{bits of $P$ permuted according to
    $\pi^{-1}$})$

  \item $Q \gets \pi^{-1}(P)\oplus \Delta$
  \item  Cut $Q$ into $n'$ blocks $B_1,...B_{n'}$ of length $\Lambda \log N$ bits.
\end{itemize}
\boldcomment{Phase 4: Interleave blocks of payload codeword and control
codeword}

\STATE \textbf{Interleave} control blocks $C_1,...,C_\ell$ with
payload blocks $B_1,...,B_{n'}$,
using control blocks in positions from $\ctrlset$ and payload blocks in
remaining positions.

\end{algorithmic}
\end{myalgorithm}
}} 
\end{figure}

 \begin{figure}[t!]
\fbox{\parbox{0.98\textwidth}{
\begin{myalgorithm}\label{alg:dec}
\mycaption{{\sc Decode}: On input $x$ of  length $N$:
}

\begin{algorithmic}[1] \setlength{\itemsep}{4pt}
\STATE Cut $x$ into $n'+\ell$ blocks $x_1,...,x_{n'+\ell}$ of length
$\Lambda \log(n)$ each.  

\STATE \textbf{Attempt to decode control blocks}: For
$i=1,...,n'+\ell$, do
\begin{itemize}
\item $\tilde F_i\gets \text{\sc SC-Decode}(x_i)$. \\
\comment{With high prob, non-control blocks are rejected (Lemma~\ref{lem:payloadblocks}), and control blocks are either correctly decoded or discarded (Lemma~\ref{lem:controlblocks}).}

\item If $\tilde F_i\neq \perp$, then
parse $\tilde F_i$ as $(\tilde\alpha_i,\tilde a_i)$, where
$\tilde\alpha_i, \tilde a_i \in \F$.
\end{itemize}






\STATE $(\tilde s_\ctrlset,\tilde s_{\Delta}, \tilde s_\pi)\gets\text{\sc
RS-Decode}\Big(\text{pairs $(\tilde\alpha_i,\tilde a_i)$ output
above}\Big)$.  
\\
 \comment{Control information is recovered w.h.p.\full{ (Lemma~\ref{lem:controlinfo}).}}

\STATE Expand the seeds $\tilde s_\ctrlset, \tilde s_\Delta, \tilde s_\pi$ to get
set $\tilde \ctrlset$, offset $\tilde \Delta$, and permutation  $\tilde \pi$. 

\STATE $\tilde Q \gets $ concatenation of blocks $x_i$ not in
$\tilde \ctrlset$
\full{\\
\comment{Fraction of errors in $\tilde Q$ is at
    most $p+O(\eps)$.}
}

\STATE $\tilde P \gets \pi(\tilde Q \oplus \tilde \Delta)$ 
 \comment{If control info is correct, then errors in $\tilde P$ are almost $t$-wise
     independent.}

\STATE $\tilde m \gets \text{\sc REC-Decode}(\tilde P)$ 

\full{\comment{Run the decoder from Proposition~\ref{prop:concat-codes}.}}

\end{algorithmic}
\end{myalgorithm}
}} 
\end{figure}

\full{\mypar{Starting the Proof of
    Theorem~\ref{thm:efficient-oc-code}}} First, note that the rate
$R$ of the overall code approaches the Shannon bound: $R$ is almost
equal to the rate $R'$ of the code $\REC$ used to encode the actual
message bits $m$, since the encoded control information has length
$O(\eps N)$. The code $\REC$ needs to correct a fraction $p+25\Lambda
\eps$ of $t$-wise independent errors, so we can pick $R' \ge
1-H(p)-O(\eps)$. Now the rate $R = \frac{R' N'}{N} = R'(1-24\Lambda
\eps) \ge 1-H(p)-O(\eps)$ (for small enough $\eps > 0$).

We now turn to the analysis of the decoder.  Fix a message $m \in
\bit{R\cdot N}$ and an error vector $e \in \bit{N}$ with Hamming
weight at most $pN$. Suppose that we run $\Enc$ on $m$ and coins
$\omega$ chosen {\em independently} of the pair $m,e$, and let $x =
\Enc(m;\omega)+e$. The decoder parses $x$ into blocks
$x_1,...,x_{n'+\ell}$ of length $\Lambda \log N$, corresponding to the
blocks output by the encoder.

The \focsfull{three}{four} lemmas below, proved in \focsfull{the full
  version,}{Section~\ref{sec:proofs-from-main},} show that the decoder
recovers the control information correctly with high
probability. We then \focsfull{sketch why the payload message is correctly
recovered.}{show that the payload message is correctly
recovered. The proof of the theorem is completed in
Section~\ref{sec:pf-of-main-thm}.}

The lemmas illuminate the roles of the main pseudorandom objects in
the construction. 
First, the sampler seed is used to ensure that errors are
not concentrated on the control blocks%
\focsfull{%
. We say a sampled set $\ctrlset$ is {\em good} for error vector $e$ if the fraction
  of control blocks with relative error rate at most $p+\eps$ is at
  least $\frac\eps2$. 
}{
, as captured in the next lemma:
\begin{defn}[Good sampler seeds]\label{def:goodsampler}
  A sampled set $\ctrlset$ is {\em good} for error vector $e$ if the fraction
  of control blocks with relative error rate at most $p+\eps$ is at
  least $\frac\eps2$. \closedef
\end{defn}}

\begin{lemma}[Good sampler lemma]
  \label{lem:sampling} For any error vector $e$ of relative weight at
  most $p$, with probability at least $1-\exp(-\Omega(\eps^3 N / \log
  N)$ over the choice of sampler seed $s_\ctrlset$, the set $\ctrlset$ is \emph{good} for
  $e$.
\end{lemma}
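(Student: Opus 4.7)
The plan is a two-step argument: first a deterministic counting bound on how many blocks of the codeword can have relative error rate exceeding $p+\eps$, and then the concentration property of the sampler to show that the randomly chosen control positions inherit this density with high probability.

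Define $B \subseteq [n]$ to be the set of block indices whose corresponding length-$\Lambda \log N$ chunk of $e$ has relative weight at most $p+\eps$. Since each block outside $B$ contributes strictly more than $(p+\eps)\Lambda \log N$ errors to the vector $e$ of total weight at most $pN$, the number of such ``bad'' blocks is at most $pN/((p+\eps)\Lambda \log N) = pn/(p+\eps)$. Hence the density $\mu := |B|/n$ satisfies
\[
\mu \;\ge\; 1 - \frac{p}{p+\eps} \;=\; \frac{\eps}{p+\eps} \;>\; \eps,
\]
using $p < 1/2$ and $\eps$ small enough that $p + \eps < 1$.

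Next I invoke the averaging-sampler guarantee of $\Samp$ from Section~\ref{sec:ingre-brief}: for any fixed set $B \subseteq [n]$ of density $\mu$ and any accuracy $\theta > 0$, the output $T = \Samp(s_T)$ of size $\ell = 24\eps N/\log N$ satisfies $|T \cap B|/|T| \ge \mu - \theta$ except with probability $\exp(-\Omega(\theta^2 \ell))$. Setting $\theta := \mu/2$ yields $|T \cap B|/|T| \ge \mu/2 \ge \eps/2$, which is precisely the goodness condition. The failure probability becomes
\[
\exp\bigl(-\Omega(\mu^2 \ell)\bigr) \;\le\; \exp\bigl(-\Omega(\eps^2 \cdot \eps N/\log N)\bigr) \;=\; \exp\bigl(-\Omega(\eps^3 N/\log N)\bigr),
\]
matching the claimed bound.

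No real obstacle is expected, since both steps are essentially textbook. The only points requiring minor attention are (i) interpreting $\Samp$ as producing block indices in $[n]$ rather than bit positions in $[N]$, so that the densities above are measured at the block level; and (ii) confirming that the averaging sampler in use provides the quadratic-in-$\theta$ concentration exponent needed so that the combination of the $\theta^2 = \Theta(\eps^2)$ factor and the $\ell = \Theta(\eps N/\log N)$ factor yields $\eps^3$ (rather than $\eps^2$) in the exponent. Both are standard consequences of the Vadhan construction cited in the ingredients.
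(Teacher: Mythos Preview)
Your proposal is correct and follows essentially the same two-step argument as the paper: a counting/averaging bound showing the density of ``low-error'' blocks is at least $\eps/(p+\eps)>\eps$, followed by the sampler's concentration guarantee with accuracy parameter $\theta=\Theta(\eps)$ to obtain the $\exp(-\Omega(\eps^3 N/\log N))$ failure bound. The only cosmetic difference is that the paper instantiates the sampler with $\mu=\eps$ and $\theta=\eps/2$ directly, whereas you set $\theta=\mu/2$; both choices yield the same conclusion.
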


Given a good sampler seed, the properties of the stochastic code $\SC$
guarantee that many control blocks are correctly
interpreted. Specifically:
\begin{lemma}[Control blocks lemma]\label{lem:controlblocks}
  For all $e,\ctrlset$ such that $\ctrlset$ is good for $e$, with probability at
  least $1-\exp(-\Omega(\eps^3N/\log N))$ over the random coins
  $(r_1,r_2,\dots,r_\ell)$ used by the $\ell$ $\SC$ encodings, we have:
  \begin{inparaenum}
    \item[(i)] The number of
   control blocks correctly decoded by {\sc SC-Decode} is at least $\frac{\eps\ell}{4}$, and
  \item[(ii)] The number of
    {\em erroneously} decoded control blocks is less than
    $\frac{\eps\ell}{24}$.
   \full{\\}
    (By erroneously decoded, we mean that {\sc SC-Decode} outputs neither
    $\perp$ nor the correct message.)
  \end{inparaenum}
\end{lemma}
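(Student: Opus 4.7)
The plan is to apply two Chernoff bounds, exploiting the mutual independence of the SC encoder coins $r_1,\ldots,r_\ell$ across blocks, together with two properties of the stochastic control code $\SC$ inherited from Theorem~\ref{thm:ld-to-avc-intro}: (a) for any block whose relative error rate is at most $p+\eps$, $\text{\sc SC-Decode}$ returns the correct symbol with probability at least $1-c_1/N^2 \ge 1-1/N$ over the choice of $r_i$; and (b) for \emph{any} block regardless of its error rate, $\text{\sc SC-Decode}$ outputs a wrong (neither $\perp$ nor the correct symbol) value with probability at most $1/N$, by virtue of the ``moreover'' clause (strongly decodable property) of the same theorem. These two facts are used completely pointwise in $e$ and $T$, so no randomness of the overall construction beyond the $r_i$'s enters the argument.

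For part (i), I would let $G\subseteq[\ell]$ index the ``good'' control blocks, i.e.\ those in which the adversary injects at most a $(p+\eps)$-fraction of errors; since $T$ is good for $e$, $|G|\ge \eps\ell/2$. For each $i\in G$ let $X_i$ be the indicator that block $i$ decodes correctly. By property (a) the family $\{X_i\}_{i\in G}$ consists of independent Bernoullis of mean at least $1-1/N$, so $\E[\sum_{i\in G}(1-X_i)] \le |G|/N$, which is far below $|G|/2$ for the regime of interest. A standard multiplicative Chernoff bound then yields $\Pr[\sum_{i\in G} X_i < |G|/2] \le \exp(-\Omega(|G|)) \le \exp(-\Omega(\eps^2 N/\log N))$. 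On the complement at least $|G|/2 \ge \eps\ell/4$ control blocks are decoded correctly, as claimed.

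For part (ii), I would let $Y_i$ for $i\in[\ell]$ be the indicator that block $i$ is erroneously decoded. By property (b), the $Y_i$ are independent with $\E[Y_i]\le 1/N$, so $\E[\sum_i Y_i] \le \ell/N$, which for $N$ sufficiently large (relative to $1/\eps$) is much smaller than $\eps\ell/24$. A second multiplicative Chernoff bound gives $\Pr[\sum_i Y_i \ge \eps\ell/24] \le \exp(-\Omega(\eps\ell)) = \exp(-\Omega(\eps^2 N/\log N))$. A union bound over the failure events of (i) and (ii) yields the overall probability bound; the stated exponent $\exp(-\Omega(\eps^3 N/\log N))$ follows (with the same constants or slightly worse) after tracking the dependence on $\eps$ in Chernoff.

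The only substantive obstacle is the availability of property (b): without the ability to report $\perp$ on overly-corrupted blocks, the bad-block decodings could produce many plausible-looking but wrong $(\alpha_i,a_i)$ pairs that would later poison the Reed--Solomon decoding. Fortunately this is precisely what the strongly-decodable guarantee of Theorem~\ref{thm:ld-to-avc-intro} provides, so the remainder of the argument reduces to two routine Chernoff calculations and a union bound; no properties of the sampler, permutation generator, offset, or payload code enter the proof.
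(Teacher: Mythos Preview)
Your proposal is correct and follows essentially the same approach as the paper: exploit the independence of the $r_i$'s, invoke the two per-block guarantees of $\SC$ (correct decoding on low-error blocks, $\perp$ on high-error blocks, each with failure probability $\le 1/N$), and finish with Chernoff. The only cosmetic difference is that the paper packages both failure modes into a single counter $\mathbf{Y}$ (low-error blocks not correctly decoded, plus high-error blocks not decoded as $\perp$) and applies one additive Chernoff bound $\Pr[\mathbf{Y}\ge \eps\ell/24]\le \exp(-\Omega(\eps^2\ell))$, from which both (i) and (ii) follow at once, whereas you run two separate Chernoff arguments; either way is fine.
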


The offset $\Delta$ is then used to ensure that payload blocks are not
mistaken for control blocks:

\begin{lemma}[Payload blocks lemma]\label{lem:payloadblocks}
  For all $m$, $e$, $s_\ctrlset$, $s_\pi$, with probability at least
  $1-2^{-\Omega(\eps^2 N / \log^2 N))}$ over the offset
  seed $s_\Delta$, the number of {\em payload} blocks incorrectly
  accepted as control blocks by $\text{\sc SC-Decode}$ is less than
  $\frac{\eps\ell}{24}$.
 \end{lemma}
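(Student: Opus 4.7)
The plan is to show that after XORing with the pseudorandom offset $\Delta$, each payload block looks essentially uniform from $\text{\sc SC-Decode}$'s perspective---so the fraction of strings in $\bit{\Lambda\log N}$ that the decoder ever accepts is small---and then to exploit the $t$-wise bit-level independence of $\Delta$ to concentrate the count of accepted payload blocks.

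First, I would bound the density of the \emph{accept set} $\mathcal{B}=\{y\in\bit{\Lambda\log N} : \text{\sc SC-Decode}(y)\ne\bot\}$. Any $y\in\mathcal{B}$ lies within Hamming distance $(p+\eps)\Lambda\log N$ of some codeword $\SC(A,r)$, and there are at most $2^{4\log N}=N^4$ such $(A,r)$ pairs (since $A$ and $r$ are both $2\log N$ bits). A standard volume bound yields
\[
\gamma \;:=\; \frac{|\mathcal{B}|}{2^{\Lambda\log N}} \;\le\; N^4\cdot 2^{-(1-H(p+\eps))\Lambda\log N} \;=\; N^{4-(1-H(p+\eps))\Lambda}.
\]
Choosing the expansion $c_0$ (and hence $\Lambda=2c_0$) sufficiently large makes $\gamma$ as small as we need; in particular we arrange $\gamma \le \eps^3/\Lambda$, so that $\gamma n' \ll \eps\ell/48$.

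Next, fix $m,e,s_T,s_\pi$ and take probabilities only over $s_\Delta$. The $j$-th payload block of $Q=\pi^{-1}(P)\oplus\Delta$ is $\pi^{-1}(P)_j\oplus\Delta_j$, where $\Delta_j$ denotes the $\Lambda\log N$ bits of $\Delta$ aligned with that block. After the (fixed) channel errors $e$ are applied, the received block becomes $\pi^{-1}(P)_j\oplus\Delta_j\oplus e_j$, and the event $X_j$ that payload block $j$ is accepted by $\text{\sc SC-Decode}$ is precisely the event that $\Delta_j$ lies in a fixed affine shift of $\mathcal{B}$, a set of density $\gamma$. Since $\twise$ produces a $t$-wise independent bit string with $t=\Theta(\eps^2 N/\log N)$, the chunks $\Delta_1,\ldots,\Delta_{n'}$ are jointly $t'$-wise independent for $t'=t/(\Lambda\log N)=\Omega(\eps^2 N/\log^2 N)$, and each $\Delta_j$ is marginally uniform. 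Hence the indicators $X_1,\ldots,X_{n'}$ are $t'$-wise independent, each with mean exactly $\gamma$.

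Finally, $\expec[\sum_j X_j]=\gamma n' \le \gamma\cdot N/(\Lambda\log N) \ll \eps\ell/24$. Applying a Bellare--Rompel-style tail bound for sums of $k$-wise independent Bernoullis with $k=t'$ and deviation $\Theta(\eps^2 N/\log N)$ gives
\[
\Pr_{s_\Delta}\!\Bigl[\textstyle\sum_{j=1}^{n'} X_j \,\ge\, \tfrac{\eps\ell}{24}\Bigr] \;\le\; 2^{-\Omega(t')} \;=\; 2^{-\Omega(\eps^2 N/\log^2 N)},
\]
which is exactly the claimed failure probability. The main obstacle is just parameter calibration: we need $c_0$ large enough that $\gamma$ is comfortably below $\eps^3/\Lambda$ (so the expected number of spurious acceptances is negligible relative to $\eps\ell/24$), and $t$ large enough that block-level independence $t'\ge\Omega(\eps^2 N/\log^2 N)$ supports the tail bound with the claimed exponent; both are easily within the budget of Algorithm~\ref{alg:enc} since $\twise$ uses only $O(t\log N)=O(\eps^2 N)$ bits of seed, comfortably within the allotted $\eps^2 N$ bits for $s_\Delta$.
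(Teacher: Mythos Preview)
Your proposal is correct and follows essentially the same route as the paper: show each payload block, once XORed with its chunk of $\Delta$, is uniformly distributed and hence accepted by $\text{\sc SC-Decode}$ only with small probability, then use the block-level $t'$-wise independence (with $t'=\Theta(\eps^2 N/\log^2 N)$) together with a Bellare--Rompel tail bound to concentrate the count of spurious acceptances. The only cosmetic difference is that the paper invokes the stated property of $\SC$ (on a uniformly random input, $\text{\sc SC-Decode}$ outputs $\bot$ with probability $\ge 1-c_1/N^2$) rather than redoing the volume count for $|\mathcal{B}|$; your inlined volume argument is exactly how that property is proved, and in fact already yields $\gamma\le N^{-2}$ with the given $\Lambda$, so no extra calibration of $c_0$ is needed.
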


The two previous lemmas imply that the Reed-Solomon decoder will, with
high probability, be able to recover the control information.
\full{Specifically:
\begin{lemma}[Control Information Lemma]\label{lem:controlinfo}
  For any $m$ and $e$, with probability $1-2^{-\Omega(\eps^2 N/\log^2 N)}$ over the choice of the control information
  and the coins of $\SC$, the control information is correctly
  recovered, that is $ (\tilde s_\ctrlset, \tilde s_\Delta, \tilde s_\pi)=(s_\ctrlset,  s_\Delta, s_\pi)$.
\end{lemma}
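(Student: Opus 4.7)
The plan is to combine the three preceding lemmas (Good Sampler, Control Blocks, Payload Blocks) with the decoding guarantee assumed for $\RS$ in Section~\ref{sec:ingre-brief}, via a clean union bound. The three lemmas are stated over three independent sources of randomness inside the encoder---the sampler seed $s_T$, the SC encoder coins $(r_1,\dots,r_\ell)$, and the offset seed $s_\Delta$---so the corresponding good events may be intersected without subtle conditioning issues. The Reed-Solomon recovery guarantee in Section~\ref{sec:ingre-brief} says that $\text{\sc RS-Decode}$ succeeds whenever it is supplied with at least $\eps\ell/4$ correct $(\alpha,a)$-pairs and at most $\eps\ell/24$ incorrect pairs, so the game is just to show these counts are met with high probability.

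Concretely, I would proceed as follows. First, apply the Good Sampler Lemma to the fixed error vector $e$: except with probability $\exp(-\Omega(\eps^3 N/\log N))$ over $s_T$, the sampled placement $T$ is good for $e$, meaning at least a $\tfrac{\eps}{2}$ fraction of the $\ell$ control blocks suffer at most a $p+\eps$ fraction of channel errors. Conditioned on a good $T$, apply the Control Blocks Lemma: except with probability $\exp(-\Omega(\eps^3 N/\log N))$ over the SC-encoder coins $(r_1,\dots,r_\ell)$, SC-Decode correctly recovers at least $\eps\ell/4$ of the true Reed-Solomon symbols $(\alpha_i,a_i)$ while erroneously decoding fewer than $\eps\ell/24$ control blocks to incorrect non-$\bot$ values. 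Finally, apply the Payload Blocks Lemma, which depends only on $s_\Delta$ (and hence is independent of the above): except with probability $2^{-\Omega(\eps^2 N/\log^2 N)}$ over $s_\Delta$, fewer than $\eps\ell/24$ payload blocks are incorrectly accepted as control blocks by SC-Decode.

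A union bound over these three failure events leaves overall failure probability $2^{-\Omega(\eps^2 N/\log^2 N)}$, which is the bound claimed by the lemma (this exponent dominates because, for constant $\eps$, $\eps^3 N/\log N$ grows faster than $\eps^2 N/\log^2 N$). Under the conjunction of the three good events, the input to $\text{\sc RS-Decode}$ consists of at least $\eps\ell/4$ genuine symbols together with at most $\eps\ell/24 + \eps\ell/24 = \eps\ell/12$ spurious symbols coming from misdecoded control blocks and misclassified payload blocks. By the $\RS$ recovery guarantee, $\text{\sc RS-Decode}$ outputs exactly the original message $\omega = (s_T,s_\Delta,s_\pi)$. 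There is no real technical obstacle here---the three earlier lemmas do essentially all the work---the only place to slip is the bookkeeping: the constants $\tfrac{\eps}{4}$ and $\tfrac{\eps}{24}$ in Lemmas~\ref{lem:controlblocks} and~\ref{lem:payloadblocks} were chosen precisely so that the combined erroneous count $\eps\ell/12$ stays within what the Reed-Solomon decoder can tolerate given $\eps\ell/4$ correct evaluations, so one just needs to verify these inequalities line up and then read off the claimed probability.
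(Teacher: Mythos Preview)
Your approach matches the paper's: union-bound Lemmas~\ref{lem:sampling}, \ref{lem:controlblocks}, and \ref{lem:payloadblocks}, then verify the Reed--Solomon decoding condition. One slip in the bookkeeping: the $\RS$ summary you quote from Section~\ref{sec:ingre-brief} tolerates only $\eps\ell/24$ incorrect symbols, whereas your spurious-pair count is $\eps\ell/24 + \eps\ell/24 = \eps\ell/12$, so that summary as stated does not cover you. The paper closes this by invoking the full unique-decoding bound of Proposition~\ref{prop:RS}: decoding succeeds whenever (number correct) $-$ (number incorrect) exceeds the degree $d^*$, and since $d^* < \eps\ell/8$ one checks $\eps\ell/4 - 2\cdot\eps\ell/24 = \eps\ell/6 > \eps\ell/8$. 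You should cite that proposition rather than the informal bullet.
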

}

\ifnum\focs=1 
It remains to analyze the final decoding process. First, suppose that
the correct
control information $\omega=(s_\pi,s_\Delta,s_\ctrlset)$ is handed directly
to the decoder--- \emph{i.e.}, assume we are in the ``shared randomness'' setting.
Fix $m$, $e$, and $s_\ctrlset$, and let
$e_Q$ be the restriction of the error $e$ to the payload codeword. The relative weight of $e_Q$ is at most
$\frac{pN}{N'} \leq p(1+25\Lambda\eps)$ for sufficiently
small $\eps$.
Consider the string $\tilde P $ that is input the the $\REC$
decoder. We can write $\tilde P = \tilde\pi(\tilde Q \oplus \tilde
\Delta) = \pi(Q\oplus e_Q \oplus \Delta)  = P\oplus \pi(e_Q)$.
Since $s_\pi$ is selected independently from $\ctrlset$,
the permutation $\pi$ is independent of the payload error $e_Q$.
and so the input to $\REC$ is corrupted by at most
  $p(1+25\Lambda\eps)N'$ errors which are $t$-wise independent, for
  appropriate $t$. By the properties of $\REC$,
  with probability at least $
  1-e^{-\Omega(\eps^4 N/\log N)}$, the message $m$ is correctly
  recovered by $\text{\sc Decode}$.

  The real error probability is slightly higher since we must
  condition on correct recovery of the control information; see the
  full version for details.



\else 
\begin{remark}
  It would be interesting to achieve an error probability of
  $2^{-\Omega_\eps(N)}$, i.e., a positive ``error exponent,'' in
  Theorem~\ref{thm:efficient-oc-code} instead of the
  $2^{-\Omega_\eps(N/\log^2 N)}$ bound we get. A more careful analysis
  (perhaps one that works with {\em almost} $t'$-wise independent
  offset $\Delta$) can probably improve our error probability to
  $2^{-\Omega_\eps(N/\log N)}$, but going further using our approach
  seems difficult.  The existential result due to Csisz\'{a}r and
  Narayan~\cite{CN88} achieves a positive error exponent for all rates
  less than capacity, as does our existence proof using list decoding in Section~\ref{subsec:ld-avc}.
\end{remark}

\begin{remark}
  A slight modification of our construction give codes for the
  ``average error criterion,'' in which the code is deterministic but
  the message is assumed to be unknown to the channel and the goal is
  to ensure that for every error vector most messages are correctly
  decoded; see Theorem~\ref{thm:efficient-avc-code} in
  Appendix~\ref{app:codes-avg-error}.
\end{remark}

\subsection{Proofs of Lemmas used in
  Theorem~\ref{thm:efficient-oc-code}}
\label{sec:proofs-from-main}

   \begin{proof}[of Lemma~\ref{lem:sampling}]
     Let $B\subset [n]=[n'+\ell]$ be the set of blocks that contain a
     $(p+\eps)$ or smaller fraction of errors.  We first prove that
     $B$ must occupy at least an $\eps$ fraction of total number of
     blocks: to see why, let $\gamma$ be the proportion of blocks
     which have error rate at most $(p+\eps)$. The total fraction of
     errors in $x$ is then at least $(1-\gamma)(p+\eps)$. Since this fraction is at most $p$ by assumption, we must have $1-\gamma \le p/(p+\eps)$. So $\gamma \ge \eps/(p+\eps) > \eps$.

     Next, we show that the number of {\em control blocks} that have error
     rate at most $p+\eps$ cannot be too small.  The error $e$ is
     fixed before the encoding algorithm is run, and so the sampler
     seed $s_\ctrlset$ is chosen independently of the set $B$. Thus, the
     fraction of control blocks in $B$ will be roughly
     $\eps$. Specifically, we can apply Proposition~\ref{prop:sampler}
     with $\mu=\eps$ (since $B$ occupies at least an $\eps$ fraction
     of the set of blocks), $\theta=\eps/2$ and $\sigma=\eps^2N$. We
     get that the error probability $\gamma$ is
     $\exp(-\Omega(\theta^2\ell)) = \exp(-\Omega(\eps^3N/\log
     N)$. (Note that for constant $\eps$, the seed length $\sigma =
     \eps^2N \gg \log N + \ell \log(1/\eps)$ is large enough for the
     proposition to apply.)
\end{proof}

\begin{proof}[of Lemma~\ref{lem:controlblocks}] 
  Fix $e$ and the sampled set $\ctrlset$ which is good for $e$.  Consider a
  particular received block $x_i$ that corresponds to control block
  $j$, that is, $x_i = C_j + e_i$.  The key observation is that the
  error vector $e_i$ depends on $e$ and the sampler seed $\ctrlset$, but it
  is {\em independent} of the randomness used by $\SC$ to generate
  $C_j$. Given this observation, we can apply
  Proposition~\ref{prop:const-rate-AVC} directly:

  \begin{enumerate}
  \item[(a)] If block $i$ has error rate at most $ p+\eps$, then $\text{\sc
      SC-Decode}$ decodes correctly with probability at least
    $1-c_1/N^2 \ge 1- 1/N$ over the coins of $\SC$.
  \item[(b)] If block $i$ has error rate more than $p+\eps$, then
    $\text{\sc SC-Decode}$ outputs $\perp$ with probability at least $1-c_1/N^2 \ge 1-1/N$
    over the coins of $\SC$.
  \end{enumerate}

  Note that in both statements (a) and (b), the probability need only be
  taken over the coins of $\SC$.

  Consider $\bf Y$, the the number of control blocks that either (i) have ``low'' error rate ($\leq
  p+\eps$) yet are not correctly decoded, or (ii) have high error
  rate, and are not decoded as $\perp$. Because statements
  (a) and (b) above depend only on the coins of $\SC$, and these coins
  are chosen independently in each block, the variable  $\bf Y$
  is statistically dominated by  a sum of independent Bernoulli
  variables with probability $1/N$ of being 1. Thus $E[{\bf Y}] \leq
  \ell/N <1$. By a standard additive Chernoff bound, the probability
  that $Y$ exceeds $\eps\ell/24$ is at most
  $\exp(-\Omega(\eps^2\ell))$. The bound on ${\bf Y}$ implies both the
  bounds in the lemma.
\end{proof}

  \begin{proof}[of Lemma~\ref{lem:payloadblocks}]
    Consider a block $x_i $ that corresponds to payload block $j$,
    that is, $x_i = B_j + e_i$. Fix $e$, $s_\ctrlset$, and $s_\pi$. The
    offset $\Delta$ is independent of these, and so we may write $x_i
    = y_i + \Delta_i$, where $y_i$ is fixed independently of
    $\Delta_i$. Since $\Delta$ is a $t'$-wise independent string with
    $t' = \Omega(\eps^2N/\log N)$ much greater than the size
    $\Lambda\log N$ of each block, the string $\Delta_i$ is uniformly
    random in $\bit{\Lambda\log N}$. Hence, so is $x_i$. By
    Proposition~\ref{prop:const-rate-AVC} we know that on input a
    random string, $\text{\sc SC-Decode}$ outputs $\perp$ with
    probability at least $1-c_1/N^2\ge 1-1/N$

    Moreover, the $t'$-wise independence of the \emph{bits} of
    $\Delta$ implies $\frac {t'}{\Lambda \log N}$-wise independence of
    the \emph{blocks} of $\Delta$. Define $t'_{blocks} = \min
    \{\frac{t'}{\Lambda \log N}, \frac{\eps \ell}{96}\}$. Note that
    $\Omega\bigl(\tfrac{\eps^2 N}{\log^2 N}\bigr) \le t'_{blocks} \le \tfrac{\eps
      \ell}{96}$. The decisions made by $\text{\sc SC-Decode}$ on
    payload blocks are $t'_{blocks}$-wise independent.  Let ${\bf Z}$
    denote the number of payload blocks that are incorrectly accepted
    as control blocks by {\sc SC-Decode}. We have $E[{\bf Z}] \le
    \frac{n'}{N} \le \eps \ell/48$ (for large enough
    $N$). 

    We can apply a concentration bound of Bellare and
    Rompel~\cite[Lemma 2.3]{BR94} using $t=t'_{blocks}$, $\mu =E[{\bf
      Z}] \le\tfrac{\eps \ell}{48}$, $A = \frac{\eps \ell}{48}$, to
    obtain the bound
    \[ 
    \Pr[{\bf Z} \geq \tfrac{\eps\ell }{24}] \leq
    8\left(\frac{t'_{blocks}\cdot \mu  +
        (t'_{blocks})^2}{(\eps\ell/48)^2}\right)^{t'_{blocks}/2} \le 
    (\log N)^{-\Omega(t'_{blocks})} \le e^{-\Omega(\eps^2 N \log \log N /
      \log^2 N)} \ . 
    \]
This bound implies the lemma statement.
\end{proof}

\begin{proof}[of Lemma~\ref{lem:controlinfo}]
  Suppose the events of Lemmas \ref{lem:controlblocks} and
  \ref{lem:payloadblocks} occur, that is, for at least $\eps \ell/4$ of
  the control blocks the recovered value $\tilde{F}_i$ is correct, at most $\eps \ell/24$
  of the control blocks are erroneously decoded, and at most
  $\eps\ell/24$ of the payload blocks are mistaken for control blocks.

  Because the blocks of the control information come with the
  (possibly incorrect) evaluation points $\tilde \alpha_i$, we are
  effectively given a codeword in the Reed-Solomon code defined for
  the related point set $\{\tilde \alpha_i\}$.  Now, the degree of the
  polynomial used for the original RS encoding is
  $d^*=|\omega|/\log(N)-1 < 3 \eps^2 N/\log N = \eps \ell /8$.  Of
  the pairs $(\tilde \alpha_i,\tilde a_i)$ decoded by {\sc SC-Decode},
  we know at least $\frac{\eps \ell}{4}$ are correct (these pairs will
  be distinct), and at most $2 \cdot \frac{\eps \ell}{24}$ are
  incorrect (some of these pairs may occur more than once, or even
  collide with one of the correct). If we eliminate any duplicate
  pairs and then run the decoding algorithm from
  Proposition~\ref{prop:RS}, the control information $\omega$ will be
  correctly recovered as long as the number of correct symbols exceeds
  the number of wrong symbols by at least $d^*+1$. This requirement is
  met if $\frac{\eps \ell }{4} - 2\times \frac{\eps\ell}{24} \ge
  d^*+1$. This is indeed the case since $d^* <
  \eps\ell/8$.

  Taking a union bound over the events of Lemmas
  \ref{lem:controlblocks} and \ref{lem:payloadblocks}, we get that the
  probability that the control information is correctly decoded is at
  least $1-\exp(-\Omega(\eps^2 N / \log^2 N))$, as desired.
\end{proof}

\subsection{Completing the Proof of Main Theorem~\ref{thm:efficient-oc-code}}
\label{sec:pf-of-main-thm}
\begin{proof}[of Theorem~\ref{thm:efficient-oc-code}]
  We will first prove that the decoding of the payload
  codeword succeeds assuming the correct control information
  $\omega=(s_\pi,s_\Delta,s_\ctrlset)$ is handed directly to the decoder,
  i.e., in the ``shared randomness'' setting. We will then account for the fact that we must
  condition on the correct recovery of the control information
  $\omega$ by the first stage of the decoder.

  Fix a message $m$, error vector $e$, and sampler seed $s_\ctrlset$, and let
  $e_Q$ be the restriction of $e$ to the payload codeword, \emph{i.e.},
  blocks not in $\ctrlset$. The relative weight of $e_Q$ is at most
  $\frac{pN}{N'} = p\frac{N'+\ell \Lambda \log N}{N'} = p (1 +
  24 \eps\Lambda \frac{N}{N'})\leq p(1+25\Lambda\eps)$ (for
  sufficiently small $\eps$). 

  Now since $s_\pi$ is selected independently from $\ctrlset$, 
the
  permutation $\pi$ is independent of the payload error $e_Q$.
  Consider the string $\tilde P $ that is input the the $\REC$
  decoder. We can write $\tilde P
  = \tilde\pi(\tilde Q \oplus \tilde \Delta) = \pi(Q\oplus e_Q \oplus \Delta)
  $. Because a permutation of the bit positions is a linear
  permutation of $\mathbb{Z}_2^{N'}$, we get $\tilde P 
  = \pi(Q+\Delta) \oplus \pi(e_Q) = P\oplus \pi(e_Q)$. 

  Thus the input to $\REC$ is corrupted by a fraction of at most
  $p(1+25\Lambda\eps)$ errors which are $t$-wise independent, in the
  sense of Proposition~\ref{prop:concat-codes} \cite{smith07}. With probability at least $1-e^{-\Omega(\eps^2 t)} =
  1-e^{-\Omega(\eps^4 N/\log N)}$, the message $m$ is correctly
  recovered by $\text{\sc Decode}$.

In the actual decoding, the control information is not handed
  directly to the decoder. Let $\tilde{\omega}$ be the candidate control information recovered by the decoder (in Step 8 of the algorithm). 
The above suite of lemmas (Lemmas \ref{lem:sampling},
  \ref{lem:controlblocks}, \ref{lem:payloadblocks}, and
  \ref{lem:controlinfo}) show that the control information is
  correctly recovered, i.e., $\tilde{\omega}=\omega$, with probability at least 
 $\exp(-\Omega(\eps^2 N / \log^2
  N)))$.

The overall probability of success is given by 
\[ \Pr_\omega [ \mbox{payload decoding succeeds with control information  $\tilde{\omega}$} ] \]
which is at least 
\begin{align*}
\Pr_\omega [ \tilde{\omega} = \omega ~ \wedge ~ \mbox{payload decoding
  succeeds with control information  $\tilde{\omega}$} ] \\
= \Pr_\omega [ \tilde{\omega} = \omega ~ \wedge ~ \mbox{payload decoding succeeds with control information  $\omega$} ] \\
\ge  1 - \Pr_\omega [ \tilde{\omega} \neq \omega ] 
 -  \Pr_\omega [ \mbox{payload decoding succeeds given $\omega$} ] \\
\ge 1 - \exp(-\Omega(\eps^2 N / \log^2 N)) - \exp(-\Omega(\eps^4 N/\log N)) \ .
\end{align*}
 Because $\eps$ is a constant relative to $\log N$, it is the former
  probability that dominates. This completes the analysis of the
  decoder and the proof of Theorem~\ref{thm:efficient-oc-code}.
\end{proof}

\fi 

\section{Capacity-achieving codes for time-bounded channels}
\label{sec:time-bounded}
In this section, we outline a Monte Carlo algorithm that, for any
desired error fraction $p \in (0,1/2)$, produces a code of rate close to $1-H(p)$ which can be
efficiently {\em list decoded} from errors caused by an arbitrary
randomized polynomial-time channel that corrupts at most a fraction
$p$ of symbols with high probability. Recall that for $p > 1/4$,
resorting to list decoding is necessary even for very simple (constant
space) channels.

We will use the same high level approach from our construction for the 
additive errors case, with some components changed. The main
difference is that the codeword will be pseudorandom to bounded
distinguishers, allowing us to ``reduce'' to the case of oblivious
errors. (In fact, we will show that the errors are \emph{indistinguishable
from} oblivious errors, which will turn out to suffice). We will make
repeated use of the following definition:

\begin{defn}[Indistinguishability]
  For a given (possibly randomized) Boolean function $\A$ on some
  domain $D$ and two random variables $X,Y$ taking values in $D$, we
  write $X \close{\A}{\eta} Y$ if
  \[ |\Pr(\A(X)=1)-\Pr(\A(Y)=1)| \leq \eta \ . \]

We write $X \close{\text{time }\T}{\eta} Y$ to
indicate that $X\close{\A}{\eta}Y$ for all
circuits of size at most $\T$. 
\end{defn}

\begin{defn}[Pseudorandom generators]
A map $G : \{0,1\}^s \rightarrow \{0,1\}^n$ is said to be $(\T,\eps)$-pseudorandom if 
\[ G(U_s) \close{\text{time } \T}{\eps} U_b \]
where $U_m$ denotes the uniform distribution on $\{0,1\}^m$. 
\end{defn}

\vnote{Added defn of $(\T,\eps)$-pseudorandom}

We begin the section with an overview of the code construction. We
then develop several key technical results: a construction of small
``pseudorandom codes'', a useful intermediate result, the Hiding
Lemma, and finally we show how to use these to analyze the decoder's
performance.

\subsection{Code construction and ingredients}
\label{sec:code-const-time-bounded}

\mypar{Parameters} Input parameters of the construction: $N,p,\spcbound,\eps$, where
\begin{enumerate}
\itemsep=0ex
\item $N$ is
  the block length of the final code, 
\item $pN$ is the bound on the number of errors
  introduced ($0 < p  < 1/2$) by the channel w.h.p.
\item $\T_0$ is a bound on the circuit size (think ``running time'') of
  the channel. We will eventually set $\T_0=N^c$ for a constant $c>1$,
  though most results  hold for any $\T_0>N$.
\item $\eps$ is a measure of how far the rate is from the optimal
  bound of $1-H(p)$ (that is, the rate must be at least
  $1-H(p)-\eps$). We will assume $0 < 2\eps < 1/2-p$. 
\end{enumerate}

\mypar{The seeds/control information}
The control information $\omega$  consists of three randomly chosen strings
$s_\pi,s_\ctrlset,s_\Gamma$ 
where $s_\pi,s_\ctrlset$ are as in the additive errors case. We take the
lengths of $s_\pi,s_\ctrlset$ be $\zeta N$ where $\zeta=\zeta(p,\eps)$ will be
chosen small enough compared to $\eps$.  \anote{Can we say a specific bound, maybe $\eps^{ 5}$?}

The third string $s_\Gamma \in \{0,1\}^{\gamma N}$ is the seed of a
pseudorandom generator $\GEN$ that outputs $N$ bits and  fools
circuits of size (roughly) $\T_0$. (A shorter seed would suffice here,
but we make all seeds the same length for simplicity -- see below.)
%
The offset $\Gamma \gets
\GEN(s_\Gamma)$ will be used to fool the polynomial-time channel. We
do not need to add the $t'$-wise independent offset $\Delta$ as we did
in the additive errors case.

\mypar{Encoding the message}
The payload codeword encoding the message $m$ will be
$\pi^{-1}(\mathsf{REC}(m)) \ \oplus \Gamma$, which is the same as the
encoding for the additive channel, with the offset $\Gamma$ added to
break dependencies in the time-bounded channel instead of the
$t'$-wise independent offset $\Delta$. 

The code $\REC$ is the same as the code for the case of additive
errors.  We will denote by $\beta_{\REC}$ the maximum, over fixed
error patterns $e$, of the probability, over permutations $\pi$, that
$\REC$ does not correctly decode the pattern $\pi(e)$. As noted in 
Section~\ref{sec:ingre-brief}, $\beta_\REC\leq 2^{-\Omega(\eps^2
  \zeta N/\log N)}$.
%
We will need the following
additional property of $\REC$: there is a circuit of size $O_{\eps}(N)$ that
takes as input an error pattern $e$, permutation $\pi$ and set of
control positions $V$, and checks whether or not $\REC$ will decode
the error pattern $\pi(e)$ (restricted to positions outside of $V$)
correctly. (This circuit works by counting the number of blocks of the
concatenated code which the inner code decodes incorrectly.)

For the offset $\Gamma$, we need a pseudorandom generator $\polyprg$
that is computable in time $\poly(N)$ and secure against circuits of
size $T$ with polynomially small error, with seed length at most
$\gamma N$. Such generators can be constructed in a Monte Carlo
fashion (a random function from $c\log N$ to $N$ bits will do, for a
large enough constant $c$):

\begin{prop}[Folklore (also follows from Proposition~\ref{prop:inner-time-bounded})]
  For every constant $\zeta>0$ and polynomial $\T(N)\geq N$, for
  sufficiently large $N$ there exists a poly-time Monte Carlo construction of a
  polynomial-time computable function $\polyprg$ from $\zeta N$ to $N$
  bits that is $(\T,\frac 1 \T)$ pseudorandom with probability at
  least $1-1/\T$.
\end{prop}

This construction can be made explicit assuming either that one-way
functions exist~\cite{yao82,HILL}, or that $\mathsf{E} \not\subseteq
\mathsf{SIZE}(2^{\eps_0 n})$ for some absolute constant $\eps_0 > 0$
\cite{IW97} (where $\mathsf{E} = \mathrm{TIME}(2^{O(n)})$ and
$\mathsf{SIZE}(2^{\eps_0 n})$ denotes the class of languages that have
size $O(2^{\eps_0 n})$ circuits). For \emph{space-bounded} (as opposed to time-bounded)
distinguishers,
there is even an explicit construction that makes no
assumptions~\cite{nisan}. However, as noted by one reviewer, we
require a Monte Carlo algorithm to construct pseudorandom codes (as required by Proposition \ref{prop:inner-time-bounded} below), even
for space-bounded channels. Therefore, we use a Monte Carlo construction of
$\polyprg$ and get a single statement covering time- and space-bounded
channels.

\mypar{Encoding the seeds}
The control information (consisting of the seeds $s_\pi,s_\ctrlset,s_\Gamma$)
will be encoded by a similar structure to the solution for the
additive channel: a Reed-Solomon code of rate $R^{\RS} =
R^{\RS}(p,\eps)$ concatenated with an inner stochastic code. But the
stochastic code $\SC$ (of Proposition \ref{prop:const-rate-AVC}) will
now be replaced by a {\em pseudorandom code} $\PRC$ which satisfies
two requirements: first, it has
good list decoding properties and, second,  the (stochastic) encoding of
any given message  is indistinguishable from a random
string by a randomized time-bounded channel.
The construction of the necessary stochastic code is guaranteed by the following
lemma.

\begin{prop}[Inner control codes exist]
\label{prop:inner-time-bounded}
For some fixed positive integer $\Lambda_0$ the following holds.  For
all $\delta$, $0 < \delta < 1/2$ 
and polynomials $\T=\T(N)\geq N$, 
for sufficiently large $N$ there exist $R = R(\delta) \ge
(1/2-\delta)^{\Omega(1)} >0$ and a positive integer $L=L(\delta) \le
1/(1/2-\delta)^{O(1)}$ such that there exists a $\poly(\T)$ time
randomized Monte Carlo algorithm that
outputs a  stochastic code 
 with block length
$b = \Lambda_0
\log(\T)$ and rate $k/b\geq R$ 
that is 
\begin{itemize}
\item  $(\delta,L)$-list decodable: for every $y\in
  \bit{b}$, there are at most $L$ pairs $(m,r)$ such that $E(m,r)$
  is within Hamming distance $\delta b$ of $y$;
\item $(\T,1/\T)$-pseudorandom: For every $m \in \{0,1\}^k$, we have
  \( E(m,U_s) \close{\text{time }\T}{1/\T} U_b  \). 
\end{itemize}

Further, there exists a deterministic decoding procedure running in
time $\poly(\T)$ that, given a string $y\in \bit{b}$, recovers the
complete list of at most $L$ pairs $(m,r)$ whose encodings $E(m,r)$
are within Hamming distance at most $\delta b$ from $y$.
\end{prop}

Proposition~\ref{prop:inner-time-bounded} is proved in Section~\ref{sec:inner-code-time-bounded}.
An interesting direction for future work is the design of \emph{explicit}
pseudorandom stochastic codes along the lines of
Proposition~\ref{prop:inner-time-bounded}. See Section~\ref{sec:openq}.

\mypar{Full Code}
To construct the final code, we will apply the Monte Carlo
constructions of the previous section with time $\T_2 = \T_0 +
O(N\max\{\log N, 2^{\poly(1/\eps)}\})$ (the exact value of $T_2$
will be clear from the analysis). For constant $\eps$, it suffices to
take $\T_2 = 2T_0$ when $\T_0$is a large enough polynomial in
$N$. This means the control blocks have length $\Lambda_0\log T_2
=\Theta(\log T_0)$. The control blocks occupy an $\eps$ fraction of
the whole codeword, which means the number of real control blocks is
$n_\ctrl =\Theta(\frac {\eps N}{\log T_0})$. 

As in the additive errors
case, the control blocks will be interspersed with the payload blocks
at locations specified by the sampler's output on $s_\ctrlset$.

\mypar{Rate of the code} 
The code encoding the control
information is of some small constant rate $R_\ctrl(p,\eps)$, but the
control information consists only of $O(\zeta N)$ bits. Given $\eps$,
we can select $\zeta$ small enough so that the control portion of the
codeword only adds $\eps N/2$ bits to the overall encoding. The rate of
$\mathsf{REC}$ is at least $(1-\eps/10)(1-H(p)-\eps/10) \ge
1-H(p)-\eps/5$. So the rate of the overall stochastic code is at least
$1-H(p)-\eps$ as desired.

\subsection{List decoding algorithm for full encoding}
\label{sec:dec-algo-time-bounded}

The decoding algorithm for the full encoding will be similar to the
additive case with the principal difference being that the inner
stochastic codes will be decoded using the procedure guaranteed in
Proposition~\ref{prop:inner-time-bounded}. For each block, we obtain a
list of $L$ possible pairs of the form $(\alpha_i, a_i)$.  This set of
(at most $N L$) pairs is the fed into the polynomial time Reed-Solomon
list decoding algorithm (guaranteed by
Proposition~\ref{prop:RS-list}), which returns a list of ${\rm
  poly}(1/\eps)$ values for the control information. This comprises
the {\em first phase} of the decoder.

Once a list of control vectors is recovered, the {\em second phase} of the
decoder will run the decoding algorithm for $\REC$ for each 
of these choices of the control information and recover a list of
possible messages.

The steps to decode each of inner stochastic codes takes time
$\poly(\T)$ and decoding the Reed-Solomon code as well as $\REC$ takes
time polynomial in $N$. So the overall run time is polynomial in $N$
and $\T$.

\begin{theorem}
\label{thm:decoding-time-bounded}
Let $\channel_{\T_0}$ be an arbitrary randomized time-${\T_0}$ channel
on $N$ input bits that is $pN$-bounded.
Consider the code construction described in Section
\ref{sec:code-const-time-bounded}.

The resulting code has rate at least $1-H(p)-\eps$. For
every message $m$, with high probability over the choice of control
information $s_\pi,s_\ctrlset,s_\Gamma$, the coins of $\PRC$ and the
coins of
$\channel_{\T_0}$, the list output by the  decoding
algorithm has size $\poly(1/\eps)$ and includes the message $m$ with probability at least
\[ 1 - O(N/{\T_0})\ . \]
The running time of the decoding algorithm is polynomial in $N$ and
${\T_0}$.
\end{theorem}

The novelty compared to the additive errors case is in the analysis of
the decoder, which is more subtle since we have to deal with a much
more powerful channel. The remainder of this section deals with this
analysis, which will establish the validity of
Theorem~\ref{thm:decoding-time-bounded}.

\subsection{Monte Carlo Constructions of Pseudorandom Codes}
\label{sec:inner-code-time-bounded}

\begin{proof}[of Proposition~\ref{prop:inner-time-bounded}]
The codes we design have a specific structure: 
  A binary stochastic code with encoding map $E$ where $E: \{0,1\}^k
  \times \{0,1\}^s \rightarrow \{0,1\}^b$ is said to be {\em
    decomposable} if there exist functions $E_1 : \{0,1\}^k
  \rightarrow \{0,1\}^b$ and $E_2 : \{0,1\}^s \rightarrow \{0,1\}^b$
  such that $E(x,y) = E_1(x) \oplus E_2(y)$ for every $x,y$. We say
  that such a encoding decomposes as $E=[E_1,E_2]$.

  The existence will be shown by a probabilistic construction with a
  decomposable encoding $E(m,r) = C(m) \oplus \lsg(r)$ where $C$ will
  be (the encoding map of) a linear list-decodable code, and $\lsg$
  will be a generator that fools size $\T$ circuits,
  obtained by picking $\lsg(r) \in \{0,1\}^b$ independently and
  uniformly at random for each seed $r$. Here $b = \Lambda_0 \log(\T)$ for
  a large enough absolute constant $\Lambda_0$ as in the statement of
  the Proposition. Note that the construction time is
  $2^{O(b)}=\poly(\T)$.

  \medskip {\sc List-decoding property.} We adapt the proof that a
  truly random set is list-decodable. Let $C \subseteq \{0,1\}^b$ be a
  linear $(\delta,L_C=L_C(\delta))$-list decodable code; such codes
  exist for rates less than $1-H(\delta)$~\cite{GHSZ}, and can be
  constructed explicitly with positive rate $R(\delta) \ge
  (1/2-\delta)^{\Omega(1)} > 0$ for any constant $\delta < 1/2$ with a
  list size $L_C \le 1/(1/2-\delta)^{O(1)}$~\cite{GS2000}. We will
  show that the composed code $E$ has constant list-size with high
  probability \emph{over the choice of $\lsg$} as long as the rate of
  the combined code is strictly less than $1-H(\delta)$.

  Fix a ball $B'$ of radius $\delta b$ in $\bit{b}$, and let $X$ denote
  the size of the intersection of the image of $E$ with $B'$.  We can
  view the image of $E$ as a union of $2^s$ sub-codes $C_r$, where
  $C_r$ is the translated code $C \oplus \lsg(r)$ (for
  $r\in \bit{s}$). Each sub-code $C_r$ is $(\delta,L_C)$-list-decodable
  since it is a translation of $C$.
We can then write $X=\sum_{r\in\bit{s}} X_r$, where $X_r$ is the size
of $C_r \cap B'$. The $X_r$ are independent integer-valued random
variables with range $[0,L_C]$ and expectation $\expec[X_r]=|C|\cdot
|B'| / 2^b \le 2^{-b(1-H(\delta)-R_C)}$ where $R_C$ denotes the rate of
  $C$.

If we set $s=10\log T_0$, we get:
\[ \expec[X]= 2^{10\log(\T)} 2^{-b(1-H(\delta)-R_C)} = 2^{-b(1-H(\delta)-R_C-10/\Lambda_0)} \ . \]

Suppose $R_C+10/\Lambda_0=1-H(\delta)-\alpha_0$, so that $\expec[X]
=2^{-\alpha_0 b}$.  Let $t$ be the ratio $L/\expec[X]$, where
$L=L(\delta)$ is the desired list-decoding bound for the composed code
$E$. We will set $L = 3 L_C/\alpha_0$.  By the multiplicative Chernoff
bound for bounded random variables, the probability (over the choice
of $\lsg$) that $X>L$ is at most
$\left(\frac{t}{e}\right)^{t\expec[X]/L_C}.$ Simplifying, we get
$\Pr[X>L] \leq (\frac L e)^2 2^{-3b} \le 2^{-2b}$.

Taking a union bound over all $2^b$ possible balls $B'$, we get that
with probability at least $1-2^{-2b}$, the random choice of $\lsg$
satisfies the property that the decomposable stochastic code with
encoding map $E= C\xor \lsg$ is $(\delta, L)$-list-decodable.

\medskip
{\sc Pseudorandomness.} The proof of pseudorandomness  is
standard, but we include it here for completeness.
It suffices to prove the pseudorandomness property against all
deterministic circuits of size $\T$, since a
randomized circuit is just a distribution over
deterministic ones.

Fix an arbitrary codeword $C(m)$.  Consider the (multi)set $X_m = \{C(m)
\oplus \lsg(r)\}$ as $r$ varies over $\{0,1\}^s$. Each element of this
set is chosen uniformly and independently at random from $\{0,1\}^b$.
Fix a circuit $B$ of size $\T$.  By a standard
Chernoff bound, the probability, over the choice of $X_m$, that
$\Pr_{x\in X_m} [ B(x) =1]$ deviates from the probability $\Pr
[B(U_b)=1]$ that $B$ accepts a uniformly random string by more than
$\zeta$ in absolute value, is at most $\exp(-\Omega(\zeta^2
|X_m|))$. For $\zeta = 1/\T$ and $|X_m| = 2^s \ge \T^{10}$, this
probability is at most $\exp(-\Omega(T_0^8))$.

The number of circuits of size $\T$  is  $\exp(O(\T\log(\T)))$.  By a union
bound over all these branching programs, we conclude that except with
probability at most $\exp(-\poly(\T))$ over the choice of $\lsg$,
the following holds for {\em every} size-$\T$ circuit $B$:
\( | \Pr_{x\in X_m} [ B(x) =1] - \Pr [B(U_b)=1] | \le 1/\T \ . \)
Since $m$ was arbitrary, we have proved that the constructed
stochastic code is $(\T,1/\T)$-pseudorandom with probability at
least $1-2^{-2b}$.

\medskip {\sc Decoding.} Finally, it remains to justify the
  claim about the decoding procedure. Given a string $y \in
  \{0,1\}^b$, the decoding algorithm will go over all $(m,r)\in
  \{0,1\}^k \times \{0,1\}^s$ by brute force, and check for each
  whether $\hamdist(E(m,r),y) \le \delta b$. By the list-decoding
  property, there will be at most $L$ such pairs $(m,r)$.  The
  decoding complexity is $2^{O(k+s)} = 2^{O(b)}$.
\end{proof}

\subsection{Analyzing Decoding: Main Steps}

It will be convenient to explicitly name the different sources of
error in the construction. The code ingredients are selected so that
each of these terms is at most $1/\T_0$.

\begin{center}
  \begin{tabular}{c|p{0.7\textwidth}}
    \hline
    $\beta_\Gamma(\T)$ & distinguishability of long generator (from
    uniform) by circuits of size $\T$\\
    \hline
    $\beta_\PRC(\T)$ & distinguishability of $\PRC$ outputs by circuits of size $\T$
    \\
    \hline
    $\beta_\ctrlset$ & max. probability (over choice of sampler set $V$)
    that a fixed error pattern will not be well-distributed among
    control blocks \\
    \hline
    $\beta_{\pi}$ & max. probability that any fixed error pattern will
    cause a decoding error for code $\REC$ after $\pi$\\
    \hline
  \end{tabular}
\end{center}

Our analysis requires two main claims.
\begin{lemma}[Few Control Candidates]
\label{lem:controllist}
  The decoder recovers a list of $L' \le \mathrm{poly}(1/\eps)$
 candidate values of the control information. The list includes the correct value $\omega=(s_\pi,s_\ctrlset,s_\Gamma)$ of
  the control information used at the encoder with probability at least $1-\beta_{control}$, where
  $$ \beta_{control} \leq  \beta_\ctrlset+ \beta_\Gamma(\T_2) + N\cdot
  \beta_\PRC(\T_2) \leq (N+3)/\T_0$$ when
  $T_2 = T_0 + O(N\log  N)$.
\end{lemma}

\begin{lemma}[Payload decoding succeeds]\label{lem:payloaddecode}
  Given the \emph{correct} control information $(\pi,\ctrlset,\Gamma)$, the
  decoder recovers the correct message with high probability. Specifically, the
  probability of successful decoding is at
  least $1-\beta_{payload}$ where $$\beta_{payload} \leq  \beta_{\pi} + \beta_\Gamma(\T_2) + N 
  \cdot\beta_\PRC(\T_2)) \leq (N+3)/\T_0$$ and
  $T_2 = T_0 + O(N2^{\poly(1/\eps)})\, .$
\end{lemma}

Combining these two lemmas, which we prove in the next two sections,
we get that except with probability at most $\beta_{Control} +
\beta_{payload} \leq \beta_\ctrlset +
\beta_{\pi} + 2\beta_\Gamma(\T_2) + 2N\cdot \beta_\PRC(\T_2)\leq 2(N+3)/\T_0$, the decoder
recovers a list of at most $L' \le \mathrm{poly}(1/\eps)$ potential
messages, one of which is the correct original message. 
This
establishes Theorem \ref{thm:decoding-time-bounded}.

\subsection{The Hiding Lemma}

Given a message $m$, and pseudorandom outputs $\pi,\ctrlset,
\Gamma$ based on the seeds $s_\pi,s_\ctrlset,s_\Gamma$, let 
\[ \Enc(m;\pi,\ctrlset,\Gamma,r_1,...,r_{n_\ctrl}) \]
denote the output of the encoding algorithm when the $r_i$'s are used
as the random bits for the $\LSC$ encoding. Let $\Enc(m;\pi,\ctrlset,\cdot)$
be a random encoding of the message $m$ using a given $\pi,\ctrlset$ and
selecting all other inputs at random.


%
\begin{lemma}[Hiding Lemma]
\label{lem:hiding}
For all messages $m$, sampler sets $\ctrlset$ and permutations
$\pi$,  the random variable
 $\Enc(m;\pi,\ctrlset,\cdot)$ is pseudorandom, namely:
$$\Enc(m;\pi,\ctrlset,\cdot)\close{\text{time }\T}{\beta} U_N\,,$$
where
$\beta \leq \beta_{Hide}(\T) \defeq  \leq N\cdot \beta_\PRC(\T') + \beta_\Gamma(\T')$ and $\T' =
\T + N$. 
\end{lemma}

For our purposes, the most important consequence of the Hiding Lemma 
that even with the knowledge of $m, \pi$ and $\ctrlset$, the distribution of
errors inflicted by a space-bounded channel on a codeword of our code and on a
uniformly random string are indistinguishable by time-bounded
tests. 

In other words, the pseudorandomness of the codewords allows us to
reduce the analysis of time-bounded errors to the additive case, as
long as the events whose probability we seek to bound can be tested
for by small circuits.
 We
encapsulate this idea in the ``Oblivious Errors Corollary'',
below. The proof of the Hiding Lemma follows that of the corollary.

\begin{defn}[Error distribution]
  Given a randomized channel $\channel$ on $N$ bits and
  a random variable $D$ on $\bit{N}$, let $\E_\channel(D)= D\xor \channel(D)$ denote the error
  introduced by $\channel$ when $D$ is sent through the channel.
\end{defn}


\begin{cor}[Errors are Near-Oblivious]
\label{cor:error-dist-fools-L} For every $m,\pi,\ctrlset$ and  randomized time-$\T$ channel $\channel$.
The error distributions for a real codeword and a random string are
indistinguishable. That is, for every time bound $\T'$:
$$\E_{\channel}(U_N) \close{\text{time }\T'}{\beta}
\E_{\channel}(\Enc(m;\pi,\ctrlset,\cdot))$$ 
where $\beta \leq \beta_{Hide}(\T + \T' + N)$ and $\beta_{Hide}(\cdot)$
is the bound from the Hiding Lemma (\ref{lem:hiding}). Note that the distinguishing circuits here may depend on $m,\pi,\ctrlset$.
\end{cor}

\begin{proof}[of Corollary]
  One can compose a distinguisher for the two error random variables
  with the channel $\channel$ to get a distinguisher for the
  original distributions of the Hiding Lemma. This composition
  requires the addition of a layer of XOR gates (thus adding $N$ to
  the size of the circuit).
\end{proof}

\begin{proof}[of  the Hiding Lemma (Lemma~\ref{lem:hiding})]
  The proof proceeds by a standard hybrid argument. Fix
  $m,\pi,\ctrlset$, and recall that $n_\ctrl=|\ctrlset|<N$ is the
  number of control blocks. Let $D_0$ be the random variable
  $\Enc(m;\pi,\ctrlset,\cdot)$, and $D_{n_\ctrl+1}$ be the uniform
  distribution over $\bit{N}$ .  We define intermediate random
  variables $D_1,D_2,...,D_{n_\ctrl}$: In $D_i$, the first $i$ control
  blocks from $D_0$ are replaced by random strings.  For a given
  time-$\T$ circuit $\A$, let $p_i= \Pr[\A(D_i)=1]$.

  Note that for $i\in \{1,...,n_\ctrl\}$, $D_{i-1}$ and $D_{i}$ are
  distributed identically in all blocks except the $i$th-control
  block.  The pseudorandomness of $\PRC$ implies that, conditioned on
  any particular fixed value of the positions outside the $i$-th
  control block, the distributions $D_{i-1}$ and $D_i$ are
  indistinguishable up to error $\beta_{\PRC}(\T)\leq
  \beta_{\PRC}(\T')$ by circuits of size $\T$. Averaging over the
  possible values of the other blocks, we get $|p_i - p_{i-1}|\leq
  \beta_\PRC(\T')$.

  To compare $D_{n_\ctrl}$ and $D_{n_\ctrl +1} = U_N$, note that the
  two distributions would be identical if the offset $\Gamma$ were
  replaced by a truly random string. Since the control blocks are now
  random (and hence carry no information about $s_\Gamma$), we use a
  distinguisher for $D_{n_\ctrl}$ and $D_{n_\ctrl +1}$ (of size $\T$)
  to get a distinguisher between $\Gamma$ and $U_{N'}$ (of size $\T'$)
  by XORing the challenge string with the $\REC$ encoding of the
  message, permuted according to $\pi$. Because $m,\pi,\ctrlset$ can
  be fixed, the encoding can be hardwired into the new distinguisher,
  leading to a size increase of only $N'<N$ XOR gates. Thus,
  $|p_{n_\ctrl+1}-p_{n_\ctrl}|\leq \beta_\Gamma(\T')$ where $\T' = \T+N$.

  Combining these bounds, we get $|p_{n_\ctrl+1} - p_0| \leq N
  \beta_\PRC(\T') + \beta_\Gamma(\T')$, as desired.
\end{proof}

\subsection{Control Candidates Analysis}

Armed with the Hiding Lemma, we can  show that the decoder can recover a small list of
candidate control strings, one of which is correct with high
probability (Lemma \ref{lem:controllist}).

There were four main lemmas in the analysis of additive errors.  The
first  (Lemma~\ref{lem:sampling}) stated that the sampler set
$\ctrlset$  is
\emph{good}  error pattern $e$ for $\ctrlset$ with high probability. 
A version of this lemma
holds also for time-bounded errors.  
Recall that 
$\ctrlset$ is good for $e$
(Definition~\ref{def:goodsampler}) if at least a fraction $\eps$ of
the $n_\ctrl$ control blocks have an error rate (fraction of flipped
bits) bounded above by $p+\eps$.

\begin{lemma}[Good Samplers: time-bounded analogue to Lemma~\ref{lem:sampling}]
  \label{lem:sampling2corrected} For every $pN$-bounded time-$\T_0$
  channel with $\T_0 > N$ and for every message $m$ and permutation
  seed $s_\pi$,  the set $\ctrlset$ is good
  for the error pattern $e$ introduced by the channel 
  with probability at least $1- (\beta_\ctrlset + 2\beta_\Gamma(\T_2) + 2N\cdot \beta_\PRC(\T_2))\geq 1-2(N+3)/\T_0 $ over the choice of
  sampler seed $s_\ctrlset$, the seed $s_\Gamma$ for the pseudorandom offset, the coins of the
  control encoding and the coins of the channel. Here $\T_2 = \T_0 +
  N\log N$. 
\end{lemma}

\begin{proof}
  The crucial observation here is that Oblivious Errors Lemma implies
  that the error pattern introduced by a bounded channel is
  almost independent of $V$ from the point of view of any time-$\T$
  test. That is, by a simple averaging argument, the Oblivious Errors Lemma implies that for every
  distribution on $m,\pi,\ctrlset$, we have
  \begin{equation}
  (m,\pi,\ctrlset,\E_{\channel}(U_N)) \close{\text{time }\T'}{\beta}
  (m,\pi,\ctrlset,\E_{\channel}(\Enc(m;\pi,\ctrlset,\cdot)))\,.\label{eq:sampler}
\end{equation}

  The properties of the sampler imply that the probability of getting
  a good (control set, error) pair in the left hand distribution is at
  least $1-\beta_\ctrlset$. 
  
  Thus, all we really have to do is show that ``goodness'' of
  the control set $\ctrlset$ can be tested efficiently, given $e$ and
  $\ctrlset$. Testing for goodness only involves counting the number
  of errors in each of the control blocks, and tallying the number of
  control blocks with too high a fraction of errors. This can be done
  in circuit size $O(N \log N)$ (in fact, quite a bit less but the
  optimization doesn't matter here). 

  By the Oblivious Errors lemma, the probability of a good (control
  set, error) pair on the right-hand side of \eqref{eq:sampler} is at
  least $1-\beta_\ctrlset - \beta_{Hide}(\T+O(N\log N))$, as desired.
\end{proof}

We now turn to the second lemma in the analysis of the
control information decoding (Lemma \ref{lem:controlblocks} for
additive errors), which stated that when the sampled positions
$\ctrlset$ are good for the error pattern $e$, one can correctly
recover a large number of control blocks. This is no
longer true in our setting, but we require only the following weaker
statement.

\begin{lemma}[Correct Control Blocks --- list decoding version]
\label{lem:LScontrolblocks}
    For any fixed $e$ and $\ctrlset$ such that $\ctrlset$ is good for $e$,
    the decoding algorithm for the inner codes $\PRC$ outputs a list
    of $L$ symbols containing the correct symbol $a_i$ for at least
    $\frac{\eps n_\ctrl}{2}=\Theta(\frac{\eps^2 N}{\log\T_0})$ control
    blocks.
\end{lemma}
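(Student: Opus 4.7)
The proof is essentially a direct application of the list-decoding property of $\LSC$ guaranteed by Proposition~\ref{prop:inner-logspace}, combined with the definition of a good sampler seed. The key point is that once $T$ is good, this lemma is a purely combinatorial (non-probabilistic) statement about the $(p+\eps,L)$-list-decodability of $\LSC$: no fact about the channel or about the $\LSC$ randomness is needed.

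First, I would unpack the assumption. By the definition of $T$ being good for $e$ (the definition borrowed from the additive-errors analysis), at least $\eps n_\ctrl / 2$ of the control blocks are received with relative error rate at most $p+\eps$. Fix one such block $i$, and let $A_i=(\alpha_i,a_i)$ and $r_i$ be the correct control symbol and its $\LSC$ randomness, so that $C_i = \LSC(A_i,r_i)$ is the codeword actually placed in block $i$ by the encoder. Let $x_i$ be the corresponding received block. Then $\hamdist(x_i,C_i)\le (p+\eps)\,b_\ctrl$, so $x_i$ lies within list-decoding radius $(p+\eps)$ of the codeword $\LSC(A_i,r_i)$.

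Now apply the deterministic decoding procedure of Proposition~\ref{prop:inner-logspace} (instantiated with $\delta=p+\eps$) to $x_i$. That proposition guarantees that the output is the \emph{complete} list of pairs $(m,r)$ whose $\LSC$-encoding lies within Hamming distance $(p+\eps)\,b_\ctrl$ from $x_i$, and that this list has size at most $L$. In particular, the pair $(A_i,r_i)$ must appear in the list, and hence so does the first-coordinate projection $a_i$. Thus for every control block whose received version is within relative distance $p+\eps$ of its transmitted codeword, the inner list-decoder returns a list of at most $L$ symbols containing the correct symbol $a_i$.

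Summing over the $\ge \eps n_\ctrl / 2$ such blocks guaranteed by the goodness of $T$ yields the lemma. Since $n_\ctrl = \Theta(\eps N / S)$, this is the claimed $\Theta(\eps^2 N / S)$ blocks. There is no real obstacle: all the work has been done in establishing the list-decodability guarantee of $\LSC$ in Proposition~\ref{prop:inner-logspace}, and the relaxation from unique decoding to list decoding is precisely what removes the need to reason about incorrect decoding events (in contrast with Lemma~\ref{lem:controlblocks} for additive errors, where the analogous step required a Chernoff argument over the $\SC$-coins to control both correct decoding and $\perp$-outputs).
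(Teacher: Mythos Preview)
Your proof is correct and follows exactly the same approach as the paper's. The paper's own proof is a single sentence---``The list-decoding radius of the $\LSC$ code is set to be $\delta>p+\eps$, so all blocks with an error rate below $p+\eps$ produce a valid list''---and your argument simply spells out this reasoning in full, correctly invoking the goodness of $T$ for the count and the list-decoding guarantee of Proposition~\ref{prop:inner-logspace} for each low-error block.
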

\begin{proof}
  The list-decoding radius of the $\PRC$ code is set to be $\delta>p+\eps$, so all blocks with an error rate below $p+\eps$ produce a valid list.
\end{proof}

The third lemma from the analysis of additive errors (Lemma
\ref{lem:payloadblocks}), which previously stated that very few
payload blocks are mistaken for control blocks, requires a significant
change, because it is possible for
the time-bounded channel to inject fake control blocks into the
codeword (by changing a block to some pre-determined codeword of
$\PRC$). Therefore we can only say that the total number of candidate
control blocks is small. 

\begin{lemma}[Bounding mistaken control blocks]
\label{lem:LSpayloadblocks}
  For every $m,e,\omega$, the total number of candidate control
  symbols is at most $\frac{N L}{b_\ctrl} = \Theta (\frac{NL}{\log T_0})$.
 \end{lemma}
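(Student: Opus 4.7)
The statement is essentially a counting bound, so the proof should reduce to applying the combinatorial list-decodability of $\LSC$ block-by-block. The plan is to observe that the total candidate count is at most (number of blocks the decoder attempts to list-decode) $\times$ (per-block list size), and then to plug in the two relevant quantities from Section~\ref{sec:code-const-space-bounded} and Proposition~\ref{prop:inner-logspace}.

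First I would recall the block structure: the decoder slices the received word of length $N$ into blocks of length $b_\ctrl = u = \Lambda_0 S_0 = 2\Lambda_0 S$ (since we apply Proposition~\ref{prop:inner-logspace} with $S_0 = 2S$), and runs the $\LSC$ list-decoder on each block independently. Thus the number of blocks processed is exactly $N/b_\ctrl = \tfrac{N}{2\Lambda_0 S}$. By Proposition~\ref{prop:inner-logspace}, $\LSC$ is $(p+\eps, L)$-list-decodable in the combinatorial sense: for every received block $y\in\bit{u}$, there are at most $L$ message/randomness pairs $(m,r)$ whose $\LSC$-encoding lies within relative Hamming distance $p+\eps$ of $y$. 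Consequently the deterministic decoding procedure of the proposition returns at most $L$ candidate pairs per block, regardless of what the adversary does.

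Summing over all blocks yields the total candidate count bound $\tfrac{N}{2\Lambda_0 S}\cdot L = \tfrac{NL}{2\Lambda_0 S}$. The bound is purely combinatorial: it does not use $m$, $e$, or $\omega$ at all, which matches the universal quantifiers in the statement. There is no obstacle here since the statement is not a probabilistic one; the only thing to check is that the decoder cannot be forced to output more than $L$ candidates on any block, and this is immediate from the list-decoding property of $\LSC$ established in Proposition~\ref{prop:inner-logspace}. In particular, even if the channel injects fake control blocks by overwriting payload positions with valid $\LSC$ codewords, each such block still contributes at most $L$ candidates, so the total is bounded as claimed.
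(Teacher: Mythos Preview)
Your proof is correct and essentially identical to the paper's own argument: the paper also observes that there are $N/b_\ctrl = N/(2\Lambda_0 S)$ blocks, each contributing at most $L$ candidates by the list-decodability of $\LSC$, and multiplies. Your write-up is more detailed (and in fact avoids a typo in the paper, which writes $\frac{N}{2\Lambda\log N}$ instead of $\frac{N}{2\Lambda_0 S}$), but the reasoning is the same counting bound.
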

 \begin{proof}
   Since each candidate control block has $b_\ctrl = \Theta(\log \T_0)$ bits, there are
   $\frac{N}{\log \T_0}$ blocks considered by the decoder. The
   list decoding of each such block yields at most $L$ candidate
   control symbols.
 \end{proof}

 \begin{proof}[of Lemma~\ref{lem:controllist}]
   Given Lemmas \ref{lem:sampling2corrected},
   \ref{lem:LScontrolblocks} and \ref{lem:LSpayloadblocks}, we only
   need to ensure that the rate $R^{\RS}$ of the Reed-Solomon code
   used at the outer level to encode the control information is small
   enough so that list decoding is possible according to Proposition
   \ref{prop:RS-list} as long as (1) the number of data pairs $n$ is
   at most $ \frac{N L}{b_{ctrl}}$ and (2) the number of agreements
   $t$ is at least $\Theta(\frac{\eps^2 N}{\log \T_0})$. The claimed
   list decoding is possible with rate $R^{\RS}=O(\eps^4/L)$, and the
   list decoder will return at most $O(L/\eps^2)$ candidates for the
   control information. Since the list decoding radius $\delta$ of
   $\PRC$ was chosen to be $\delta = p + \eps < 1/2-\eps$, we have $L
   \le 1/\eps^{O(1)}$ by the guarantee of
   Proposition~\ref{prop:inner-time-bounded}, so the output list size
   is bounded by a polynomial in $1/\eps$. This proves
   Lemma~\ref{lem:controllist}.
 \end{proof}

\subsection{Payload Decoding Analysis}

We now show Lemma~\ref{lem:payloaddecode}: given a magical, correct copy of the control
information, the payload blocks will correctly be decoded to recover
the message $m$.
The assumption of correct control information essentially places us in
the \emph{shared randomness}
setting.

As with the analysis of the control block decoding, we use the fact
that errors are nearly-oblivious (Corollary~\ref{cor:error-dist-fools-L}) to argue that the events that
we needed to happen with high probability for successful decoding
against additive errors (where the errors were oblivious to the
codeword) will also happen with good probability with a time-bounded channel.

\begin{proof}[of Lemma~\ref{lem:payloaddecode}]
Fix the message $m$ and the choice $\ctrlset$ of the
control block locations. Recall that the probability for
an \emph{oblivious} error distribution that $\pi(e)$ causes $\REC$ to
decode $m$ incorrectly is at most $\beta_{\pi} \leq 1/\T_0$.   

Note that given a permutation $\pi$, control
set $\ctrlset$ and an
error pattern $e$, one can easily check if the payload code $\REC$
will correctly decode the  error pattern (one could run the
full decoder for $\REC$; alternatively, one can simply check that a
sufficiently large number of the inner code blocks are decoded
correctly by the inner code). There is a circuit of size $O(N
2^{\poly(1/\eps)})$ that verifies if decoding will occur correctly. 

We can thus apply the Oblivious Errors Corollary
(\ref{cor:error-dist-fools-L}) with $\T=\T_0$ and $\T' = O(N
2^{\poly(1/\eps)})$ to show the following: for every fixed
$m,\ctrlset$, the probability that a time-$\T_0$ channel $\channel$
introduces errors that induce a decoding mistake (by $\REC$) is at
most 
$$\beta_\pi + \beta_{Hide}(\T_2) = \beta_\pi + \beta_\Gamma(\T_2) +
N\beta_\PRC(\T_2)\, ,$$
  where $\T_2= \T_0 + O(N
2^{\poly(1/\eps)})$, as desired.
\end{proof}


\section{Open Questions}
\label{sec:openq}

The code constructions of the previous sections
leave several open questions:

\mypar{Uniquely decodable codes beyond the GV bound}
The codes we design for time-bounded channels are list-decodable, but
not necessarily uniquely decodable. This is inherent to the current
analysis, since even a very simple adversary may inject valid control
blocks into the codeword, potentially causing the decoder to come up
with several seemingly valid control strings. For $p\geq 1/4$, we know
the limitation is inherent to \emph{any} construction, because our
lower bound describes an attack that can be carried out by a very
simple attacker. However, for $p<1/4$, it may still be possible to
design codes that lead the decoder to a unique, correct codeword with
high probability. 
Since the initial version of this paper was published,   \cite{HavivL11} 
showed that random codes can tolerate causal errors slightly beyond
the Gilbert-Varshamov bound (regardless of the channel's complexity) in the low-error regime (i.e., they can achieve a rate better than $1-H(2p)$ for correcting a fraction $p$ of online errors, for small $p$). 
Those codes are neither explicit nor decodable in polynomial time, however.

\mypar{(More) Explicit Constructions for Time-Bounded Channels} 
Our design of time-bounded channels uses Monte Carlo constructions in
two places: for the pseudorandom code $\PRC$ and the generator
$\Gamma$. Constructions for the generator $\Gamma$ are in fact known
based on worst-case hardness assumptions for circuits (say, that
exponential time does not have subexponential size circuits~\cite{IW97}). 
An interesting direction for future work is the design of
\emph{explicit} pseudorandom stochastic codes along the lines of
Proposition~\ref{prop:inner-time-bounded} under such hardness
assumptions. This would make the entire code construction explicit (conditionally).

\mypar{Explicit Constructions for Online Space-Bounded Channels}
Similarly to time-bounded channels, one can define \emph{online
  space-bounded} channels. An online space-$S$ channel is a
width-$2^S$ branching program that makes a single in-order pass over
the transmitted codeword, and outputs one bit for every bit that is
read. Such channels were first considered by Galil \emph{et
  al.}~\cite{GLYY} and are special case of both time-bounded channels
(since a space-$S$ channel can be implemented by a time-$N\cdot 2^S$
circuit) and of \emph{causal channels}~\cite{DJL08,LJD09}.

Another direction for future work is the construction of fully explicit
codes for space-bounded channels, without hardness assumptions or
Monte Carlo constructions. We considered online space-bounded channels
in the initial version of this paper. The analysis of the codes for
such channels was complex, because the reductions needed to preserve
logarithmic space.
 Our original analysis, however,
shows that one only needs to construct logarithmic-length,
pseudorandom stochastic codes for logarithmic-space channels in order
to get a full construction (as one can use Nisan's generator \cite{nisan} for
$\Gamma$). The lemmas required for obtaining the full construction from
these pieces can be found in version 3 of the arxiv version of this
paper (\url{http://arxiv.org/abs/1004.4017v3}).




\section*{Acknowledgments}

We are extremely grateful to an anonymous referee for observations
that simultaneously strengthened our results and simplified the
analysis of our code constructions. Specifically, the referee pointed
out that the separate constructions we had for space- and time-bounded channels
could be unified to obtain a single, unconditional result covering
both types of channels.

\addcontentsline{toc}{section}{References}
\bibliographystyle{abbrv}
\bibliography{avc}

\appendix

\section{Ingredients for Code Construction for Additive Errors}
\label{sec:ingredients}
In this section, we will describe the various ingredients that we will
need in our construction of capacity achieving AVC codes, expanding on the brief mention of these from Section~\ref{sec:ingre-brief}.

\subsection{Constant rate codes for average error}
By plugging in an appropriate explicit construction of list-decodable
codes (with sub-optimal rate) into Theorem~\ref{thm:ld-to-avc}, we can
also get the following explicit constructions of stochastic codes,
albeit not at capacity. We will make use of these codes to encode
blocks of logarithmic length control information in our final
capacity-achieving explicit construction. The total number of bits in
all these control blocks together will only be a small fraction of the
total message length. So the stochastic codes encoding these blocks
can have any constant rate, and this allows us to use any
off-the-shelf explicit constant rate list-decodable code in
Theorem~\ref{thm:ld-to-avc} (in particular, we do {\em not} need a
brute-force search for small list-decodable codes of logarithmic block
length). We get the following claim by choosing $d=1$ and picking $C$
to be a binary linear $(\alpha,c_1(\alpha)/2)$-list decodable code in
Theorem~\ref{thm:ld-to-avc}.
\begin{prop}
\label{prop:const-rate-AVC}
For every $\alpha$, $0 < \alpha < 1/2$, there exists $c_0 =
c_0(\alpha) > 0$ and $c_1 =c_1(\alpha) < \infty$ such that for all
large enough integers $b$, there is an explicit stochastic code
$\SC_{k,\alpha}$ of rate $1/c_0$ with encoding $E : \{0,1\}^b \times
\{0,1\}^b \rightarrow \{0,1\}^{c_0 b}$ that is efficiently
strongly $\alpha$-decodable with probability $1- c_12^{-b}$. 

Moreover, for every
message and every error pattern of more than a fraction $\alpha$ of
errors, the decoder for $\SC_{k,\alpha}$ returns $\bot$ and reports a
decoding failure with probability $1-c_1 2^{-b}$.  

Further, there exists an absolute constant $c_3 = c_3(\alpha)$ such
that on input a uniformly random string $y$ from $\{0,1\}^{c_0 b}$,
the decoder for $\SC_{k,\alpha}$ returns $\bot$ with probability at
least 
$1-c_12^{-b}$
 (over the choice of $y$).
\end{prop}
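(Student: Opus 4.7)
The plan is to apply Theorem~\ref{thm:ld-to-avc} with $d=1$ and an explicit efficiently list-decodable binary linear $(\alpha,L)$-list decodable code $C:\F_2^{3b}\to\F_2^n$ of positive constant rate $R_C=R_C(\alpha)$ and constant list size $L=L(\alpha)$; such codes exist for every $\alpha\in(0,1/2)$ by classical constructions (e.g.\ concatenated codes with an expander-based outer code, or a brute force search over short inner codes). Since $d+2=3$ has no common factor with the characteristic $2$ of $\F_{2^b}$, the AMD scheme of Theorem~\ref{thm:amd-explicit-const} applies over $\F=\F_{2^b}$, and the resulting stochastic code from Theorem~\ref{thm:ld-to-avc} has encoding $E:\{0,1\}^b\times\{0,1\}^b\to\{0,1\}^{c_0 b}$ with $c_0:=3/R_C$ and rate $1/c_0$. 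Because any positive-rate $(\alpha,L)$-list decodable binary code satisfies $R_C<1-H(\alpha)$, one has $c_0(1-H(\alpha))>3$, a fact I shall use crucially below.

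Setting $c_1:=2L$ and $\delta:=c_1\cdot 2^{-b}$, the hypothesis $(d+1)/2^b = 2\cdot 2^{-b}\le \delta/L$ of Theorem~\ref{thm:ld-to-avc} is satisfied, so that theorem directly delivers (i) efficient strong $\alpha$-decodability with failure probability at most $c_1\cdot 2^{-b}$, and (ii) via its ``Moreover'' clause, the guarantee that any error pattern of weight exceeding $\alpha n$ causes the decoder to output $\bot$ with probability at least $1-c_1\cdot 2^{-b}$. Efficiency follows from the efficient list-decoder for $C$ plus at most $L$ field-arithmetic AMD consistency checks.

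The one property not immediate from Theorem~\ref{thm:ld-to-avc} is behavior on a uniformly random input $y\in\{0,1\}^{c_0 b}$. The key observation---implicit in the proof of Theorem~\ref{thm:ld-to-avc}---is that the decoder outputs a non-$\bot$ symbol only if at least one pair $(m,r)$ has the legitimate encoding $E(m,r)=C(m,r,f^{(1)}_{\mathrm{AMD}}(m,r))$ within Hamming distance $\alpha n$ of $y$, since the AMD check precisely selects the list-decoder candidates that lie in $\mathrm{Image}(E)$. Since $|\mathrm{Image}(E)|\le 2^{2b}$, a union bound over these codewords followed by the standard volume estimate $|B(\alpha n)|\le 2^{H(\alpha)n+o(n)}$ yields
\[
\Pr_{y}[\text{decoder outputs some message}]\;\le\;2^{2b}\cdot\frac{|B(\alpha n)|}{2^{c_0 b}}\;\le\;2^{b(2-c_0(1-H(\alpha)))+o(b)}\,,
\]
which, by our choice $c_0(1-H(\alpha))>3$, is at most $2^{-b+o(b)}$ for all sufficiently large $b$ and is thus absorbed into $c_1\cdot 2^{-b}$ after, if needed, enlarging $c_1$ by an absolute constant.

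The main---indeed essentially the only---substantive point is the third property, and within that the conceptual step of identifying ``decoder outputs a message'' with ``some $E$-codeword is close to $y$''; once that is made, the remaining volume estimate is routine. No real obstacle is anticipated; care is needed only to handle the $o(b)$ terms in the sphere-size bound and to choose $b$ large enough that the constants work out cleanly.
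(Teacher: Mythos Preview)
Your proposal is correct and follows essentially the same route as the paper: apply Theorem~\ref{thm:ld-to-avc} with $d=1$ and an explicit linear $(\alpha,L)$-list-decodable code to get the first two guarantees, then for the random-input claim count the strings within distance $\alpha n$ of $\mathrm{Image}(E)$ via $|\mathrm{Image}(E)|\le 2^{2b}$ and the volume bound. The only minor difference is bookkeeping: the paper extracts the inequality $(1-H(\alpha))c_0 b + \log L \ge 3b$ directly from the list-decoding capacity bound (so the factor $L$ lands cleanly in $c_1$), whereas you invoke the strict inequality $R_C < 1-H(\alpha)$ and absorb an $o(b)$ term; since any explicit construction has rate bounded away from capacity, your version is fine.
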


\begin{proof}
  The claim follows by choosing $d=1$ and picking $C$ to be a binary
  linear $(\alpha,c_1(\alpha)/2)$-list decodable code in
  Theorem~\ref{thm:ld-to-avc}. The claim about decoding a uniformly
  random input follows since the number of strings $y$ which differ
  from some valid output of the encoder
  $E$ 
  is at most a fraction $\alpha$ of positions is at most $2^{2b}
  2^{H(\alpha) c_0 b}$. By standard entropy arguments, we have  $(1-H(\alpha))c_0b+\log(c_1(\alpha)/2)\geq 3b$ (since the code encodes $3b$ bits, the capacity is $1-H(\alpha)$, and at most $\log(c_1(\alpha)/2)$ additional bits of side information are necessary to disambiguate the true message from the list).  We conclude that the probability that a random string gets accepted by the decoder is at most $2^{-b}\cdot 2^{\log( c_1(\alpha)/2)} \leq c_12^{-b}$.
\end{proof}

\subsection{Reed-Solomon codes}

If $\F$ is a finite field with at least $n$ elements, and $\spcbound =
(\alpha_1,\alpha_2,\dots,\alpha_n)$ is a sequence of $n$ {\em distinct}
elements from $\F$, the Reed-Solomon encoding, $\RS_{\F,\spcbound,n,k}(m)$, or
just $\RS(m)$ when the other parameters are implied, of a message $m = (m_0,m_1,\dots,m_{k-1}) \in \F^k$ is given by
\begin{equation}
\label{eq:def-RS}
\RS_{\F,\spcbound,n,k}(m) = (f(\alpha_1), f(\alpha_2), \cdots, f(\alpha_n)) \ . 
\end{equation}
where $f(X) = m_0 + m_1X + ... + m_{k-1} X^{k-1}$. The following is a classic result on unique decoding Reed-Solomon codes~\cite{peterson}, stated as a noisy polynomial reconstruction algorithm.

\begin{prop}[Unique decoding of RS codes]
\label{prop:RS}
There is an efficient algorithm with running time polynomial in $n$
and $\log |\F|$ that given $n$ distinct pairs $(\alpha_i,a_i) \in
\F^2$, $1\le i \le n$, and an integer $k < n$, finds the unique
polynomial $f$ of degree at most $k$, if any, that satisfies
$f(\alpha_i) = a_i$ for more than $\frac{n+k}{2}$ values of $i$.  Note
that this condition can also be expressed as $\bigl| \{i : f(\alpha_i)
= a_i \} \bigr| - \bigl| \{ i : f(\alpha_i)\neq a_i) \} \bigr| > k$.
\end{prop}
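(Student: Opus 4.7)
The plan is to apply the classical Berlekamp--Welch algorithm. For uniqueness of $f$, suppose two polynomials $f_1 \neq f_2$, both of degree at most $k$, each agreed with the data at more than $(n+k)/2$ indices. Then by inclusion--exclusion they would agree with each other at more than $2 \cdot (n+k)/2 - n = k$ of the distinct points $\alpha_1,\ldots,\alpha_n$, so the nonzero polynomial $f_1 - f_2$ of degree at most $k$ would have more than $k$ roots, a contradiction.

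For the algorithm, I set $t := \lfloor (n-k)/2 \rfloor$ and seek polynomials $E(X), N(X) \in \F[X]$ with $\deg E \leq t$, $\deg N \leq t+k$, and $E \not\equiv 0$, satisfying the $n$ linear constraints
\[
  N(\alpha_i) \;=\; a_i \cdot E(\alpha_i) \qquad (1 \leq i \leq n).
\]
Viewed as a homogeneous system in the $(t+1) + (t+k+1) = 2t+k+2 \geq n+1$ coefficient unknowns of $E$ and $N$, this is underdetermined, so Gaussian elimination produces a nonzero solution in time $\poly(n,\log|\F|)$.

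To see correctness, first note that if the desired $f$ exists, then $E^{\ast}(X) := \prod_{i : a_i \neq f(\alpha_i)} (X - \alpha_i)$ together with $N^{\ast}(X) := E^{\ast}(X) f(X)$ yields a valid nonzero solution, since the error-count hypothesis gives $\deg E^{\ast} \leq t$. Given any nonzero solution $(E, N)$ output by the linear solver, define $Q(X) := N(X) - E(X) f(X)$, which has degree at most $t+k$. At every index $i$ with $a_i = f(\alpha_i)$ we have $Q(\alpha_i) = E(\alpha_i) a_i - E(\alpha_i) f(\alpha_i) = 0$, and by hypothesis there are more than $(n+k)/2$ such indices, which works out to at least $t+k+1 > \deg Q$ of them. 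Hence $Q \equiv 0$, so $f = N/E$ is recovered by polynomial division in $\poly(n, \log|\F|)$ time. The only real subtlety is the bookkeeping check that the number of agreements strictly exceeds $t+k$, which follows directly from the choice of $t$; otherwise the argument is entirely linear-algebraic and I foresee no serious obstacle.
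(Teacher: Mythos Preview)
Your argument is correct and complete. The paper itself does not prove this proposition; it simply cites it as a classical result due to Peterson, so there is no ``paper's proof'' to compare against beyond the citation. Your Berlekamp--Welch presentation is the standard modern proof. One tiny point worth making explicit: when you solve the homogeneous linear system, a nonzero solution could in principle have $E\equiv 0$, but then $N(\alpha_i)=0$ for all $n$ distinct $\alpha_i$ while $\deg N\le t+k<n$, forcing $N\equiv 0$ as well; hence any nonzero solution automatically has $E\not\equiv 0$ and the division $N/E$ is well defined.
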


We also state a list-decoding generalization (the version due to Sudan~\cite{Sudan} suffices for our purposes), which will be used in our result for space-bounded channels.

\begin{prop}[List decoding of RS codes~\cite{Sudan}]
\label{prop:RS-list}
There is an efficient algorithm with running time polynomial in $n$
and $\log |\F|$ that given $n$ distinct pairs $(\alpha_i,a_i) \in
\F^2$, $1\le i \le n$, and integer $k < n$, finds the set $\cal L$ of
all polynomials $f$ of degree at most $k$, if any, that satisfy
$f(\alpha_i) = a_i$ for at least $t$ values of $i$ as long as $t >
\sqrt{2kn}$. Moreover, there are at most $\sqrt{2n/k}$ polynomials in
the set $\cal L$.
\end{prop}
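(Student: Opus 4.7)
The plan is to follow Sudan's two-stage algorithm: \emph{bivariate interpolation} followed by \emph{root-finding}. First, I would fix an integer parameter $D$ slightly larger than $\sqrt{2kn}$ and search for a nonzero bivariate polynomial $Q(X,Y) \in \F[X,Y]$ of $(1,k)$-weighted degree at most $D$ (that is, a nonzero linear combination of monomials $X^i Y^j$ with $i + kj \le D$) subject to the $n$ linear constraints $Q(\alpha_i, a_i) = 0$ for $i=1,\dots,n$. Since the number of admissible monomials is at least $\binom{D/k+1}{2} k \geq D^2/(2k)$, choosing $D = \lceil \sqrt{2kn} \rceil$ makes the number of unknowns strictly exceed $n$, so a nonzero $Q$ exists and can be found by Gaussian elimination in time polynomial in $n$ and $\log|\F|$.

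Next, I would show the key divisibility property: any polynomial $f \in \F[X]$ of degree $\le k$ that agrees with the data on at least $t$ points, where $t > D$, must satisfy $(Y - f(X)) \mid Q(X,Y)$. The argument is that the univariate polynomial $R(X) := Q(X, f(X))$ has degree at most $D$ (by the weighted degree bound on $Q$ combined with $\deg f \le k$), but vanishes at every $\alpha_i$ where $f(\alpha_i) = a_i$, hence has at least $t > D$ distinct roots; therefore $R \equiv 0$, and viewing $Q$ as a polynomial in $Y$ over $\F(X)$ this means $Y - f(X)$ divides $Q(X,Y)$. With $D = \lceil \sqrt{2kn} \rceil$ the hypothesis $t > \sqrt{2kn}$ suffices.

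For the final stage I would factor $Q(X,Y)$ in $\F[X][Y]$ using any standard bivariate factorization algorithm (or the simpler root-finding-over-$\F(X)$ subroutine, which is all that is required here) in time polynomial in $n$ and $\log|\F|$, then test each factor of the form $Y - g(X)$ with $\deg g \le k$ against the data to retain those with at least $t$ agreements. This yields the full list $\mathcal{L}$. The list-size bound follows immediately: since every $f \in \mathcal{L}$ contributes a distinct linear factor $(Y - f(X))$ to $Q$, the size of $\mathcal{L}$ is bounded by the $Y$-degree of $Q$, which is at most $D/k \le \sqrt{2n/k} + O(1)$, matching the claimed $\sqrt{2n/k}$ bound (with minor adjustment of $D$ if one wants the bound exactly).

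The main obstacle is not really conceptual but notational: one must be careful with floors/ceilings in the choice of $D$ so that the existence of the interpolation polynomial, the strict inequality $t > D$, and the list-size bound $|\mathcal{L}| \le \sqrt{2n/k}$ all simultaneously hold under the single hypothesis $t > \sqrt{2kn}$. The factorization step can be cited as a black box (e.g., root-finding of a univariate polynomial over $\F(X)$ via the Berlekamp-style algorithms adapted by Sudan), so no further work is needed there.
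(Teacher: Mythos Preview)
The paper does not actually prove this proposition; it is stated in the appendix as a known ingredient and attributed directly to Sudan's paper. Your proposal correctly outlines Sudan's interpolation-plus-root-finding algorithm, which is exactly the argument in the cited reference, so your approach is the intended one.
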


\subsection{Pseudorandom constructs}

\subsubsection{Samplers}

Let $[N] = \{1,2,\dots,N\}$. If $B \subseteq [N] \rightarrow \{0,1\}$
has density $\mu$ (i.e., $\mu N$ elements), then standard tail bounds
imply that for a random subset $\ctrlset \subseteq [N]$ of size $\ell$, the
density of $B \cap \ctrlset$ is within $\pm \theta$ of $\mu$ with
overwhelming probability (at least $1-\exp(-c_\theta \ell)$). But picking
a random subset of size $\ell$ requires $\approx \ell \log (N/\ell)$ random
bits. The following shows that a similar effect can be achieved by a
sampling procedure that uses fewer random bits. The idea is the well
known one of using random walks of length $\ell$ in a low-degree expander
on $N$ vertices. This could lead to repeated samples while we would
like $\ell$ distinct samples. This can be achieved by picking slightly
more than $\ell$ samples and discarding the repeated ones. The result
below appears in this form as Lemma 8.2 in \cite{vadhan04}.

\begin{prop}
\label{prop:sampler}
  For every $N \in {\mathbb N}$, $0 < \theta < \mu < 1$, $\gamma > 0$,
  and integer $\ell \ge \ell_0 = \Omega(\frac{1}{\theta^2} \log
  (1/\gamma))$, there exists an explicit efficiently computable
  function $\Samp : \{0,1\}^\sigma \rightarrow [N]^\ell$ where $\sigma \le
  O(\log N + \ell \log (1/\theta))$ with the following property: 

  For every $B \subseteq [N]$ of size at least $\mu N$, with
  probability at least $1- \gamma$ over the choice of a random $s \in
  \{0,1\}^\sigma$, $|\Samp(s) \cap B| \ge (\mu - \theta) |\Samp(s)|$.

\end{prop}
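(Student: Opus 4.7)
The plan is to construct $\Samp$ via random walks on an explicit constant-degree expander graph on $[N]$, exploiting the well-known ``Expander Chernoff Bound'' of Gillman (with subsequent refinements by Kahale, Healy, and others). Concretely, I would fix an explicit $D$-regular graph $G$ on vertex set $[N]$ with normalized second eigenvalue $\lambda \le \theta/4$; such graphs can be obtained by taking a standard explicit expander with constant spectral gap (e.g., Margulis--Gabber--Galil, LPS, or the zig-zag product family) and either powering it $O(\log(1/\theta))$ times or using a Ramanujan-type construction with $D = \Theta(1/\theta^2)$. Either way, $D = \mathrm{poly}(1/\theta)$, so each step of a walk costs $\log D = O(\log(1/\theta))$ random bits.

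Given a seed $s$ parsed as $(v_0, e_1, e_2, \ldots, e_{\ell'})$ with $v_0 \in [N]$ and each $e_i \in [D]$, the sampler outputs the walk trajectory $v_0, v_1, \ldots, v_{\ell'-1}$, where $v_i$ is obtained from $v_{i-1}$ by following the edge labeled $e_i$, for an appropriately chosen $\ell' = \Theta(\ell)$. The total seed length is $\log N + \ell' \log D = O(\log N + \ell\log(1/\theta))$, matching the claimed bound. To get the required \emph{distinct} samples (as the output is viewed in $[N]^\ell$), I would take $\ell'$ slightly larger than $\ell$ and either relabel duplicates with arbitrary distinct ``filler'' indices or simply analyze the multiset directly; both reductions work because $\ell' \ll N$ and because $|\Samp(s)\cap B|$ in the hypothesis is a statement about set membership, which is only strengthened when repeats are dropped.

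For the analysis, I would apply the Expander Chernoff Bound to the indicator $\mathbf{1}_B: [N]\to\{0,1\}$, which has mean $\mu_B \ge \mu$. It yields
\[
\Pr\!\left[\; \tfrac{1}{\ell'}\sum_{i=0}^{\ell'-1}\mathbf{1}_B(v_i) \;<\; \mu - \theta/2\;\right] \;\le\; 2\exp\bigl(-\Omega((1-\lambda)\,\theta^2\,\ell')\bigr) \;\le\; \gamma,
\]
once $\ell' \ge C\,\theta^{-2}\log(1/\gamma)$ for a suitable constant $C$, using $1-\lambda \ge 3/4$. A standard counting argument then shows that the number of \emph{distinct} walk vertices landing in $B$ is at least $(\mu-\theta)\ell$ with the same failure probability, after absorbing the small overhead for duplicate removal into constants.

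The ``main obstacle'' is really just a matter of calibrating parameters: choosing $\lambda$ small enough in terms of $\theta$ so that the Expander Chernoff deviation comes out to $\theta$ rather than a larger quantity, while keeping the per-step randomness $\log D$ at $O(\log(1/\theta))$. Both are handled by the explicit expander constructions cited above. Aside from this, the rest of the argument is bookkeeping: verifying the seed length, handling the distinct-samples bookkeeping, and checking that the threshold $\ell_0 = \Omega(\theta^{-2}\log(1/\gamma))$ suffices in the Chernoff application. The result is essentially Lemma 8.2 of Vadhan~\cite{vadhan04}, so for the paper itself citing that reference (as is done) is the cleanest route.
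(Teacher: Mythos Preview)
Your proposal is correct and follows exactly the approach the paper indicates: expander random walks with the Expander Chernoff Bound, oversampling slightly to discard repeats, and citing Vadhan's Lemma~8.2 for the precise statement. The paper itself gives only this sketch and the citation rather than a full proof, so your write-up is, if anything, more detailed than what appears there.
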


We will use the above samplers to pick the random positions in which
the blocks holding encoded control information are interspersed with
the data blocks. The sampling guarantee will ensure that a reasonable
fraction of the control blocks have no more than a fraction $p+\eps$
of errors when the total fraction of errors is at most $p$.

\subsubsection{Almost $t$-wise independent permutations}
\begin{defn}
  A distribution ${\cal D}$ on $\spcbound_n$ (the set of permutations of
  $\{1,2,\dots,n\}$) is said to {\em almost $t$-wise independent} if
  for every $1 \le i_1 < i_2 < \cdots < i_t \le n$, the distribution  of $(\pi(i_1),\pi(i_2),\dots,\pi(i_t))$ for $\pi$ chosen according
  to ${\cal D}$ has statistical distance at most $2^{-t}$ for the
  uniform distribution on $t$-tuples of $t$ distinct elements from
  $\{1,2,\dots,n\}$. \closedef
\end{defn}
A uniformly random permutation of $\{1,2,\dots,n\}$ takes $\log n!=
\Theta(n \log n)$ bits to describe. The following result shows that
almost $t$-wise independent permutations can have much shorter
descriptions.
\begin{prop}[\cite{KNR}]
\label{prop:indep-perms}
  For all integers $1 \le t \le n$, there exists $D = O(t \log n)$ and
  an explicit map $\knr : \{0,1\}^\sigma\rightarrow \spcbound_n$, computable in
  time polynomial in $n$, such that the distribution $\knr(s)$ for
  random $s \in \{0,1\}^{\sigma}$ is almost $t$-wise independent.
\end{prop}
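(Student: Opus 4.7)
The plan is to build the required distribution via the Luby--Rackoff approach: a constant-round Feistel network whose round functions are drawn from a $t$-wise independent hash family, following the adaptation in \cite{KNR} of classical Luby--Rackoff to the limited-independence setting.

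Concretely, identify $\{1,\dots,n\}$ with a subset of $\bit{2m}$, where $2m = \lceil \log n \rceil$ rounded up to the nearest even integer; any overflow elements of $\bit{2m}\setminus\{1,\dots,n\}$ are handled at the end by ``cycle-walking'' (iterating the Feistel permutation on out-of-range outputs until it lands back inside $\{1,\dots,n\}$). Draw independent round functions $f_1,\dots,f_k : \bit{m}\to\bit{m}$ from a $t$-wise independent family---for instance, each $f_i$ is a random polynomial of degree $t-1$ over $\F_{2^m}$, described by $O(tm)$ bits---and define $\pi$ to be the $k$-round Feistel permutation built from $f_1,\dots,f_k$. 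For a suitable constant $k$ (e.g.\ $k=4$), the total seed length is $O(ktm) = O(t\log n)$, matching the claimed bound.

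The almost $t$-wise independence is then established by adapting the Luby--Rackoff analysis. One identifies a short list of ``collision events'' among the inputs fed to the round functions: in the absence of all such events, the image of any $t$ queries under $\pi$ is distributed uniformly over sequences of $t$ distinct elements of $\bit{2m}$. Because each collision event is determined by at most $t$ evaluations of each $f_i$, a $t$-wise independent family yields exactly the same event probabilities as a truly random one, and hence the same statistical-distance bound in the $t$-query model. Standard bookkeeping gives a bound of $\poly(t)/2^m$ on this distance, which falls below $2^{-t}$ for the chosen parameters (with a small padding of $m$ by an additive $O(t)$ if needed, still within the $O(t\log n)$ seed budget).

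The main obstacle I anticipate is two-fold. First, ensuring that cycle-walking truly preserves almost $t$-wise independence when $n$ is not a power of two: this follows by observing that any $t$ queries to the induced permutation on $\{1,\dots,n\}$ trigger at most $O(t)$ applications of the underlying $\bit{2m}$-Feistel permutation in expectation (and with overwhelming probability), so the Luby--Rackoff collision analysis extends after a constant-factor inflation in the number of events to be tracked. Second, controlling the Luby--Rackoff constants so that a constant number of rounds truly suffices across the full range $t \le n$; for very large $t$ one can, as a sanity fallback, sample a uniformly random permutation directly, which itself costs $\log(n!) = O(n\log n) = O(t \log n)$ bits and so stays within the claimed seed length.
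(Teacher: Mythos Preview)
The paper does not give its own proof of this proposition; it is quoted as a black-box result from \cite{KNR}. Your sketch is a reasonable outline of the Feistel-based construction that underlies that citation, so at the level of ``what is the approach'' you are aligned with what the paper is invoking.

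One technical point in your sketch deserves more care. You handle the $2^{-t}$ statistical-distance target by padding $m$ with an additive $O(t)$, and separately argue that cycle-walking on $t$ queries incurs only $O(t)$ Feistel evaluations. These two fixes pull against each other: once you pad by $O(t)$, the Feistel domain has size $n\cdot 2^{O(t)}$, so a single cycle-walk lands inside $\{1,\dots,n\}$ with probability only $2^{-O(t)}$, and the expected number of Feistel evaluations per original query is $2^{O(t)}$ rather than $O(1)$. That inflated query count then feeds back into the Luby--Rackoff collision bound, and you also need indistinguishability for that many (adaptive) queries, not just $t$. Your large-$t$ fallback (sample a uniform permutation outright) is fine, and for small $t=O(\log n)$ the blow-up is still polynomial in $n$ so computability is not threatened; but getting the distance below $2^{-t}$ across the full range with seed length $O(t\log n)$ requires the more careful accounting actually carried out in \cite{KNR} (adjusting the number of rounds and the error/seed trade-off) rather than the bare ``$k=4$ rounds plus pad by $O(t)$'' recipe. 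The architecture is right; the parameter bookkeeping is where the real work sits.
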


\subsubsection{$t$-wise independent bit strings}
%
%
We will also need small sample spaces of binary strings in $\{0,1\}^n$
which look uniform for any $t$ positions.
\begin{defn}
  A distribution ${\cal D}$ on $\{0,1\}^n$ is said to {\em $t$-wise independent}
  if for every $1 \le i_1 < i_2 < \cdots < i_t \le n$, the
  distribution of $(x_{i_1},x_{i_2},\dots,x_{i_t})$ for $x =
  (x_1,x_2,\dots,x_n)$ chosen according to ${\cal D}$ equals the uniform distribution on $\{0,1\}^t$. \closedef
\end{defn}
Using evaluations of degree $t$ polynomials over a field of
characteristic $2$, the following well known fact can be shown. We
remark that the optimal seed length is about $\frac{t}{2} \log n$ and
was achieved in \cite{ABI86}, but we can work with the weaker $O(t \log
n)$ seed length.
\begin{prop}
\label{prop:twise}
Let $n$ be a positive integer, and let $t \le n$. There exists $\sigma \le O( t \log n)$ and an explicit map $\twise_t: \{0,1\}^{\sigma} \rightarrow \{0,1\}^n$, computable in time polynomial in $n$, such that the distribution $\twise_t(s)$ for random $s \in \{0,1\}^{\sigma}$ is $t$-wise independent.
\end{prop}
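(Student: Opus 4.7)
The plan is to use the standard polynomial-evaluation construction of $t$-wise independent random variables, taken over a binary extension field and then projected down to a single bit per coordinate. Concretely, choose the smallest integer $k$ with $2^k \ge n$, and let $\F = \F_{2^k}$, so $k = \lceil \log n \rceil$. Fix $n$ distinct elements $\alpha_1,\dots,\alpha_n \in \F$ (using, e.g., the first $n$ elements under some canonical enumeration of $\F$ as bit-strings of length $k$). The seed $s \in \{0,1\}^{\sigma}$ will be interpreted as the vector of coefficients $(c_0,c_1,\dots,c_{t-1}) \in \F^t$ of a polynomial $p_s(X) = \sum_{j=0}^{t-1} c_j X^j$ of degree less than $t$. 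This requires $\sigma = tk = O(t \log n)$ bits. The output $\twise_t(s) \in \{0,1\}^n$ is defined coordinate-wise by taking the $i$-th bit to be some fixed $\F_2$-linear projection (say, the first bit in the chosen representation of $\F$ as $\F_2^k$) of $p_s(\alpha_i) \in \F$.

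The next step is to verify $t$-wise independence. Fix any $t$ distinct indices $i_1 < \cdots < i_t$; then the evaluation map $(c_0,\dots,c_{t-1}) \mapsto (p_s(\alpha_{i_1}),\dots,p_s(\alpha_{i_t}))$ is an $\F$-linear map $\F^t \to \F^t$ whose matrix is the Vandermonde matrix in $\alpha_{i_1},\dots,\alpha_{i_t}$. Since the $\alpha_{i_j}$ are distinct, this Vandermonde matrix is invertible, so the map is a bijection. Hence when $s$ is uniform on $\F^t$, the tuple $(p_s(\alpha_{i_1}),\dots,p_s(\alpha_{i_t}))$ is uniform on $\F^t$. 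Applying any fixed function coordinate-wise (here the single-bit projection $\F \to \{0,1\}$) preserves independence of the marginals, and uniformity on $\F$ together with $\F_2$-linearity of the projection implies that each projected bit is uniform on $\{0,1\}$. Therefore $(\twise_t(s)_{i_1},\dots,\twise_t(s)_{i_t})$ is uniform on $\{0,1\}^t$, which is exactly the $t$-wise independence requirement.

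For efficiency, the field $\F_{2^k}$ can be constructed explicitly by finding an irreducible polynomial of degree $k$ over $\F_2$ in time $\mathrm{poly}(k) = \mathrm{poly}(\log n)$, and field arithmetic takes time $\mathrm{poly}(k)$ per operation. Computing $\twise_t(s)$ requires $n$ evaluations of a degree-$(t-1)$ polynomial, each of which takes $O(t)$ field operations via Horner's rule; the total running time is $O(nt \cdot \mathrm{poly}(\log n)) = \mathrm{poly}(n)$ since $t \le n$. This meets the stated polynomial-time bound.

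There is essentially no ``hard part'' here: the Vandermonde argument is standard, and the only minor care needed is in making sure that the bit-level output (rather than the $\F$-element level output) genuinely retains $t$-wise independence, which is immediate because $t$-wise independence is preserved by any coordinate-wise deterministic function. As the excerpt already notes, sharper seed lengths of roughly $(t/2)\log n$ are achievable via the construction of Alon, Babai, and Itai~\cite{ABI86}, but the simpler construction above already yields the $O(t \log n)$ bound claimed in the proposition, which is all that is required for our application.
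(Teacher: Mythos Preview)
Your proof is correct and follows precisely the approach the paper indicates: the paper does not give a detailed proof of this proposition, merely noting just before the statement that it can be shown ``using evaluations of degree $t$ polynomials over a field of characteristic $2$'' and that the result is a well-known fact. Your write-up fleshes out exactly this standard construction, including the Vandermonde argument and the bit-projection step, so there is nothing to add.
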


\subsection{Capacity achieving codes for $t$-wise independent errors}

Forney~\cite{forney} constructed binary linear concatenated codes that
achieve the capacity of the binary symmetric channel
$\BSC_p$. Smith~\cite{smith07} showed that these codes also correct
patterns of at most a fraction $p$ of errors w.h.p. when the error
locations are distributed in a $t$-wise independent manner for large
enough $t$. The precise result is the following.  

\begin{prop}
\label{prop:concat-codes}
For every $p$, $0 < p < 1/2$ and every $\eps > 0$, there is an
explicit family of binary linear codes of rate $R \ge 1- H(p) -\eps$
such that a code $\REC : \{0,1\}^{Rn} \rightarrow \{0,1\}^n$ of block
length $n$ in the family provides the following guarantee. There is a
polynomial time decoding algorithm $\Dec$ such that for every message
$m \in \{0,1\}^{Rn}$, every error vector $e \in \{0,1\}^n$ of Hamming
weight at most $pn$, and every almost $t$-wise independent
distribution ${\cal D}$ of permutations of $\{1,2,\dots,n\}$, we have
\[ \Dec(\REC(m)+\pi(e)) = m \] 
with probability at least $1-2^{-\Omega(\eps^2 t)}$ over the choice of
a permutation $\pi \in_R {\cal D}$, as long as $\omega(\log n) < t<
\eps n /10$. (Here $\pi(e)$ denotes the permuted vector: $\pi(e)_i =
e_{\pi(i)}$.)
\end{prop}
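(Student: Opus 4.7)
The plan is to analyze the standard concatenated-code decoder for Forney-type constructions: first run a maximum-likelihood (or any capacity-achieving) decoder for $\BSC_p$ on each inner block, then feed the resulting outer word into a unique decoder for the outer code. Write $\REC = C_{\mathrm{out}} \circ C_{\mathrm{in}}$, where $C_{\mathrm{out}} : \F_{2^a}^{k_{\mathrm{out}}} \to \F_{2^a}^{n_{\mathrm{out}}}$ is an explicit binary-linear (over $\F_{2^a}$) code of rate $1-\eps/2$ that corrects a constant fraction $\kappa = \Omega(\eps)$ of symbol errors in linear time (e.g.\ a Spielman-style expander code), and $C_{\mathrm{in}} : \F_2^a \to \F_2^b$ is a binary-linear inner code of rate $\ge 1-H(p)-\eps/2$ with inner-decoding error probability at most $\kappa/10$ on $\BSC_p$. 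By Shannon's theorem such inner codes exist once $a$ is a large enough constant depending on $p,\eps$, and can be found once-and-for-all by brute force at cost independent of $n$. The block length is $n = n_{\mathrm{out}} \cdot b$, and $R \ge (1-\eps/2)(1-H(p)-\eps/2) \ge 1 - H(p) - \eps$.

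Let $X_i \in \{0,1\}$ indicate that the $i$-th inner block is decoded to the \emph{wrong} outer symbol. Because $\REC$ is linear, whether $X_i = 1$ depends only on the error pattern $\pi(e)$ restricted to block $i$, and by the correction radius of $C_{\mathrm{out}}$, if $\sum_i X_i < \kappa n_{\mathrm{out}}$ then the overall decoder recovers $m$ exactly. The task therefore reduces to proving
\begin{equation*}
  \Pr_{\pi \in {\cal D}}\Bigl[\textstyle\sum_i X_i \ge \kappa n_{\mathrm{out}}\Bigr] \;\le\; 2^{-\Omega(\eps^2 t)}.
\end{equation*}

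For a \emph{uniform} random permutation, the number of errors landing in block $i$ is hypergeometric with mean at most $pb$, so by a Chernoff bound for sampling without replacement the block sees more than $(p+\eps/4)b$ errors with probability $\le 2^{-\Omega(\eps^2 b)}$; combined with the inner-code guarantee, this yields $\Pr_{\pi}[X_i = 1] \le \alpha := \kappa/5$ for each $i$. The crucial observation is that for any fixed set $P$ of $\ell := \lfloor t/b \rfloor$ blocks, the joint law of $(X_i)_{i\in P}$ is a function of the $\ell b \le t$ values $(\pi(j))_{j \in \bigcup_{i\in P}\text{block }i}$, so since ${\cal D}$ is almost $t$-wise independent with statistical distance $2^{-t}$,
\begin{equation*}
  \Pr_{\pi \in {\cal D}}\Bigl[\textstyle\prod_{i\in P} X_i = 1\Bigr] \;\le\; \alpha^{\ell} + 2^{-t}.
\end{equation*}

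Given this moment bound, I would finish by invoking the Bellare--Rompel tail inequality (the symmetric-function argument exactly as in Claim~\ref{clm:prob-fact}): setting $\ell = \Theta(\eps^2 t / \log n)$ so that $\ell \log n_{\mathrm{out}} \ll t$, the contribution $n_{\mathrm{out}}^{O(\ell)}\cdot 2^{-t}$ coming from the almost-independence slack is absorbed, and the dominant term gives $\Pr[\sum_i X_i \ge 3\alpha n_{\mathrm{out}}] \le 2^{-\Omega(\eps^2 t)}$; since $3\alpha \le \kappa$, this is what we need. The main obstacle is the careful bookkeeping in the last step: the per-block failure probability is only bounded by a hypergeometric/Chernoff calculation (not by independent coin flips), and the $2^{-t}$ statistical distance from almost $t$-wise independence must be kept from polluting the moment bound, which is precisely why the hypothesis $\omega(\log n) < t < \eps n/10$ is invoked — the upper bound ensures the hypergeometric tail is in the Chernoff regime, and the lower bound ensures $2^{-t}$ dominates the combinatorial $n_{\mathrm{out}}^{O(\ell)}$ factor.
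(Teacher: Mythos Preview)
The paper does not prove this proposition; it is quoted from~\cite{smith07} and stated as an ingredient. Your outline is the argument used there, and it is also precisely the machinery the paper redeploys in Section~\ref{sec:logspace} (Lemma~\ref{lem:l-wise-near-indep} plus Claim~\ref{clm:prob-fact}) for the payload analysis against space-bounded channels: bound the $\ell$-wise moments of the block-failure indicators via the limited independence of $\pi$, then feed those moments into the symmetric-function tail inequality.

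A few bookkeeping points in your last paragraph need tightening, and they are exactly the ``careful bookkeeping'' you flagged. First, you define $\ell = \lfloor t/b\rfloor$ and then later write ``setting $\ell = \Theta(\eps^2 t/\log n)$''; the $\log n$ is spurious, since $b$ is an $\eps$-dependent constant (so $\ell = \Theta_\eps(t)$, with the $\eps$-dependence coming from $b$). Second, the factor multiplying the $2^{-t}$ slack in Claim~\ref{clm:prob-fact} is $\binom{n_{\mathrm{out}}}{\ell}\big/\binom{3\alpha n_{\mathrm{out}}}{\ell} \approx (1/3\alpha)^\ell = (O(1/\eps))^\ell$, not $n_{\mathrm{out}}^{O(\ell)}$; consequently the condition that kills the slack term is $b \gg \log(1/\eps)$, not $t \gg \ell\log n$. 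Third, your step ``combined with the inner-code guarantee, this yields $\Pr[X_i=1]\le\kappa/5$'' conflates two things: the inner code is specified to be good for $\BSC_p$, not for arbitrary error patterns of weight $\le(p+\eps/4)b$, so the hypergeometric weight bound alone is not enough. What you actually need (and what the paper uses implicitly in Lemma~\ref{lem:l-wise-near-indep}) is that the restriction of a uniformly permuted $e$ to any $\ell b\le t$ coordinates is statistically close to i.i.d.\ $\mathrm{Bernoulli}(\wt(e)/n)$ bits, whence each block behaves like $\BSC_q$ for some $q\le p$, which is only easier for the inner decoder. None of this changes the architecture of the argument, but it is where the constants and the roles of the hypotheses $t=\omega(\log n)$ and $t<\eps n/10$ actually enter.
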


We will use the above codes (which we denote $\REC$, for
``random-error code'') to encode the actual data in our stochastic
code construction.

\section{Capacity-achieving codes for average error}
\label{app:codes-avg-error}
The average error criterion is an extensively studied topic in the
literature on arbitrarily varying channels; see the survey
\cite{LN-avc-survey} and the many references therein. Here we assume
the message is unknown to the channel and the decoding error
probability is taken over a uniformly random choice of the message and
the noise of the channel. The following defines this notion for the
special case of the additive errors. The idea is that we want every
error vector to be bad for only a small fraction of messages.

\begin{defn}[Codes for average error]
\label{def:avc-code}
  A code $C$ with encoding function ${\cal E} : {\cal M} \rightarrow
  \Sigma^n$ is said to be (efficiently) {\em $p$-decodable
    with average error $\delta$} if there is a (polynomial time
  computable) decoding function $D: \Sigma^n \rightarrow {\cal M} \cup
  \{ \bot \}$ such that for {\em every} error vector $e \in \Sigma^n$,
  the following holds for at least a fraction $(1-\delta)$ of messages
  $m \in {\cal M}$: $D({\cal E}(m)+ e)= m$. \closedef 
\end{defn}

\subsection{Codes for average error from stochastic codes for additive errors}

A slightly more general notion of  stochastic codes
(Definition~\ref{def:stochastic}) implies codes for average error.

\begin{defn}[Strongly decodable stochastic codes]
\label{def:ssc}
  We say a stochastic code is strongly
  $p$-decodable with probability $1-\delta$  if the decoding
  function correctly computes \emph{both} the message $m$ and
  randomness $\omega$ used at the encoder, with probability at least
  $1-\delta$. \closedef
\end{defn}

Using a {\em strongly decodable} stochastic code we can get a code for
average error by simply using the last few bits of the message as the
randomness of the stochastic encoder. If the number of random bits
used by the stochastic code is small compared to the message length,
the rates of the codes in the two models are almost the same.
\begin{observation}
\label{obs:ssc-avc}
  A stochastic code $\mathsf{SSC}$ that is strongly $p$-decodable with
  probability $1-\delta$ gives a code $\mathsf{AVC}$ of the same block
  length that is $p$-decodable with average error $\delta$. If the
  ratio of number of random bits to message bits in $\mathsf{SSC}$ is
  $\lambda$, the rate of $\mathsf{AVC}$ is $(1+\lambda)$ times the
  rate of $\mathsf{SSC}$.
\end{observation}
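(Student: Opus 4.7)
\textbf{Proof proposal for Observation~\ref{obs:ssc-avc}.} The plan is a direct reduction: use the randomness bits of the stochastic code as additional message bits in the deterministic average-error code. Concretely, let $\mathsf{Enc}_{\mathsf{SSC}} : \{0,1\}^k \times \{0,1\}^b \to \{0,1\}^n$ and $\mathsf{Dec}_{\mathsf{SSC}}$ be the encoder and strong decoder of $\mathsf{SSC}$, so by Definition~\ref{def:ssc} we have, for every $m \in \{0,1\}^k$ and every error vector $e$ of weight at most $pn$,
\[ \Pr_{\omega \in \{0,1\}^b}\bigl[\mathsf{Dec}_{\mathsf{SSC}}(\mathsf{Enc}_{\mathsf{SSC}}(m,\omega)+e)=(m,\omega)\bigr] \ge 1-\delta. \]
Define $\mathsf{AVC}$ with message space $\{0,1\}^k \times \{0,1\}^b$ via $\mathcal{E}(m,\omega) := \mathsf{Enc}_{\mathsf{SSC}}(m,\omega)$ and decoder $D := \mathsf{Dec}_{\mathsf{SSC}}$. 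The block length is $n$, matching $\mathsf{SSC}$.

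The first step is to verify the average error guarantee. Fix any error vector $e \in \{0,1\}^n$ of weight at most $pn$. By the strong decodability property above, for each fixed $m \in \{0,1\}^k$, at most a $\delta$ fraction of choices of $\omega \in \{0,1\}^b$ lead $\mathsf{Dec}_{\mathsf{SSC}}$ to fail to output $(m,\omega)$. Averaging over $m$ (equivalently, viewing $(m,\omega)$ as a uniformly random message in $\{0,1\}^{k+b}$), at most a $\delta$ fraction of messages $(m,\omega)$ are decoded incorrectly. This is precisely the condition of Definition~\ref{def:avc-code} with error $\delta$.

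The second (purely arithmetical) step is the rate calculation. The code $\mathsf{AVC}$ carries $k+b$ information bits in $n$ transmitted bits, so its rate is $(k+b)/n$. Writing $\lambda = b/k$ for the ratio of random bits to message bits in $\mathsf{SSC}$, we have $(k+b)/n = (1+\lambda)\cdot (k/n)$, which is exactly $(1+\lambda)$ times the rate $k/n$ of $\mathsf{SSC}$. If $\mathsf{Dec}_{\mathsf{SSC}}$ runs in polynomial time then so does $D$, giving the parenthetical ``efficient'' version as well.

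There is no genuine obstacle here: the whole content of the observation is the recognition that the local randomness of a \emph{strongly} decodable stochastic encoder can be repurposed as user data, because the strong decoder recovers that randomness unambiguously. The only subtlety worth flagging is that the argument crucially uses strong decodability (recovery of $\omega$), not merely message decodability; otherwise two distinct pairs $(m,\omega)$ and $(m,\omega')$ could collide and the reduction to the average error criterion would fail.
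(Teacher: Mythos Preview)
Your proposal is correct and matches the paper's approach exactly: the paper does not give a formal proof of this observation, only the one-sentence remark preceding it that one obtains the average-error code ``by simply using the last few bits of the message as the randomness of the stochastic encoder,'' which is precisely your construction. Your write-up just fills in the routine details (the averaging step and the rate arithmetic), and your closing remark that \emph{strong} decodability is essential is on point.
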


\subsection{Explicit capacity-achieving codes for average error}
\label{app:explicit-avg-error}
We would now like to apply Observation~\ref{obs:ssc-avc} to the
stochastic codes constructed in Section~\ref{sec:efficient-oc-codes}
and also construct explicit codes achieving capacity for the average
error criterion. For this, we need to ensure that the decoder for the
stochastic code can also recover all the random bits used at the
encoding. We already showed (Lemma~\ref{lem:controlinfo}) that the
random string $\omega$ comprising the control information is in fact
correctly recovered w.h.p. However, there is no hope to recover all
the random strings $r_1,r_2,\dots,r_\ell$ used by the various $\SC$
encodings. This is because some of these control blocks could incur
much more than a fraction $p + \eps$ of errors (or in fact be totally
corrupted).

Our idea is to use the {\em same} random string $r$ for each of the
$\ell$ encodings $\SC(A_i,r)$ in Step~\ref{state:sc-encoding}. Since
each run of {\sc SC-Decode} is correct with probability at least
$1-c_1/N^2$, by a union bound over all $n$ blocks, we can claim that
all the following events occur with probability at least $1- c_1/N$
(over the choice of $r$):
\begin{quote}
  Among the control blocks, all of the at least $\eps\ell/2$ control
  blocks with at most a fraction $p+\eps$ of errors are decoded
  correctly, along with the random string $r$, by {\sc
    SC-Decode}. Further, {\sc SC-Decode} outputs $\bot$ on all the
  other control blocks.  Thus the correct random string $r$ gets at
  least $\eps \ell/2$ ``votes.''
\end{quote}
By Lemma~\ref{lem:payloadblocks}, with probability at least
$1-\exp(-\Omega(\eps^2 N/ \log^2 N)))$ (over the choice of $\omega$),
the number of payload blocks that get accepted as control blocks is at
most $\eps \ell/24$. (Note that this lemma only used the $t'$-wise independence of the offset string $\Delta$.)

The above facts imply that the control information $\omega$ is
recovered correctly with probability at least $1-O(1/N)$ over the
choice of $(\omega,r)$ (this is the analog of
Lemma~\ref{lem:controlinfo}).  Also $r$ is the unique string which
will get at least $\eps \ell/2$ votes from the various runs of {\sc
  SC-Decode}. Therefore it can be correctly identified (with
probability at least $1-O(1/N)$ over the choice of $(\omega,r)$) after
running $\text{\sc SC-Decode}$ on all the $n$ blocks. We can thus conclude
the following result on capacity-achieving codes for average error
(Definition~\ref{def:avc-code}).
\begin{lemma}[Polynomially small average error]
\label{lem:efficient-avc-code}
For every $p\in(0,1/2)$, and every $\eps>0$, there is an explicit
family of binary codes of rate at least $1-H(p)-\eps$ that are
efficiently $p$-decodable with average error $O(1/N)$ where $N$ is the
block length of the code.
\end{lemma}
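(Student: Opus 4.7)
The plan is to apply Observation~\ref{obs:ssc-avc} to a modified version of the stochastic code of Theorem~\ref{thm:efficient-oc-code}, after first upgrading it to be \emph{strongly} $p$-decodable in the sense of Definition~\ref{def:ssc}. The key obstacle is that the original construction's randomness includes $\ell$ independent strings $r_1,\dots,r_\ell$ used to encode the control blocks with $\SC$; while Lemma~\ref{lem:controlinfo} recovers $\omega=(s_\pi,s_\Delta,s_T)$, the individual $r_i$ associated with heavily corrupted control blocks cannot possibly be recovered.

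The fix is to collapse the randomness by using a \emph{single} shared string $r$ in place of $r_1,\dots,r_\ell$, i.e.\ set $C_i\gets \SC(A_i,r)$ for every control block. Lemmas~\ref{lem:sampling}, \ref{lem:controlblocks}, \ref{lem:payloadblocks} and \ref{lem:controlinfo} still apply essentially verbatim: the sampler analysis and the $t'$-wise independence of $\Delta$ do not depend on how the $\SC$ coins are structured, and the only place independence across control blocks was used was in Lemma~\ref{lem:controlblocks}, which can be replaced by a direct union bound. Specifically, by Lemma~\ref{lem:sampling} at least $\eps\ell/2$ control blocks have error rate at most $p+\eps$ with high probability over $s_T$; for each such block, Proposition~\ref{prop:const-rate-AVC} guarantees correct decoding with probability $1-c_1/N^2$ over $r$, so a union bound over all $n\le N$ blocks shows that with probability $1-O(1/N)$, \emph{every} low-error control block is decoded correctly by $\text{\sc SC-Decode}$, in particular revealing the shared randomness $r$.

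Next, observe that the shared $r$ emerges as the ``SC-randomness'' in at least $\eps\ell/2$ blocks, whereas any incorrect candidate random string can only arise from a payload block mistaken for a control block, and Lemma~\ref{lem:payloadblocks} bounds the number of such blocks by $\eps\ell/24<\eps\ell/2$. Therefore $r$ is the unique string attaining plurality among the SC-recoveries and the decoder can identify it. Combining this with the recovery of $\omega$ and $m$ from Theorem~\ref{thm:efficient-oc-code}, the modified code is strongly $p$-decodable with error probability $O(1/N)$.

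Finally, apply Observation~\ref{obs:ssc-avc}: instantiate the encoder's randomness using the last $|\omega|+|r| = O(\eps^2 N)+O(\log N)$ message bits. The ratio $\lambda$ of random bits to message bits is $O(\eps^2)$, so the rate of the resulting deterministic code is $(1-O(\eps^2))\cdot(1-H(p)-O(\eps))\ge 1-H(p)-\eps'$ (after rescaling $\eps$), while the decoding error probability, taken over a uniformly random message, is $O(1/N)$ for every fixed error vector of weight at most $pN$. This is exactly the guarantee of Definition~\ref{def:avc-code}, proving the lemma.
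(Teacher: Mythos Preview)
Your approach is essentially identical to the paper's: collapse $r_1,\dots,r_\ell$ to a single shared $r$, replace the independence-based concentration in Lemma~\ref{lem:controlblocks} by a union bound over all blocks (losing the exponential tail but retaining an $O(1/N)$ failure probability), identify $r$ by plurality among the $\SC$-decodings, and then invoke Observation~\ref{obs:ssc-avc}.

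One small gap: your assertion that ``any incorrect candidate random string can only arise from a payload block mistaken for a control block'' does not follow from what you established. Your union bound only guarantees that \emph{low}-error control blocks decode correctly; a control block with error rate exceeding $p+\eps$ could in principle decode to some $(m',r')$ with $r'\neq r$ rather than to $\bot$, contributing spurious votes. The paper closes this by including in the same union bound the event that every high-error control block outputs $\bot$, which also holds with probability $1-c_1/N^2$ per block by the second clause of Proposition~\ref{prop:const-rate-AVC}. With that addition your plurality argument goes through exactly as in the paper.
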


One can reduce the error probability in this theorem by using redundant, but $t$-wise independent, values $r_i$ for the control block encodings. Specifically, let $(r_1,...,r_\ell)$ be a random codeword from a Reed-Solomon code of dimension $\eps \ell /8$ (the simpler construction above corresponds to a majority code). Then the $r_i$ values are, in particular, $\eps \ell /8$-wise independent. One can modify the proof of Lemma~\ref{lem:controlblocks} (which states that sufficiently many control blocks are recovered) to rely on only this limited independence. Under the same conditions that the control information is correctly recovered, there is enough information to recover the entire vector $r_1,...,r_\ell$. We can thus prove the following:
\begin{theorem}[Exponentially small average error]
\label{thm:efficient-avc-code}
For every $p\in(0,1/2)$, and every $\eps>0$, there is an explicit
family of binary codes of rate at least $1-H(p)-\eps$ that are
efficiently $p$-decodable with average error
$\exp(-\Omega_\eps(N/\log^2 N))$ where $N$ is the block length of the
code.
\end{theorem}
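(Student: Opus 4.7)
The plan is to construct an explicit deterministic code for the average-error criterion by applying Observation~\ref{obs:ssc-avc} to a suitably modified version of the stochastic code from Theorem~\ref{thm:efficient-oc-code}. Recall that Observation~\ref{obs:ssc-avc} converts any strongly $p$-decodable stochastic code (in the sense of Definition~\ref{def:ssc}, where the decoder recovers both the message $m$ and the random string $\omega$ with high probability) into a code for average error with essentially the same rate, provided the number of random bits is a small fraction of the message length. So the task reduces to strengthening the stochastic encoder of Algorithm~\ref{alg:enc} so that both the control information $(s_\pi,s_\Delta,s_T)$ and the coins $(r_1,\dots,r_\ell)$ used for the $\ell$ inner $\SC$ encodings are recovered with probability $1-\exp(-\Omega_\eps(N/\log^2 N))$.

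The control information is already handled by Lemma~\ref{lem:controlinfo}. The difficulty is the $r_i$'s, since an arbitrarily corrupted control block reveals no information about its own $r_i$. To handle this I will modify the encoder so that $(r_1,\dots,r_\ell)$ is not drawn uniformly and independently but rather as a uniformly random codeword of a Reed-Solomon code of dimension $\eps\ell/8$ evaluated at the same point set $(\alpha_1,\dots,\alpha_\ell)$ used for the Reed-Solomon encoding of the control information. The additional randomness is only $O(\eps\ell\cdot\log N) = O(\eps^2 N/\log N)$ bits, which is negligible relative to the message length, so the rate bound $1-H(p)-\eps$ is preserved. Each individual $r_i$ is uniform, and the family $(r_1,\dots,r_\ell)$ is in particular $(\eps\ell/8)$-wise independent.

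With this change I will reprove Lemma~\ref{lem:controlblocks}: under full independence the number ${\bf Y}$ of ``bad'' control blocks (low-error-rate blocks that fail to decode, or high-error-rate blocks that are not rejected) was bounded by a standard Chernoff argument. Under $(\eps\ell/8)$-wise independence of the $r_i$'s I will instead apply the Bellare--Rompel tail bound for $t$-wise independent sums (already used in the proof of Lemma~\ref{lem:payloadblocks}) with $t=\Theta(\eps\ell)$, to conclude that with probability $1-\exp(-\Omega(\eps\ell))=1-\exp(-\Omega(\eps^2 N/\log N))$ at least $\eps\ell/4$ control blocks are correctly decoded and at most $\eps\ell/24$ are erroneously decoded. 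Lemmas~\ref{lem:sampling} and~\ref{lem:payloadblocks} are unaffected because they do not use independence among the $r_i$'s, and Lemma~\ref{lem:controlinfo} then yields correct recovery of $\omega=(s_\pi,s_\Delta,s_T)$ with probability $1-\exp(-\Omega_\eps(N/\log^2 N))$.

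Once $\omega$ is correctly recovered, the decoder knows the set $T$ of control-block positions. For each $i\in T$, the strong decodability property of the inner code $\SC$ (Proposition~\ref{prop:const-rate-AVC}) guarantees that a correctly decoded control block returns not only the message $A_i$ but also its coin string $r_i$. So the decoder obtains at least $\eps\ell/4$ correct $(\alpha_i,r_i)$ pairs from positions in $T$ and at most $\eps\ell/24$ erroneous ones. Since $(r_1,\dots,r_\ell)$ is a Reed-Solomon codeword of degree below $\eps\ell/8$, Proposition~\ref{prop:RS} recovers the whole vector whenever correct$-$errors exceeds $\eps\ell/8$, which holds since $\eps\ell/4-\eps\ell/24>\eps\ell/8$. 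The decoder thus outputs the full random string $\omega\,\|\,(r_1,\dots,r_\ell)$ along with $m$, making the modified code strongly $p$-decodable with probability $1-\exp(-\Omega_\eps(N/\log^2 N))$; Observation~\ref{obs:ssc-avc} then gives the claimed explicit deterministic code of rate $1-H(p)-\eps$ and average error $\exp(-\Omega_\eps(N/\log^2 N))$. The main technical obstacle is verifying the Bellare--Rompel-based reproof of Lemma~\ref{lem:controlblocks} under only $(\eps\ell/8)$-wise independent coins, since the original argument relied crucially on full independence of the $r_i$'s.
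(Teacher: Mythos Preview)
Your proposal is correct and takes essentially the same approach as the paper: replace the independent $r_i$'s by a random Reed--Solomon codeword of dimension $\eps\ell/8$, use the resulting $(\eps\ell/8)$-wise independence to redo the concentration bound in Lemma~\ref{lem:controlblocks}, and then recover the full vector $(r_1,\dots,r_\ell)$ via Reed--Solomon decoding under exactly the same agreement/error counts that already yield the control information. The paper's write-up is terser, but the construction and the analysis outline are identical; your explicit invocation of the Bellare--Rompel bound (capping $t$ at a small constant times the deviation $A$, as in the proof of Lemma~\ref{lem:payloadblocks}) is precisely the modification the paper alludes to.
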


\section{Impossibility Results for Bit-Fixing Channels when $p> \frac 1 4$}\label{sec:imposs}

We show that even very simple channels prevent reliable communication if they can introduce a fraction errors strictly greater than $1/4$. In particular, this result (a) separates the additive (i.e., oblivious) error model from bounded-space channels when $p>1/4$, and (b) shows that some relaxation of correctness is necessary to handle space- and time-bounded channels when $p>1/4$.

\begin{theorem}[Impossibility for $p>\frac 1 4$, detailed version]\label{thm:imposs} For every pair of randomized encoding/decoding algorithms $\Enc,\Dec$ that make $n$ uses of the channel and use a message space whose size tends to infinity with $n$, if a uniformly random message is sent over the channel, then
  \begin{enumerate}\item 
    there is a distribution over memoryless channels that alters at most
    $n/4$ bits \emph{in expectation} and causes a decoding error with
    probability at least $\frac 1 2 -o(1)$.
  \item for every $0<\nu<\frac 1 4$, there is an online
    space-$\lceil\log(n)\rceil$ channel $\channel_2$ that alters at
    most $ n(\frac 1 4 + \nu)$ bits (with probability 1) and
    causes a decoding error with probability $\Omega(\nu)$.
  \end{enumerate}
\end{theorem}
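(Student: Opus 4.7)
The plan is to use a symmetrization/coupling attack in the spirit of Ahlswede's characterization of AVC capacity. For both parts, the channel secretly samples a ``decoy'' message $M' \in \mathcal{M}$ uniformly and independently of the true message $M$, draws fresh encoder coins $R'$, computes the decoy codeword $Z = \Enc(M'; R')$, and samples $n$ independent uniform bits $B_1, \ldots, B_n$. Reading the transmitted codeword $C = \Enc(M; R)$ bit by bit, the channel outputs $Y$ with $Y_i = C_i$ if $B_i = 0$ and $Y_i = Z_i$ if $B_i = 1$.

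The heart of the argument is a symmetry claim: the joint distribution of $(M, M', R, R', B_1, \ldots, B_n)$ is invariant under the involution that swaps $(M, R) \leftrightarrow (M', R')$ and simultaneously flips every $B_i$. This involution fixes $Y$ but interchanges $M$ with $M'$, so $(M, Y)$ and $(M', Y)$ are identically distributed. Consequently, for any (possibly randomized) decoder $\hat M$, we have $\Pr[\hat M(Y) = M] = \Pr[\hat M(Y) = M']$, and these events are disjoint whenever $M \ne M'$, giving $\Pr[\hat M(Y) = M] \le \tfrac 1 2 + \tfrac{1}{2|\mathcal{M}|} = \tfrac 12 + o(1)$ since $|\mathcal{M}| \to \infty$. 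For the error count, $\hamdist(C, Y) = \sum_i B_i \cdot [C_i \ne Z_i]$ has expectation $\tfrac 12 \expec[\hamdist(C, Z)]$. Because $C_i$ and $Z_i$ are i.i.d.\ with identical marginals (after averaging over $M, R$), we have $\Pr[C_i \ne Z_i] \le \tfrac 12$ coordinatewise, so $\expec[\hamdist(C, Y)] \le n/4$.

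Part~(1) follows immediately: conditioned on the decoy codeword $Z$, the channel acts independently at each position (coordinate $i$ depends only on $C_i$, $Z_i$, and $B_i$), so the attack is manifestly a mixture of memoryless channels whose expected error count is at most $n/4$.

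For part~(2), I would realize the same attack as a randomized online branching program of width $2^{\lceil \log n \rceil}$. Nonuniformity lets each deterministic channel in the distribution hardcode a specific $Z$ and bit string $B$ in its transitions, so the working memory only needs to maintain an error counter of $\lceil \log n \rceil$ bits; once the counter reaches $\lfloor n(\tfrac 14 + \nu) \rfloor$, subsequent flips are suppressed, enforcing the hard error bound with probability one. The main obstacle is arguing that this truncation preserves a nontrivial confusion probability. For any fixed $(C, Z)$ the untruncated flip count is a sum of at most $n$ independent Bernoulli$(1/2)$ variables with mean at most $n/4$, so by Hoeffding's inequality the truncation takes effect with probability at most $\exp(-\Omega(\nu^2 n))$; thus the truncated channel is statistically close to the untruncated one, and the symmetry argument still delivers decoding error at least $\Omega(\nu)$ as claimed.
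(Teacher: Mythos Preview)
Your Part~(1) is correct and essentially identical to the paper's proof: the same swapping/symmetrization attack, the same coupling argument showing $(M,Y)$ and $(M',Y)$ are identically distributed, and the same Plotkin-type bound giving $\expec[\hamdist(C,Z)]\le n/2$ and hence expected error count $\le n/4$.

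Your Part~(2) has a genuine gap in the concentration step. You assert that ``for any fixed $(C,Z)$ the untruncated flip count is a sum of at most $n$ independent Bernoulli$(1/2)$ variables with mean at most $n/4$.'' This is false: for fixed $(C,Z)$ the flip count is $\mathrm{Binomial}(\hamdist(C,Z),\tfrac12)$ with mean $\hamdist(C,Z)/2$, and nothing prevents $\hamdist(C,Z)$ from being close to $n$ for particular pairs $(C,Z)$. The Plotkin bound only controls the \emph{average} of $\hamdist(C,Z)$ over the random draw of both codewords, not its worst case. Hence Hoeffding does not give a uniform $\exp(-\Omega(\nu^2 n))$ bound on the truncation probability, and the claim that the truncated and untruncated channels are statistically close overall does not follow.

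The paper fixes this with one extra conditioning step. It defines the event $G=\{\hamdist(C,Z)\le n(\tfrac12+\nu)\}$; since $\expec[\hamdist(C,Z)]\le n/2$, Markov gives $\Pr[G]\ge \frac{2\nu}{1+2\nu}\ge \nu$. Conditioned on $G$, the flip count is dominated by $\mathrm{Binomial}(n(\tfrac12+\nu),\tfrac12)$, so truncation at $n(\tfrac14+\nu)$ kicks in only with probability $\exp(-\Omega(\nu^2 n))$, and crucially $G$ is symmetric in $C$ and $Z$, so the symmetrization argument still yields conditional decoding error $\ge \tfrac12 - o(1)$. Multiplying by $\Pr[G]\ge\nu$ gives the claimed $\Omega(\nu)$ overall decoding error. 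This is also why the final bound is $\Omega(\nu)$ rather than the $\approx\tfrac12$ your argument would (incorrectly) yield.
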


Our proof adapts the impossibility results of Ahlswede~\cite{ahlswede}
on arbitrarily-varying channels. We present a self-contained proof for
completeness. Readers familiar with the AVCs literature will recognize
the idea of \emph{symmetrizability}  from \cite{ahlswede}. 

\mypar{The Swapping Channel}
We begin by considering a simple \emph{swapping} channel, whose
behavior is specified by a \emph{state} vector $s =(s_1,...,s_n) \in
\bit{n}$.  On input a transmitted word $c=(c_1,...,c_n) \in\bit{n}$,
the channel $\channel_s$ outputs $c_i$ in all positions where
$c_i=s_i$, and a random bit in all positions where $c_i\neq s_i$. The
bits selected randomly by the channel at different positions are
independent.

There are several equivalent characterizations that help to understand
the channel's behavior. First, we may view the channel as outputting
either $c_i$ or $s_i$, independently for each
position. 

$$\channel_s(c)_i =
\begin{cases}
  c_i & \text{if }c_i=s_i \\
  U\from \bit{} & \text{if }c_i \neq s_i
\end{cases}
= 
\begin{cases}
  c_i & \text{with prob. }1/2\\ 
  s_i & \text{with prob. }1/2\\ 
\end{cases}
$$

This view of the channel makes it obvious that the output distribution is symmetric with respect to the inversion of $c$ and $s$. That is, 

\begin{equation}
  \label{eq:swap}
  \channel_s(c) \text{ and } \channel_c(s) \text{are identically distributed}
\end{equation}

The key idea behind our lower bounds is that if $s$ is itself a valid codeword, then the decoder cannot tell whether $c$ was sent with state $s$, or $s$ was sent with state $c$. If $c$ and $s$ code different messages, then the decoder will make a mistake with probability at least 1/2.

Note that the expected number of errors introduced by the channel is half of the Hamming distance $\hamdist(c,s)$; specifically, the number of errors is distributed as Binomial$(\hamdist(c,s),\frac 1 2)$. As long as $\hamdist(c,s)$ is close to $n/2$, then the number of errors will be less than $n(\frac 1 4 + \nu)$ with high probability.

\mypar{Hard Channel Distributions} 
Given an stochastic encoder $\Enc(\cdot;\cdot)$, consider the following distribution on swapping channels: pick a random codeword in the image of $\Enc$ and use it as the state.
$$\channel^{main}(c) :
\begin{cases}
 \text{Select }m',r'\text{ uniformly at random}\\
 \text{Compute }s\gets \Enc(m',r') \\
\text{Output }\channel_s(c)
\end{cases}$$

\begin{lemma}Under the conditions of Theorem~\ref{thm:imposs}, for channel $\channel^{main}$:\\
(a)  The probability of a decoding error on a random message is $\frac 1 2 -o(1)$.
\\ 
(b) The expected number of bits altered by $\channel^{(main)}$ is at most $n/4$.
\end{lemma}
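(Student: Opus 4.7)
The proof of both parts is driven by the swap symmetry (\ref{eq:swap}) together with the fact that the transmitted message and the channel's ``state message'' are independent and identically distributed.

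For part (a), let $c = \Enc(m;r)$ be the transmitted codeword (with $m$ uniform on the message space $\mathcal M$ and $r$ the encoder's coins), and let $s = \Enc(m';r')$ be the codeword that $\channel^{main}$ picks as its swap state, with $(m',r')$ drawn independently from the same distribution as $(m,r)$. The plan is to exploit the joint symmetry: by (\ref{eq:swap}), for every fixed $c,s$ the channel output $\channel_s(c)$ has the same distribution as $\channel_c(s)$; and since $(m,r,m',r')$ has the same distribution as $(m',r',m,r)$, the entire triple $(m, m', \channel_s(c))$ has the same joint distribution as $(m', m, \channel_s(c))$. Consequently, for any (possibly randomized) decoder $\Dec$ run on the received word, $\Pr[\Dec(\cdot) = m] = \Pr[\Dec(\cdot) = m']$. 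On the event $\{m \neq m'\}$ the two events $\{\Dec = m\}$ and $\{\Dec = m'\}$ are disjoint, so each has probability at most $1/2$. Hence
\[
\Pr[\Dec = m] \le \Pr[m = m'] + \tfrac12 \;=\; \tfrac{1}{|\mathcal M|} + \tfrac12 \;=\; \tfrac12 + o(1),
\]
which gives the claimed $\tfrac12 - o(1)$ lower bound on the decoding error probability. I do not anticipate any real obstacle here beyond being careful that the symmetry is genuinely distributional, i.e., that $c$ and $s$ have the same marginal distribution and are independent, which follows from the definition of $\channel^{main}$.

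For part (b), I will compute the expected number of altered positions coordinate-wise. On input $c$ and state $s$, the channel outputs a uniformly random bit at each position where $c_i \neq s_i$, so the expected number of errors in position $i$ conditional on $c,s$ is $\tfrac12 \mathbf{1}[c_i\neq s_i]$. Taking expectation over the (independent) draws of $c$ and $s$, and letting $p_i := \Pr[c_i=1]$ (which equals $\Pr[s_i=1]$ since $c,s$ are identically distributed), we get
\[
\expec[\mathrm{errors\ at\ position\ }i] \;=\; \tfrac12 \Pr[c_i \neq s_i] \;=\; \tfrac12 \cdot 2 p_i(1-p_i) \;\le\; \tfrac18 \cdot 2 \;=\; \tfrac14.
\]
Summing over $i=1,\dots,n$ yields $\expec[\mathrm{total\ errors}] \le n/4$, as required. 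This part is essentially a one-line calculation once one notes that $2p_i(1-p_i)\le \tfrac12$ and exploits independence of $c$ and $s$.

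Overall the main subtlety is ensuring that the symmetry argument in part (a) covers a randomized decoder and is properly a statement about the joint distribution of $(m,m',\text{received word})$, not merely about conditional distributions for fixed codewords; the bound $n/4$ in part (b) then falls out automatically from the independence of $c$ and $s$ built into $\channel^{main}$.
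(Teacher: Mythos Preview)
Your proposal is correct and follows essentially the same approach as the paper. For part (a) you use the swap symmetry (\ref{eq:swap}) plus the i.i.d.\ structure of $(m,r)$ and $(m',r')$ to conclude $\Pr[\Dec=m]=\Pr[\Dec=m']$ and then bound, exactly as the paper does; for part (b) your coordinate-wise computation $\Pr[c_i\neq s_i]=2p_i(1-p_i)\le \tfrac12$ is precisely the paper's ``Plotkin'' claim that two i.i.d.\ bits disagree with probability at most $\tfrac12$, yielding $\expec[\hamdist(c,s)]\le n/2$ and hence expected errors $\le n/4$.
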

\begin{proof}
  (a) We are interested in bounding the probability of a decoding error:
  \begin{multline*}
    \Pr(\text{correct decoding})= \Pr_{{m,r \atop \text{channel
          coins}}}\Big(\Dec(\channel^{main}(\Enc(m,r)))=m\Big) 
\\ =
    \Pr_{{m,r, m',r' \atop \text{swapping
          coins}}}\Big(\Dec(\channel_{\Enc(m',r')}(\Enc(m,r)))=m\Big)\,.
  \end{multline*}

Because of the symmetry of the swapping channel, the right hand side is equal to the probability that the decoder outputs $m'$, rather than $m$. This is a decoding error as long as $m'$ differs from $m$. We assumed that the size of the message space grows with $n$, so the probability that $m=m'$ goes to 0 with $n$. We use ``right'' and ``wrong'' and shorthand for the events that decoding is correct and incorrect, respectively.

$$\Pr(\text{right}) = \Pr_{m,m'}(\text{decoder outputs }m') \leq \Pr(\text{wrong} \vee m=m') \leq \Pr(\text{wrong}) +o(1)\,.
$$
Thus, the probability of correct decoding is at most $\frac 1 2 -o(1)$. This proves part (a) of the Lemma.

It remains to show that the expected number of bit corruptions is at most $n/4$. This follows directly from the following fact, which is essentially the Plotkin bound from coding theory: 

\begin{claim}
  If $(m,r)$ is independent of and identically distributed to
  $(m',r')$, then the expectation of the distance
  $\hamdist(\Enc(m,r),\Enc(m',r'))$  is at most $n/2$.
\end{claim}

\begin{proof}
  By linearity of expectation, the expected Hamming distance is the sum, over positions $i$, of the probability that $\Enc(m,r)$ and $\Enc(m',r')$ disagree in the $i$th positions. The probability that two i.i.d. bits disagree is at most $\frac 12$, so the expected distance is at most $\frac n 2$.
\end{proof}

Part (b) of the lemma follows since the expected number of errors introduced by the swapping channel is half of the Hamming distance between the transmitted word and the state vector. \end{proof}

\mypar{Bounding the Number of Errors} 
To prove part (2) of Theorem~\ref{thm:imposs}, we will find a
(nonuniform) channel with a hard bound on the number of bits it
alters. In logarithmic space, it is easy for the channel to count the
number of bits it has flipped so far and stop altering bits when a
threshold has been exceeded. The difficult part is to show that such a
channel will still cause a significant probability of decryption
error.

As before, the channel will select $m',r'$ at random and run the
swapping channel $\channel_s$ with state $s=\Enc(m',r')$. In addition,
however, it will stop altering bits once the threshold of $n(\frac 1 4 +\nu)$ bits have
been exceeded.

Consider now the transmission of a random codeword $c=\Enc(m,r)$. Let
$G$ be the event that $\hamdist(c,s) \leq n(\frac 12+\nu )$. By a
Markov bound, the probability of $\overline{G}$ is at most
$\frac{1/2}{1/2+\nu}$, and so the probability of $G$ is $1-\Pr(\bar G)
\geq \frac{2\nu}{1+2\nu}\geq \nu$. Conditioned on $G$, the number of
bits altered by $W_s$ on input $c$ is dominated by Binomial$(n(\frac
12 + \nu),\frac 12)$. The probability that the number of bits altered
exceeds $n(\frac 1 4 + \nu)$ is therefore at most $\exp(-\Omega(\nu^2 n))$.

On the other hand, conditioned on $G$ there is a significant
probability of a decoding error. To see why this is the case, first note that
conditioned on $G$ the error-bounded channel will simulate
$\channel_s(c)$ nearly perfectly. Moreover, the event $G$ is symmetric
in $c$ and $s$, and so conditioning on $G$ does not help to
distinguish $\channel_c(s)$ from $\channel_s(c)$. By the same
reasoning as in the previous proof,
$$\Pr(\text{incorrect decoding}|G) \geq \frac 1 2 -o(1)\,.$$
Since $G$ has probability at least $\nu$, the channel causes a
decoding error with probability at least $\frac \nu 2 -o(1)$, in expectation over the choice of $s$. Hence, there exists a specific string $s^*$ for which the channel causes a decoding error with probability $\frac \nu 2 -o(1)$. This completes the proof of Theorem~\ref{thm:imposs}.

\end{document}